\numberwithin{equation}{section}
\newtheorem{definition}{Definition}[section]
\newtheorem{lemma}[definition]{Lemma}
\newtheorem{theorem}[definition]{Theorem}
\newtheorem{proposition}[definition]{Proposition}
\newtheorem{corollary}[definition]{Corollary}
\newtheorem{remarkth}[definition]{Remark}
\newtheorem{example}[definition]{Example}
\newenvironment{remark}{\begin{remarkth}\upshape}{\hfill$\diamond$\end{remarkth}}
\renewcommand{\emph}[1]{{\bfseries\itshape{#1}}}
\newcommand{\ds}{\displaystyle}
\newcommand{\R}{\mathbb{R}}      
\newcommand{\lcf}{\lbrack\! \lbrack}
\newcommand{\rcf}{\rbrack\! \rbrack}
\newcommand{\I}{I\mkern-7muI}
\newcommand\prol{\@ifstar{\@proldf}{\@prolpf}}  
\def\@prolpf{\@ifnextchar[{\@prolpf@wrt}{\@prolpf@}}
\def\@prolpf@wrt[#1]#2{\@ifnextchar[{\@prolpf@wrt@at{#1}{#2}}{\@prolpf@wrt@{#1}{#2}}}
\def\@prolpf@wrt@at#1#2[#3]{\prolsymbol^{#1}_{#3}#2}
\def\@prolpf@wrt@#1#2{\prolsymbol^{#1}#2}
\def\@prolpf@#1{\@ifnextchar[{\@prolpf@at{#1}}{\@prolpf@@{#1}}}
\def\@prolpf@at#1[#2]{\prolsymbol_{#2}#1}
\def\@prolpf@@#1{\prolsymbol#1}
\def\@proldf{\@ifnextchar[{\@proldf@wrt}{\@proldf@}}
\def\@proldf@wrt[#1]#2{\@ifnextchar[{\@proldf@wrt@at{#1}{#2}}{\@proldf@wrt@{#1}{#2}}}
\def\@proldf@wrt@at#1#2[#3]{\prolsymbol^{*#1}_{#3}#2}
\def\@proldf@wrt@#1#2{\prolsymbol^{*#1}#2}
\def\@proldf@#1{\@ifnextchar[{\@proldf@at{#1}}{\@proldf@@{#1}}}
\def\@proldf@at#1[#2]{\prolsymbol^*_{#2}#1}
\def\@proldf@@#1{\prolsymbol^*#1}
\def\prolsymbol{\mathcal{T}}
\def\lcf{\lbrack\! \lbrack}
\def\rcf{\rbrack\! \rbrack}
\begin{document}

\title[Kinematic reduction and the Hamilton-Jacobi equation]{Kinematic reduction and the Hamilton-Jacobi equation}

\author{}.
\author[\tiny Barbero]{Mar\'ia Barbero-Li\~n\'an}
\address{M. Barbero-Li\~n\'an:
Instituto de Ciencias Matem\'aticas, CSIC-UAM-UC3M-UCM, C/Nicol\'as
Cabrera 13-15, 28049 Madrid, Spain} \email{mbarbero@icmat.es}

\author[\tiny de Le\'on]{Manuel de Le\'on}
\address{M. de Le\'on:
Instituto de Ciencias Matem\'aticas, CSIC-UAM-UC3M-UCM,  C/Nicol\'as
Cabrera 13-15, 28049 Madrid, Spain} \email{mdeleon@icmat.es}

\author[\tiny Marrero]{Juan C.\ Marrero}
\address{Juan C.\ Marrero:
Unidad asociada ULL-CSIC, Geometr\'ia diferencial y mec\'anica geom\'etrica, Departamento de Matem\'atica Fundamental,
Facultad de Ma\-te\-m\'a\-ti\-cas, Universidad de
la Laguna, La Laguna, Tenerife, Canary Islands,
Spain} \email{jcmarrer@ull.es}

\author[\tiny Mart\'{\i}n de Diego]{David Mart\'{\i}n de Diego}
\address{D. Mart{\'\i}n de Diego:
Instituto de Ciencias Matem\'aticas, CSIC-UAM-UC3M-UCM,  C/Nicol\'as
Cabrera 13-15, 28049 Madrid, Spain} \email{david.martin@icmat.es}

\author[\tiny Mu\~noz]{Miguel C. Mu\~noz-Lecanda}
\address{
 Miguel C. Mu\~noz-Lecanda:
 Departamento de Matem\'atica Aplicada IV,   Edificio C-3, Campus Norte UPC,
   C/ Jordi Girona 1. 08034 Barcelona, Spain}
\email{matmcml@ma4.upc.edu}

\thanks{This work has been partially supported by MICINN (Spain)
Grants MTM 2007-62478, MTM2008-00689, MTM2009-08166 and MTM2009-13383; SOLSUVC 2008-01000238 and ProID20100210 of the Canarian  government, 2009SGR1338 of
the Catalan government, IRSES project GEOMECH (246981) within the 7th European Community Framework Program,
and project ``Ingenio
Mathematica" (i-MATH) No. CSD 2006-00032 (Consolider-Ingenio 2010). MBL has been financially supported by Juan de la Cierva fellowship from MICINN}
%
%
%
%
%
%

\begin{abstract} A close relationship between the classical Hamilton-Jacobi theory and
the kinematic reduction of control systems by decoupling vector
fields is shown in this paper. The geometric interpretation of this
relationship relies on new mathematical techniques for mechanics
defined on a skew-symmetric algebroid. This geometric structure
allows us to describe in a simplified way the mechanics of
nonholonomic systems with both control and external forces.

\vspace{3mm}

\begin{center}
\textit{Dedicated to Tudor Ratiu on the occasion of his 60th birthday}
\end{center}
\end{abstract}
\date{\today}
\maketitle

\section{Introduction}

The reduction of mechanical control systems to kinematic
systems is very interesting and useful for solving control
problems such as optimal control problems
\cite{MariaMiguel} and for designing suitable control laws
(see \cite{MuYa}, Chapter 8 in \cite{bullolewis} and
references therein). For instance, the planning motion for
the associated kinematic system determines trajectories of
the mechanical control system. Thus the methodologies to
find these trajectories have been simplified because the
kinematic reduction gives rise to a first-order
control-linear system defined on the configuration
manifold. Hence in general it is easier to solve or to
analyze the kinematic system. If the mechanical control
system is reducible to a kinematic one, then the controlled
trajectories of this kinematic system under
reparametrization define solutions of the original
second-order problem on the phase space. An interesting
particular case is the one defined by kinematic reductions
of order $1$. This kind of reductions define a decoupling
vector field. Unfortunately, there is not a systematic
procedure for finding such kinematic reductions.

The philosophy of kinematic reductions of order 1 seems, in a first
approach, quite similar to the standard Hamilton-Jacobi theory. This
theory, that  appeared with the dawn of analytical mechanics, is a
valuable tool for the exact integration  of Hamilton's equations,
for instance using the technique of separation of variables (see
\cite{AbMa} and references therein). In many cases,  the
Hamilton-Jacobi theory allows us to simplify the integration of
Hamilton's equations or, at least, to find some particular
solutions. To be more precise, consider a configuration manifold $Q$
and a hamiltonian function $H: T^*Q\to \R$.  The Hamilton-Jacobi
equation can be written as
\[
H\left(q, \frac{\partial W}{\partial q}\right)=\hbox{constant}
\]
for some function $W: Q\to \R$. If we find such a function
$W$, then the integration of  the associated Hamilton's
equations (with initial conditions along ${\rm d} W(Q)$) is
reduced to knowing the integral curves of a vector field
$X_H^{{\rm d}W}$ on $Q$. This vector field is given by
$X_H^{{\rm d}W}=T\tau_{T^*Q}\circ X_H\circ {\rm d}W\in
{\mathfrak X}(Q)$, where $\tau_{T^*Q}: T^*Q\to Q$ is the
canonical projection and $X_H$ is the hamiltonian vector
field associated to $H$. Hence, from the integration of  a
vector field on the configuration space it is possible to
recover some of the solutions of the original hamiltonian
system. Recent developments in Hamilton-Jacobi theory are
described in \cite{CGMMMR,CaGrMaMaMuRo,ILM,LeMaMa,MaSo,BT}. Of
course, the possible similarities with the theory of
kinematic reductions are now clearer.

One of the main objectives in our paper is to carefully
study the underlying geometry of the kinematic reduction
theory by showing the close relation with the classical
Hamilton-Jacobi theory. Moreover, advantage of recent
developments in Hamilton-Jacobi theory for nonholonomic
systems on skew-symmetric algebroids \cite{BMMP,LeMaMa}
even with external forces  will be really useful to obtain
a full novel theory of kinematic reduction for this type of
systems. It is important to highlight this is not an
arbitrary generalization since the mechanics on algebroids
\cite{CoLeMaMar,CoMa,GGU0,TD69,Ma} is particularly relevant
for the class of Lagrangian systems invariant under the
action of a Lie group of symmetries including as a
particular case nonholonomic dynamics (see \cite{CoLeMaMaMa} for a
survey on the subject; see also \cite{LMM,Ma,Po2}).

The main results of this paper can be summarized in the following points:
\begin{itemize}
\item A description of nonholonomic mechanics in terms of the Levi-Civita connection associated to a fibered riemannian metric defined on the vector
subbundle determined by the nonholonomic constraints.
    \item A deduction of the Hamilton-Jacobi equation for nonholonomic systems in terms of the induced Levi-Civita
    connection.
    \item An affine connection approach in presence of control forces.
\item A description of Hamilton-Jacobi equation with controls.
    \item Relationship between Hamilton-Jacobi equation and kinematic reductions by decoupling vector
    fields.
    \end{itemize}

    It is interesting to observe that our approach allows us to extend
    the theory of kinematic reduction to controlled system with symmetries as for instance, nonholonomic Lagrange-Poincar\'e equations, etc.

In the sequel, all the manifolds are real, second countable and $C^\infty$. The maps are
assumed to be also $C^\infty$. Sum over all repeated indices is understood.

\section{Skew-symmetric algebroids}

In this section we introduce the notion of a skew-symmetric algebroid
on a vector bundle $\tau_D: D\to M$. It is known that this geometric
structure covers many interesting cases in mechanics, as for
instance, nonholonomic mechanics (see \cite{LeMaMa}). Similarly to
the intrinsic definition of the Euler-Lagrange equations for a
Lagrangian function $L: TM\to \R$ obtained by the canonical
structures on it (standard Lie bracket, exterior differential...),
it is possible to determine the motion equations for a Lagrangian
$L: D\to \R$ using the differential geometric structures naturally
induced by the skew-symmetric algebroid structure. We will show that
this generalization is quite useful in applications and clarifies
the dynamics of systems with nonholonomic constraints. Let us first
introduce the notion of a skew-symmetric algebroid.
\begin{definition}\label{alas}
A \emph{skew-symmetric algebroid structure} on the vector bundle
$\tau_{D}: D \to M$ is a $\R$-bilinear bracket $\lcf \cdot, \cdot
\rcf_{D}: \Gamma(\tau_{D}) \times \Gamma(\tau_{D}) \to
\Gamma(\tau_{D})$ on the space $\Gamma(\tau_{D})$ of sections of
$\tau_D$ and a vector bundle morphism $\rho_{D}: D \to TM$,
so-called \emph{anchor map}, such that:
\begin{enumerate}
\item
$\lcf \cdot, \cdot \rcf_{D}$ is skew-symmetric, that is,
\[
\lcf X, Y \rcf_{D} = -\lcf Y, X \rcf_{D}, \; \; \mbox{ for } X, Y
\in \Gamma(\tau_{D}).
\]

\item
If we also denote by $\rho_{D}: \Gamma(\tau_{D}) \to \mathfrak{X}(M)$
the morphism of $C^{\infty}(M)$-modules induced by the anchor map
then
\[
\lcf X, fY \rcf_{D} = f \lcf X, Y \rcf_{D} + \rho_{D}(X)(f) Y, \;
\; \mbox{ for } X, Y \in \Gamma(\tau_D) \mbox{ and } f \in
C^{\infty}(M).
\]

\end{enumerate}

If the bracket $\lcf \cdot, \cdot \rcf_{D}$ satisfies the Jacobi
identity, we have that the pair $(\lcf \cdot, \cdot \rcf_{D},
\rho_{D})$ is a \emph{Lie algebroid structure} on the vector
bundle $\tau_{D}: D \to M$.

\end{definition}

If $( \lcf \cdot, \cdot \rcf_{D}, \rho_{D})$ is a skew-symmetric
algebroid structure on the vector bundle $\tau_{D}: D \to M$, then
an \emph{almost differential} ${\rm d}^{D}$ of sections of
$\Lambda^k \tau_{D^*}$ being  $\tau_{D^*}\colon D^*\to M$  the vector bundle projection of the dual bundle $D^*$ is defined
as follows
\begin{equation*}\label{deD}
\begin{array}{rcl}
&&({\rm d}^{D}\alpha)(X_{0}, X_{1}, \dots, X_{k}) = \displaystyle
\sum_{i=0}^{k} (-1)^{i} \rho_{D}(X_{i})(\alpha(X_{0}, \dots,
\hat{X}_{i}, \dots, X_{k})) \\[5pt]
&&\qquad  + \displaystyle \sum_{i < j} (-1)^{i+j} \alpha (\lcf X_{i},
X_{j} \rcf_{D}, X_{0}, X_{1}, \dots, \hat{X}_{i}, \dots,
\hat{X}_{j}, \dots, X_{k})
\end{array}
\end{equation*}
for $\alpha \in \Gamma(\Lambda^{k}\tau_{D^*})$ and $X_{0}, X_1, \dots,
X_{k} \in \Gamma(\tau_{D})$.

In general $({\rm d}^D)^2\not=0$. Indeed,  $( \lcf \cdot,
\cdot \rcf_{D}, \rho_{D})$ is a Lie algebroid structure on
the vector bundle $\tau_{D}: D \to M$ if and only if $({\rm
d}^D)^2=0$ (see \cite{TD69,Mac,Ma,Weinstein99} for more
details about the Lie algebroids).

Suppose that $(x^i)$ are local coordinates on $M$ and that
$\{e_{A}\}$ is a local basis of the space of sections $\Gamma(\tau_D)$,
then
\begin{equation*} \lcf e_{A}, e_B\rcf_D =
{\mathcal C}^C_{AB} e_{C}, \ \ \ \rho_D (e_{A})=(\rho_D)_{A}^i \frac{\partial}{\partial x^i}  .\label{coeff-estruct}
\end{equation*}
The functions ${\mathcal C}^C_{AB}, (\rho_D)_A^i\in C^{\infty}(M)$
are called the \emph{local structure functions} of the
skew-symmetric algebroid on $\tau_D: D \rightarrow M$.

If $\{e^{A} \}$  is the dual basis of
$\{e_A\}$, then
\begin{eqnarray*}
{\rm d}^{D}F &=& (\rho_{D})_{A}^i \frac{\partial F}{\partial x^i} e^{A}, \\
{\rm d}^D\kappa &=& \left\{ (\rho_{D})_{A}^i \frac{\partial
\kappa_{B}}{\partial x^i}
   -\frac{1}{2}  {\mathcal C}_{AB}^{C} \kappa_{C}  \right\} e^{A} \wedge e^{B},
  \end{eqnarray*}
where $F\in C^{\infty}(M)$ and $\kappa=\kappa_{B} e^B \in
\Gamma(\tau_{{D}^*})$.


A $\rho_D$\emph{-admissible curve} is a curve $\gamma: I\subseteq
\mathbb{R} \longrightarrow D$  such that
\[
\frac{{\rm d}(\tau_D\circ \gamma)}{{\rm d}t}(t)=\rho_D(\gamma(t))\;
.
\]
Given $X\in \Gamma(\tau_D)$, the \emph{integral curves}
of the section $X$ are those curves $\sigma: I\subseteq
\R\rightarrow M$ such that satisfy
\[
\dot{\sigma}=\rho_D(X)\circ \sigma.
\]
That is, they are the integral curves of the associated vector field
$\rho_D(X)\in {\mathfrak X}(M)$. If $\sigma$ is an integral curve of
$X$, then $X\circ \sigma$ is a $\rho_D$-admissible curve. Locally,
the integral curves are characterized as the solutions of the
following system of differential equations
\[
 \dot{x}^i=(\rho_D)^i_A X^A(x),
\]
where $X=X^Ae_A$.

Consider now  the vector space over $\R$
\[
H^0(d^D) = \{f \in C^{\infty}(M) \; | \; {\rm d}^D f = 0\}.
\]
If  $M$ is connected and $D$ is
a transitive skew-symmetric algebroid, that is,
\[
\rho_D(D_{x}) = T_{x}M \; \; \mbox{ for all } x\in M,
\]
then
\begin{equation}\label{Poincare}
H^0({\rm d}^D) \simeq \R.
\end{equation}
It is important to stress that condition  (\ref{Poincare}) holds if
the skew-symmetric algebroid has a connected base space and is
completely nonholonomic, that is,
\begin{equation*}
{\rm Lie}^{(\infty)}_x(\rho_D(D))=T_xM
\end{equation*}
for all $x\in M$. See \cite{LeMaMa} for more details.

\section{Bundle metrics on skew-symmetric algebroids and Newtonian systems}\label{section3}

\subsection{The Levi-Civita connection}

Let ${\mathcal G}^D: D\times_M D\to \R$ be a nondegenerate bundle metric on a
skew-symmetric algebroid $(D,  \lcf \cdot, \cdot \rcf_{D},
\rho_{D})$. Given this bundle metric we can construct a unique
torsion-less connection $\nabla^{{\mathcal G}^D}$ on $D$ which is
metric with respect to ${\mathcal G}$ (see \cite{CoMa} and
references therein, for the standard case of Lie algebroids). The
following construction mimics the classical construction of the
Levi-Civita connection for a riemannian metric on a differentiable
manifold.

We will denote by $\flat_{{\mathcal G}^D}: D \to D^*$ the vector
bundle isomorphism induced by ${{\mathcal G}^D}$ and by $\#_{{\mathcal G}^D}: D^* \to D$ the inverse morphism.
The bundle metric can
be locally written as ${{\mathcal G}^D} = ({{\mathcal G}^D})_{AB} e^A\otimes e^B$.

The \emph{Levi-Civita connection} $\nabla^{{\mathcal G}^D}:
\Gamma(\tau_D)\times \Gamma(\tau_D)\to \Gamma(\tau_D)$ associated
to the bundle metric ${{\mathcal G}^D}$ is defined  by the
formula:
\[
\begin{array}{rcl}
2 {{\mathcal G}^D}(\nabla_{X}^{{\mathcal G}^D} Y, Z) & = &
\rho_{D}(X)({{\mathcal G}^D}(Y, Z)) + \rho_{D}(Y)({{\mathcal G}^D}(X, Z))\\
&&- \rho_{D}(Z)({{\mathcal G}^D}(X, Y))  +  {{\mathcal G}^D}(X, \lcf Z, Y\rcf_{D})\\
&& + {{\mathcal G}^D}(Y,\lcf Z,
X\rcf_{D}) - {{\mathcal G}^D}(Z, \lcf Y, X\rcf_{D})
\end{array}
\]
for $X, Y, Z \in \Gamma(\tau_D)$.

Alternatively, $\nabla^{{\mathcal G}^D}$ is determined by
\begin{equation*}\label{levi}
\begin{array}{l}
 \lcf X,Y\rcf_D= \nabla^{{\mathcal G}^D}_X Y-\nabla^{{\mathcal G}^D}_Y X \hbox{   (symmetry)}\\
 \rho_D(X)({{\mathcal G}^D}(Y,Z))={{\mathcal G}^D}(\nabla^{{\mathcal G}^D}_X Y, Z)+{{\mathcal G}^D}(Y, \nabla^{{\mathcal G}^D}_X Z)\hbox{   (metricity)}\; ,
 \end{array}
 \end{equation*}
These two properties allow to determine \emph{Christoffel symbols}
of the connection $\nabla^{{\mathcal G}^D}$ that satisfy
\[
\nabla^{{\mathcal G}^D}_{e_B}{e_C}=\Gamma^A_{BC}e_A.
\]
More details about how to compute the Christoffel symbols are given
in Section~\ref{Sec:example} if a ${\mathcal G}^D$-orthogonal basis
of $\Gamma(\tau_D)$ is taken.

Additionally, we have the notion of derivative along an admissible curve. If $\gamma\colon I \subseteq \mathbb{R} \rightarrow D$ is a $\rho_D$-admissible curve and \begin{equation*}\Gamma(\gamma)=\{X\colon I \subseteq \mathbb{R} \rightarrow D \, | \, X \mbox{ is } {\mathcal C}^\infty \mbox{ and } X(t)\in D_{\gamma(t)} \; \forall \; t\in I\}\end{equation*}
is the set of sections along $\gamma$, then the induced \emph{covariant derivative} $\nabla_\gamma\colon \Gamma(\gamma) \rightarrow \Gamma(\gamma)$ can be defined as the mapping from $X\in \Gamma(\gamma)$ to $\nabla_\gamma X \in \Gamma(\gamma)$ with local expression
\begin{equation}\label{3.0} \nabla_\gamma X= \left[ \ds{\frac{{\rm d} X^C}{{\rm d}t}+\Gamma^C_{AB}y^AX^B}\right]e_C,
\end{equation}
if $\gamma(t)=(x^i(t),y^A(t))$ and $X=X^Ae_A$.

\subsection{Geodesics}
Given the bundle metric ${\mathcal G}^D$, a $\rho_D$-admissible curve $\gamma: I \subseteq \mathbb{R} \rightarrow D$ on $D$ is said to be a \emph{geodesic} if
\[
\nabla_{\gamma}\gamma = 0.
\]
If the local expression of $\gamma$ is
\[
\gamma(t) = (x^i(t), y^{A}(t)),
\]
then $\gamma$ is a geodesic if and only if
\[
\displaystyle \frac{dx^{i}}{dt} = (\rho_D)^{i}_{C}y^{C}, \; \; \frac{dy^{C}}{dt} = -\Gamma_{AB}^{C} y^{A}y^{B}.
\]
The geodesics are just the integral curves of a vector field on $D$, called \emph{the geodesic spray} $\xi_{{\mathcal G}^{D}}$, whose local expresion is
\[
\xi_{{\mathcal G}_{D}} = \displaystyle (\rho_{D})_{C}^{i} y^{C} \frac{\partial}{\partial x^{i}} - \Gamma_{AB}^{C} y^{A} y^{B} \frac{\partial}{\partial y^{C}}.
\]
Note that if $a \in D$, then there exists a unique geodesic $\gamma_a: (-\epsilon, \epsilon) \subseteq \mathbb{R} \rightarrow D$ such that $\gamma_a(0) = a$. If $\sigma_a = \tau_D \circ \gamma_a: (-\epsilon, \epsilon) \subseteq \mathbb{R} \rightarrow M$ is the base curve of $\gamma_a$ then, since $\gamma_a$ is $\rho_D$-admissible, we have that
\[
\dot{\sigma}_a(t) = \rho_D(\gamma_a(t)), \; \; \forall t \in I.
\] 
The associated
\emph{symmetric product} is defined as follows:
\[
\langle X: Y\rangle_{{\mathcal G}^D}=\nabla^{{\mathcal G}^D}_X Y+ \nabla^{{\mathcal G}^D}_Y X\; ,\quad X, Y\in \Gamma(\tau_D)\; .
\]
The symmetric product on Riemannian manifolds is a fundamental tool in controllability
results, kinematic reduction of mechanical systems and in the
characterization of geodesic invariance of distributions (see
\cite{bullolewis}). These results can be extended to our setting of skew-symmetric algebroids.

\begin{lemma} If $X, Y \in \Gamma(\tau_D)$,  then $\langle X: Y\rangle_{{\mathcal G}^D}^{\bf v}=[X^{\bf v},[\xi_{{\mathcal G}_{D}}, Y^{\bf v}]]$, where $X^{\bf v}$ is the natural vertical lift of the section $X$ of $\tau_D$.
\begin{proof}
This is proved locally, similarly to the proof of this result on Riemannian manifolds in Lemma B.3 in~\cite{bullolewis}.

Locally, $X=X^Ae_A$ and $Y=Y^Be_B$, then
\begin{equation*} X^{\bf v}=\ds{X^A \, \frac{\partial}{\partial y^A}, \quad  Y^{\bf v}=Y^A \, \frac{\partial}{\partial y^A}}.
\end{equation*}
Let us compute,
\begin{eqnarray*}
[\xi_{{\mathcal G}_{D}}, Y^{\bf v}]&=& \ds{ \left[  (\rho_{D})_{C}^{i} y^{C} \frac{\partial}{\partial x^{i}} - \Gamma_{AB}^{C} y^{A} y^{B} \frac{\partial}{\partial y^{C} },Y^E \frac{\partial}{\partial y^{E} }\right]}\\
&=& \ds{  (\rho_{D})_{C}^{i} y^{C} \frac{\partial Y^E}{\partial x^{i}} \frac{\partial}{\partial y^{E} } -
Y^B (\rho_D)^i_B \frac{\partial }{\partial x^i}}\\&+& \ds{Y^B(\Gamma_{AB}^{C}+\Gamma_{BA}^{C}) y^{A} \frac{\partial}{\partial y^{C} }}= \ds{-Y^B (\rho_D)^i_B \frac{\partial }{\partial x^i}}\\&+&\ds{ \left(  (\rho_{D})_{A}^{i} y^{A}  \frac{\partial Y^C}{\partial x^{i}}+ Y^B(\Gamma_{AB}^{C}+\Gamma_{BA}^{C}) y^{A}  \right)   \frac{\partial}{\partial y^{C} }}.
\end{eqnarray*}
Then, 
\begin{eqnarray*}
[X^{\bf v},[\xi_{{\mathcal G}_{D}}, Y^{\bf v}]]&=& \ds{ \left[ X^A\frac{\partial }{\partial y^A}, -Y^B (\rho_D)^i_B \frac{\partial }{\partial x^i}+ \Bigg(  (\rho_{D})_{A}^{i} y^{A}  \frac{\partial Y^C}{\partial x^{i}}\right.}\\&+&\ds{\left. Y^B(\Gamma_{AB}^{C}+\Gamma_{BA}^{C}) y^{A}  \Bigg)   \frac{\partial}{\partial y^{C} }\right]= \left(  X^A(\rho_{D})_{A}^{i}  \frac{\partial Y^C}{\partial x^{i}}\right. }\\
&+& \ds{\left.Y^B(\rho_D)^i_B \frac{\partial X^C}{\partial x^i}+ X^AY^B(\Gamma_{AB}^{C}+\Gamma_{BA}^{C})  \right)   \frac{\partial}{\partial y^{C} }}.
\end{eqnarray*}
On the other hand,
\begin{eqnarray*}
\langle X: Y\rangle_{{\mathcal G}^D}&=&\ds{\nabla^{{\mathcal G}^D}_X Y+ \nabla^{{\mathcal G}^D}_Y X=X^A (\rho_D)^i_A \frac{\partial Y^C}{\partial x^i} e_C+ X^A Y^B \Gamma^C_{AB} e_C } \\
&+& \ds{Y^B (\rho_D)^i_B \frac{\partial X^C}{\partial x^i} e_C+ X^A Y^B \Gamma^C_{BA} e_C =\left(X^A (\rho_D)^i_A \frac{\partial Y^C}{\partial x^i}\right.}  \\
&+& \ds{\left.Y^B (\rho_D)^i_B \frac{\partial X^C}{\partial x^i}
+ X^A Y^B(\Gamma^C_{AB}+ \Gamma^C_{BA} ) \right) e_C }.
\end{eqnarray*}
This proves the equality since the vertical lift of this section is equal to the local expression of $[X^{\bf v},[\xi_{{\mathcal G}_{D}}, Y^{\bf v}]]$ we just computed above.
\end{proof} \label{Lemma:SymProdV}
\end{lemma}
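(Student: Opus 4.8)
The plan is to prove the identity $\langle X:Y\rangle_{\mathcal{G}^D}^{\mathbf{v}} = [X^{\mathbf{v}},[\xi_{\mathcal{G}_D},Y^{\mathbf{v}}]]$ by a direct coordinate computation, exactly as the classical statement is handled in Lemma B.3 of \cite{bullolewis}. Since both sides are sections of $\tau_D$ (after applying the vertical lift on the left) and the statement is $C^\infty(M)$-tensorial in $X$ and $Y$ in the appropriate sense once one passes to the brackets of vector fields on $D$, it suffices to work in a local basis $\{e_A\}$ of $\Gamma(\tau_D)$ adapted to coordinates $(x^i)$ on $M$, writing $X = X^A e_A$, $Y = Y^B e_B$, so that $X^{\mathbf{v}} = X^A\,\partial/\partial y^A$ and $Y^{\mathbf{v}} = Y^A\,\partial/\partial y^A$ on $D$ with coordinates $(x^i,y^A)$.

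First I would compute the inner Lie bracket $[\xi_{\mathcal{G}_D}, Y^{\mathbf{v}}]$ using the explicit form of the geodesic spray $\xi_{\mathcal{G}_D} = (\rho_D)^i_C y^C\,\partial/\partial x^i - \Gamma^C_{AB} y^A y^B\,\partial/\partial y^C$. The $\partial/\partial x^i$ part of $\xi_{\mathcal{G}_D}$ differentiates the coefficient $Y^E(x)$ and produces a $\partial/\partial y^E$ term; the term $Y^E\,\partial/\partial y^E$ acting on the spray hits the $y^C$-dependence in both pieces, yielding $-Y^B(\rho_D)^i_B\,\partial/\partial x^i$ from the horizontal part and, from the vertical part, $Y^B(\Gamma^C_{AB}+\Gamma^C_{BA})y^A\,\partial/\partial y^C$ after accounting for the two slots $y^A y^B$ in $\Gamma^C_{AB}y^A y^B$. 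Collecting terms gives the expression for $[\xi_{\mathcal{G}_D},Y^{\mathbf{v}}]$ displayed in the lemma. Next I would bracket this with $X^{\mathbf{v}} = X^A\,\partial/\partial y^A$: only the $y^A$-dependence of the coefficients in $[\xi_{\mathcal{G}_D},Y^{\mathbf{v}}]$ survives, so $X^{\mathbf{v}}$ acting on $-Y^B(\rho_D)^i_B\,\partial/\partial x^i$ gives nothing (no $y$'s there), while acting on the $\partial/\partial y^C$ coefficient $(\rho_D)^i_A y^A\,\partial Y^C/\partial x^i + Y^B(\Gamma^C_{AB}+\Gamma^C_{BA})y^A$ picks out $X^A(\rho_D)^i_A\,\partial Y^C/\partial x^i + X^A Y^B(\Gamma^C_{AB}+\Gamma^C_{BA})$; the reverse term, with $-Y^B(\rho_D)^i_B\,\partial/\partial x^i$ acting on $X^A\,\partial/\partial y^A$, contributes $Y^B(\rho_D)^i_B\,\partial X^C/\partial x^i\,\partial/\partial y^C$. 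Summing, $[X^{\mathbf{v}},[\xi_{\mathcal{G}_D},Y^{\mathbf{v}}]] = \bigl(X^A(\rho_D)^i_A\,\partial Y^C/\partial x^i + Y^B(\rho_D)^i_B\,\partial X^C/\partial x^i + X^A Y^B(\Gamma^C_{AB}+\Gamma^C_{BA})\bigr)\partial/\partial y^C$.

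Finally I would compute $\langle X:Y\rangle_{\mathcal{G}^D} = \nabla^{\mathcal{G}^D}_X Y + \nabla^{\mathcal{G}^D}_Y X$ directly from the local formula $\nabla^{\mathcal{G}^D}_{e_B} e_C = \Gamma^A_{BC} e_A$ together with the Leibniz rule $\nabla^{\mathcal{G}^D}_X(Y^C e_C) = \rho_D(X)(Y^C)e_C + Y^C\nabla^{\mathcal{G}^D}_X e_C$, giving $\nabla^{\mathcal{G}^D}_X Y = \bigl(X^A(\rho_D)^i_A\,\partial Y^C/\partial x^i + X^A Y^B\Gamma^C_{AB}\bigr)e_C$ and symmetrically for $\nabla^{\mathcal{G}^D}_Y X$. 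Adding them produces $\bigl(X^A(\rho_D)^i_A\,\partial Y^C/\partial x^i + Y^B(\rho_D)^i_B\,\partial X^C/\partial x^i + X^A Y^B(\Gamma^C_{AB}+\Gamma^C_{BA})\bigr)e_C$, and applying the vertical lift (which replaces $e_C$ by $\partial/\partial y^C$) yields exactly the expression found for $[X^{\mathbf{v}},[\xi_{\mathcal{G}_D},Y^{\mathbf{v}}]]$, completing the proof.

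There is no serious obstacle here; the only point requiring a little care is the bookkeeping of the symmetrized Christoffel combination $\Gamma^C_{AB}+\Gamma^C_{BA}$, which arises on the bracket side because differentiating the quadratic term $\Gamma^C_{AB}y^A y^B$ in the spray with respect to $y^A$ (or $y^B$) produces both orderings, and on the symmetric-product side because $\langle\cdot:\cdot\rangle$ is by definition the symmetrization of $\nabla$. One should also note that the coefficients $\Gamma^C_{AB}$ are functions on $M$ (they depend only on $x$), so $X^{\mathbf{v}}$ and the fibre-linear vector fields do not differentiate them; this is what makes the two computations match term by term. The argument is purely local, but since both sides are globally well-defined geometric objects the local identity suffices.
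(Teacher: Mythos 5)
Your computation is correct and is essentially identical to the paper's own proof: the same local expressions for the vertical lifts and the geodesic spray, the same two iterated Lie brackets, and the same local formula for $\nabla^{{\mathcal G}^D}_X Y + \nabla^{{\mathcal G}^D}_Y X$, matched term by term. The sign bookkeeping in $[X^{\bf v},\,\cdot\,]$ and the origin of the symmetrized combination $\Gamma^C_{AB}+\Gamma^C_{BA}$ are handled exactly as in the paper, so there is nothing to add.
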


Now, we can extend the characterization of geodesically invariant distributions already known on Riemannian manifolds~\cite{bullolewis} to skew-symmetric algebroids. A subbundle $\mathfrak{D}$ of a skew-symmetric algebroid $(D, \lcf \cdot, \cdot \rcf_D, \rho_D)$  is \textit{geodesically invariant} if for any geodesic $\gamma\colon I \subseteq  \mathbb{R} \rightarrow D$ with initial condition $\gamma(0)\in \mathfrak{D}(\sigma(0))$, then $\gamma(t)\in \mathfrak{D}(\sigma(t))$ for any $t\in I$, where $\sigma=\tau_D \circ \gamma$.

\begin{theorem} Let $(D, \lcf \cdot, \cdot \rcf_D, \rho_D)$ be a skew-symmetric algebroid and ${\mathcal G}_{D}$ be a bundle metric over $D$. Let $\mathfrak{D}$ be a subbundle of $D$ and $\tau_{\mathfrak{D}}={\tau_D}_{|\mathfrak{D}}$.  The following statements are equivalent:
\begin{enumerate}
\item[(i)]  $\mathfrak{D}$  is geodesically invariant.
\item[(ii)] The restriction of the geodesic spray $\xi_{{\mathcal G}^D}$ to $\mathfrak{D}$ is tangent to $\mathfrak{D}$.
\item[(iii)] If $X,Y \in \Gamma(\tau_{\mathfrak{D}})$, then $\langle X, Y \rangle_{{\mathcal G}^D}\in \Gamma(\tau_{\mathfrak{D}})$.
\end{enumerate}
\begin{proof}
$(i)\Leftrightarrow (ii)$ The integral curves of  $\xi_{{\mathcal G}^D}$ are the geodesics. This proves the equivalence.

$(ii)\Rightarrow (iii)$ Assume that  $X,Y \in \Gamma(\tau_{\mathfrak{D}})$.  Then the restrictions of $X^{\bf v}$, $Y^{\bf v}$ to $\mathfrak{D}$ are tangent to $\mathfrak{D}$.  By hypothesis, $(\xi_{{\mathcal G}^D})_{\left. \right|_{\mathfrak{D}}}$ is tangent to $\mathfrak{D}$. Thus, $[\xi_{{\mathcal G}^D},Y^{\bf v}]_{{\left. \right|}_{\mathfrak{D}}}$ is tangent to $\mathfrak{D}$. As $X^{\bf v}_{\left. \right|_{\mathfrak{D}}}$ is also tangent to $\mathfrak{D}$, we have that $[X^{\bf v},[\xi_{{\mathcal G}^D},Y^{\bf v}]]_{\left. \right|_{\mathfrak{D}}}$ is tangent to $\mathfrak{D}$.  By Lemma~\ref{Lemma:SymProdV}, $([X^{\bf v},[\xi_{{\mathcal G}^D},Y^{\bf v}]])_{\left. \right|_{\mathfrak{D}}}=(\langle X \colon Y \rangle^{\bf v}_{{{\mathcal G}^D}})_{\left. \right|_{\mathfrak{D}}}$. Thus $\langle X \colon Y \rangle_{{\mathcal G}^D}\in \Gamma(\tau_{\mathfrak{D}})$ because  $X^{\bf v}$ restricted to  $\mathfrak{D}$ are tangent to $\mathfrak{D}$ if and only if $X\in \Gamma(\tau_{\mathfrak{D}})$.

$(iii)\Rightarrow (i)$ Let $\{X_1,\dots, X_r \}=\{X_a\}_{a=1,\dots,r}$ be a local basis for $\Gamma(\tau_{\mathfrak{D}})$. It can be extended into a local basis for $\Gamma(\tau_D)$, which is given by \begin{equation*} \{X_1,\dots, X_r,X_{r+1},\dots, X_n\}=\{X_a,X_\alpha\},\end{equation*} for  $a=1,\dots, r$; $\alpha=r+1,\dots , n$. Hence, 
\begin{equation*}
\nabla^{{\mathcal G}^D}_{X_a}X_b = \Gamma^c_{ab}X_c+\Gamma^\alpha_{ab}X_\alpha, \quad {\rm for } \quad a,b\in \{1,\dots,r\}.
\end{equation*}
By assumption, $\langle X_a \colon X_b \rangle_{{\mathcal G}^D}\in \Gamma(\tau_{\mathfrak{D}})$, then $\Gamma^\alpha_{ab}+\Gamma^\alpha_{ba}=0$ for all $a,b\in \{1,\dots, r\}$, $\alpha\in \{r+1,\dots, n\}$ because
\begin{equation*}
\langle X_a \colon X_b \rangle_{{\mathcal G}^D}= \nabla^{{\mathcal G}^D}_{X_a}X_b+\nabla^{{\mathcal G}^D}_{X_b} X_a=(\Gamma^c_{ab}+\Gamma^c_{ba})X_c+(\Gamma^\alpha_{ab}+\Gamma^\alpha_{ba})X_\alpha.
\end{equation*}

Now, let $\gamma\colon I\rightarrow D$ be a geodesic such that $\gamma(0)\in \mathfrak{D}(\sigma(0))$ where $\sigma=\tau_D \circ \gamma$.  Suppose that 
\begin{equation*}
\gamma(t)=\sum_{a=1}^ru^a(t)X_a(\sigma(t))+\sum_{\alpha=r+1}^n u^\alpha(t) X_\alpha(\sigma(t))=\sum_{i=1}^n u^i(t)X_i(\sigma(t)),
\end{equation*}
then $u^\alpha(0)=0$ for all $\alpha\in \{r+1,\dots,n\}$. As $\gamma$ is a geodesic,
\begin{equation*} 0=\nabla^{{\mathcal G}^D}_{\gamma(t)} \gamma(t)=\ds{\sum_{k=1}^n \left(\frac{{\rm d} u^k} {{\rm d}t} +\Gamma^k_{ij}(\sigma(t))u^i(t)u^j(t)\right)X_k(\sigma(t))}.
\end{equation*}
Then
\begin{equation}
\ds{\frac{{\rm d} u^k} {{\rm d}t} +(\Gamma^k_{ij} \circ \sigma)u^iu^j=0}, \quad \forall \; k\in \{1,\dots,n\}. \label{eq:geodesic}
\end{equation}
As  $\Gamma^\alpha_{ab}+\Gamma^\alpha_{ba}=0$ for all $a,b\in \{1,\dots, r\}$, $\alpha\in \{r+1,\dots, n\}$, we have that $u^\alpha(t)=0$ for all $t\in I$ is a solution of the differential equation~\eqref{eq:geodesic} for $\alpha=r+1,\dots, n$:
\begin{eqnarray*}
\ds{\frac{{\rm d} u^\alpha} {{\rm d}t} +(\Gamma^\alpha_{ij} \circ \sigma)u^iu^j}&=&\ds{\frac{{\rm d} u^\alpha} {{\rm d}t} +(\Gamma^\alpha_{ab} \circ \sigma)u^au^b +(\Gamma^\alpha_{a\beta} \circ \sigma)u^au^\beta}\\&+&\ds{(\Gamma^\alpha_{\delta b} \circ \sigma)u^\delta u^b+(\Gamma^\alpha_{\delta \beta} \circ \sigma)u^\delta u^\beta}\\
&=& (\Gamma^\alpha_{ab} \circ \sigma)u^au^b=0.
\end{eqnarray*}
As the initial condition for $u^\alpha$ is $u^\alpha(0)=0$ for all $\alpha\in \{r+1,\dots, n\}$, it follows that $u^\alpha(t)=0$ is the unique solution for the differential equations in~\eqref{eq:geodesic}. Thus,
\begin{equation*} \gamma(t)=\sum_{a=1}^r u^a(t)X_a(\sigma(t)) \in \mathfrak{D}(\sigma(t)), \quad \forall \; t\in I\subseteq \mathbb{R}.
\end{equation*}
Hence $\mathfrak{D}$ is geodesically invariant.
\end{proof}
\label{Thm:GeodInv}
\end{theorem}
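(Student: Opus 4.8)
The plan is to run through the cycle of implications $(i)\Leftrightarrow(ii)$, $(ii)\Rightarrow(iii)$, $(iii)\Rightarrow(i)$, with Lemma~\ref{Lemma:SymProdV} serving as the bridge between the geodesic spray and the symmetric product. The equivalence $(i)\Leftrightarrow(ii)$ is essentially a reformulation: by construction the geodesics of ${\mathcal G}^D$ are exactly the integral curves of the geodesic spray $\xi_{{\mathcal G}^D}$ on $D$, and for a vector field on $D$ the property that every integral curve issuing from a point of the embedded submanifold $\mathfrak{D}\subseteq D$ remains in $\mathfrak{D}$ is equivalent to that vector field being tangent to $\mathfrak{D}$ along $\mathfrak{D}$. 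Matching this with the definition of geodesic invariance (a geodesic $\gamma$ with $\gamma(0)\in\mathfrak{D}(\sigma(0))$ satisfies $\gamma(t)\in\mathfrak{D}(\sigma(t))$ for all $t$, where $\sigma=\tau_D\circ\gamma$) gives the equivalence with no further work.

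For $(ii)\Rightarrow(iii)$ I would first record the elementary fact that, for a section $Z\in\Gamma(\tau_D)$, the vertical lift $Z^{\bf v}$ restricted to $\mathfrak{D}$ is tangent to $\mathfrak{D}$ if and only if $Z\in\Gamma(\tau_{\mathfrak{D}})$; indeed at a point $d\in\mathfrak{D}$ over $x\in M$ one has $Z^{\bf v}(d)=Z(x)$ under the canonical identification $T_d(D_x)\cong D_x$, so tangency to $\mathfrak{D}$ is exactly the condition $Z(x)\in\mathfrak{D}_x$. Now take $X,Y\in\Gamma(\tau_{\mathfrak{D}})$: then $X^{\bf v}$ and $Y^{\bf v}$ are tangent to $\mathfrak{D}$ along $\mathfrak{D}$, and so is $\xi_{{\mathcal G}^D}$ by hypothesis (ii); since the Lie bracket of vector fields tangent to an embedded submanifold is again tangent to it, $[X^{\bf v},[\xi_{{\mathcal G}^D},Y^{\bf v}]]$ is tangent to $\mathfrak{D}$ along $\mathfrak{D}$. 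By Lemma~\ref{Lemma:SymProdV} this bracket equals $\langle X:Y\rangle^{\bf v}_{{\mathcal G}^D}$, so the vertical-lift fact above yields $\langle X:Y\rangle_{{\mathcal G}^D}\in\Gamma(\tau_{\mathfrak{D}})$.

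The computational core is $(iii)\Rightarrow(i)$. I would choose a local frame $\{X_a\}_{a=1}^r$ of $\Gamma(\tau_{\mathfrak{D}})$, complete it to a local frame $\{X_a,X_\alpha\}$, $a\le r<\alpha\le n$, of $\Gamma(\tau_D)$, and denote by $\Gamma^i_{jk}$ the Christoffel symbols of $\nabla^{{\mathcal G}^D}$ in this frame. Applying (iii) to $X_a,X_b$ and reading off the $X_\alpha$-components of $\langle X_a:X_b\rangle_{{\mathcal G}^D}=\nabla^{{\mathcal G}^D}_{X_a}X_b+\nabla^{{\mathcal G}^D}_{X_b}X_a$ forces $\Gamma^\alpha_{ab}+\Gamma^\alpha_{ba}=0$ for all $a,b\le r$. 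Then for a geodesic $\gamma$ with $\gamma(0)\in\mathfrak{D}(\sigma(0))$, writing $\gamma(t)=u^i(t)X_i(\sigma(t))$ one has $u^\alpha(0)=0$, and expanding $\nabla_\gamma\gamma=0$ in the frame gives the system $\dot u^k+(\Gamma^k_{ij}\circ\sigma)u^iu^j=0$. Substituting $u^\alpha\equiv 0$ into the $\alpha$-equations leaves only the term $(\Gamma^\alpha_{ab}\circ\sigma)u^au^b$, which, being the contraction of the symmetric quantity $u^au^b$ with $\Gamma^\alpha_{ab}$, equals the contraction with $\frac{1}{2}(\Gamma^\alpha_{ab}+\Gamma^\alpha_{ba})=0$; hence $u^\alpha\equiv 0$ solves these equations with the right initial values, and by uniqueness of solutions of ordinary differential equations $u^\alpha(t)=0$ for all $t$. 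Therefore $\gamma(t)=u^a(t)X_a(\sigma(t))\in\mathfrak{D}(\sigma(t))$ for all $t$, i.e.\ $\mathfrak{D}$ is geodesically invariant.

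The main obstacle I anticipate is precisely this last step: one must write the geodesic equation in a general (non-coordinate) moving frame adapted to $\mathfrak{D}$ and then notice that only the symmetrized Christoffel symbols $\frac{1}{2}(\Gamma^\alpha_{ab}+\Gamma^\alpha_{ba})$ enter the quadratic term, so that hypothesis (iii) — which controls exactly that symmetric combination — is both necessary and sufficient to make the zero section of the $u^\alpha$-variables a solution. Once this observation is in place, the conclusion is a routine invocation of ODE uniqueness, and the two remaining implications are formal.
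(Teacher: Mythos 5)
Your proposal is correct and follows essentially the same route as the paper's proof: the same identification of geodesics with integral curves of the spray for $(i)\Leftrightarrow(ii)$, the same vertical-lift tangency criterion combined with Lemma~\ref{Lemma:SymProdV} for $(ii)\Rightarrow(iii)$, and the same adapted-frame computation showing $\Gamma^\alpha_{ab}+\Gamma^\alpha_{ba}=0$ followed by ODE uniqueness for $(iii)\Rightarrow(i)$. The only differences are expository (you spell out the tangency-invariance equivalence and the symmetrization argument that the paper leaves implicit).
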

\subsection{Newtonian systems}
Given a bundle map ${\mathcal F}: D\rightarrow D$ (that is,
$\tau_D\circ {\mathcal F}=\tau_D$) we define a \emph{newtonian
system} as the triple $(D, {\mathcal G}^D, {\mathcal  F})$. This
newtonian system induces the system of differential equations:
\begin{equation}\label{noholonoma-forced}
\nabla^{{\mathcal G}^D}_{\gamma(t)}\gamma(t)=  {\mathcal F}(\gamma(t)), \qquad t\in I\; ,
\end{equation}
where the solutions are curves $\gamma: I\subseteq \R\rightarrow D$
which are  $\rho_D$-admissible.

 Given local coordinates $(x^i, y^A)$ associated with the basis $\{e_A\}$ for sections of $D$,  Equations
   (\ref{noholonoma-forced}) can be written as
\begin{equation}
\begin{array}{rcl}
\dot{x}^i&=&(\rho_D)^i_A y^A,\\
\dot{y}^C&=&-\Gamma^C_{AB}y^Ay^B+ {\mathcal F}^C(x,y),
\end{array} \label{eqo}
\end{equation}
where  ${\mathcal F}(x^j, y^B)=(x^j, {\mathcal F}^A (x^j, y^B))$.

Observe that Equations (\ref{eqo}) are the equations of the integral
curves of a vector field $\xi_{{\mathcal G}^D, {\mathcal F}}$ on
$D$. Locally, this vector field is given by
\[
\xi_{{\mathcal G}^D, {\mathcal F}}=(\rho_D)^i_A y^A\frac{\partial}{\partial x^i}+\left(-\Gamma^C_{AB}y^Ay^B+ {\mathcal F}^C\right)\frac{\partial}{\partial y^C}\; .
\]

\begin{remark} \label{Remark:FgradV}
 The map ${\mathcal F}$ could be given by a section $F\in
\Gamma(\tau_D)$ such that ${\mathcal F}=F\circ \tau_D$. An
interesting particular case is  when we have a potential function
$V: M\to \R$ and $F$ is the section $-\hbox{grad}_{{\mathcal
G}^D}V\in \Gamma(\tau_{D})$ given by
\[
{{\mathcal G}^D}(\hbox{grad}_{{\mathcal G}^D}V, X) =
\rho_{D}(X)(V)={\rm d}V(\rho_D(X)), \; \; \mbox{ for  every } X \in
\Gamma(\tau_{D}).
\]
In particular, the solutions of this newtonian system $(D, {\mathcal
G}^D, F=\linebreak -\hbox{grad}_{{\mathcal G}^D}V)$ are equivalent to the
solutions of the Euler-Lagrange equations on a skew-symmetric algebroid
with Lagrangian $L: D\longrightarrow \R$:
\begin{equation}\label{lagrangian}
L(v)=\frac{1}{2}{{\mathcal G}^D}(v,v)-V(\tau_D(v)).
\end{equation}
Therefore, these solutions are
 $\rho_D$-admissible curves
$\gamma: I\longrightarrow D$  such that
\begin{equation}\label{noholonoma}
\nabla^{{\mathcal G}^D}_{\gamma(t)}\gamma(t) + \hbox{grad}_{{\mathcal
G}^D}V(\tau_D(\gamma(t)))=0.
\end{equation}
Locally, those solutions satisfy
\begin{equation*}
\begin{array}{rcl}
\dot{x}^i&=&(\rho_D)^i_A y^A,\\
\dot{y}^C&=&\ds{-\Gamma^C_{AB}y^Ay^B-({{\mathcal
G}^D})^{CB}(\rho_D)^i_B\frac{\partial V}{\partial x^i}}\; ,
\end{array} \label{equa1}
\end{equation*}
where $({{\mathcal G}^D})^{AB}$ are the entries of the inverse matrix of $(({{\mathcal G}^D})_{AB})$.
In this case, the vector field $\xi_{{\mathcal G}^{D}, F} $ on $D$ is given by
$\xi_{{\mathcal G}^{D}, V} = \xi_{{\mathcal G}^{D}} + (grad_{{\mathcal G}^{D}}V)^{\bf v}$, where  $(grad_{{\mathcal G}^{D}}V)^{\bf v}$
is the vertical lift to $D$ of the section $ grad_{{\mathcal G}^{D}}V \in \Gamma(\tau_D)$.
\end{remark}

\begin{example}\label{example:DequalTM}
{\rm If $D=TM$, $\lcf\, ,\rcf_D=[\, ,\, ]$ the standard Lie bracket
on $M$, $\rho_D=\hbox{Id}_{TM}$ and ${{\mathcal G}^D}$ is a
riemannian metric on $M$, then Equations (\ref{noholonoma}) are the
classical \emph{Euler-Lagrange equations} for the mechanical
lagrangian $L: TM\longrightarrow \R$.
}
\end{example}

\begin{example}\label{noholo-example}
{\rm
Given a regular distribution $D$ on $TM$  and a riemannian metric ${\mathcal G}^{TM}$, we  consider
 the riemannian orthogonal decomposition $TM = D \oplus
D^{\perp}$ and the associated orthogonal projectors $P: TM \to D$
and $Q: TM\to D^{\perp}$, see \cite{BMMP,LeMaMa}. Denote also by
$\iota_D: D\hookrightarrow TM$ the canonical inclusion. We induce by
restriction a bundle metric ${\mathcal G}^D: D\times_M D\to \R$ and
an skew-symmetric algebroid structure on $D$ as follows:
\[
\lcf X, Y\rcf_D=P[\iota_D(X), \iota_D(Y)]\; ,\ \ \
\rho_D(X)=\iota_D(X)\, ,
\]
where $X, Y\in \Gamma (\tau_D)$. Note that in this example, $X,Y$ are vector
fields on $M$ taking values on $D$. Moreover, the Levi-Civita connection $\nabla^{{\mathcal G}^D}$
coincides with the constrained connection
$\nabla^D_XY\colon =P(\nabla^{{\mathcal G}}_XY)$
defined, for instance, in~\cite{bullolewis} if $\nabla^D$  is
restricted to $\Gamma(\tau_D)$.

For this particular skew-symmetric algebroid structure, Equations
(\ref{noholonoma}) correspond with the equations of the nonholonomic
system determined by the constraints induced by the distribution $D$
and the mechanical lagrangian (\ref{lagrangian}). These equations
are also called  in the literature \textit{Lagrange-D'Alembert's
equations}.

Consider a basis of  ${\mathcal G}^{D}$-orthogonal vector fields
$\{X_A, X_{\alpha}\}$, $1\leq A \leq m={\rm rank}\, D$, $m+1\leq
\alpha \leq n=\dim M$, adapted to the decomposition $TM = D \oplus
D^{\perp}$. In other words, $D_x=\hbox{span }\{X_A(x)\}$ and
$D^\perp_x=\hbox{span }\{X_{\alpha}(x)\}$. Observe that for the
induced coordinates $(x^i,y^A,y^\alpha)$ on $TM$ the nonholonomic
constraints are rewritten as $y^{\alpha}=0$, $m+1\leq \alpha \leq
n$. That is, the induced coordinates on $D$ are given by
$(x^i,y^A)$. Therefore, the skew-symmetric algebroid structure induced on the
vector subbundle $D\rightarrow M$ is locally described by:
\begin{eqnarray*}
\lcf X_A, X_B\rcf_D&=&P[X_A, X_B]=P({\mathcal C}_{AB}^C X_C+ {\mathcal C}_{AB}^{\beta} X_{\beta})={\mathcal C}_{AB}^C X_C,\\
\rho_D(X_A)&=&X_A\; ,
\end{eqnarray*}
where $X_A=(\rho_D)^i_A\frac{\partial}{\partial x^i}$, $1\leq A \leq
 m$.

The \emph{Lagrange-D'Alembert's equations} are
\begin{eqnarray}
\dot{x}^i&=&(\rho_D)^i_A y^A, \nonumber \\
\dot{y}^C&=&-\Gamma^C_{AB}y^Ay^B-({{\mathcal
G}^D})^{CB}(\rho_D)^i_B\frac{\partial V}{\partial
x^i}\label{equa1LdA}\; .
\end{eqnarray}

}
\end{example}

 \begin{example}
 {\rm
 Our theory is not only restricted to  lagrangian systems defined on the tangent bundle $TM$  or nonholonomic systems determined by a regular distribution  on $TM$. The techniques described in this paper by means of skew-symmetric algebroids are general enough to cover the most important cases of reduction of mechanical systems subjected or not to nonholonomic constraints.

As a particular example, we include in our analysis the case of Lie algebras ${\mathfrak g}$ of finite dimension (it is clear that ${\mathfrak g}$ is a Lie algebroid over a single
point). Now, suppose that $(l, {\mathfrak d})$ is a nonholonomic Lagrangian system on ${\mathfrak g}$, where
$l: {\mathfrak g}\to \R$ is a Lagrangian function defined by $l(\xi)=\frac{1}{2}\langle \I \xi, \xi\rangle$, $\I: {\mathfrak g}\rightarrow {\mathfrak g}^*$ is a symmetric positive
definite inertia operator and ${\mathfrak d}$  is a vector subspace of ${\mathfrak g}$.
We have the orthogonal decomposition
\[
{\mathfrak g}={\mathfrak d}\oplus {\mathfrak d}^{\perp},
\]
where ${\mathfrak d}^{\perp}=\{\xi'\in {\mathfrak g}\, |\, \langle \I \xi', \xi\rangle=0 \; \forall \xi \in {\mathfrak d}\}$.
Take now an adapted basis to this decomposition $\{e_A, e_{\alpha}\}$ where ${\mathfrak d}=\hbox{span } \{e_A\}$ and  ${\mathfrak d}^{\perp}=\hbox{span } \{e_{\alpha}\}$. Then, the  \emph{Euler-Poincar\'e-Suslov equations} for $(l, {\mathfrak d})$ are
\[
\dot{y}^C=-\Gamma^C_{AB}y^Ay^B, \; 
\]
where $\{y^{A}, y^{\alpha}\}$ are the global coordinates on ${\mathfrak g}$ induced by the basis $\{e_A, e_{\alpha}\}$.
}
 \end{example}

 \begin{example}
 {\rm
Similarly, more involved situations can be recovered using our techniques.
For instance, nonholonomic systems on Atiyah algebroids associated with principal $G$-bundles. For the sake of simplicity, we will consider the particular case when the principal $G$-bundle is trivial. In such a case, the Atiyah algebroid is a vector bundle of the form
\[
\tau_A: A={\mathfrak g}\times TM\longrightarrow M,
\]
where ${\mathfrak g}$ is the Lie algebra of the Lie group $G$ and $M$ is a smooth manifold. The Lie bracket of the space $\Gamma (\tau_A)$ is characterized by the following condition
\[
\lcf (\xi, X), (\xi', X')\rcf_A=(\lcf \xi, \xi'\rcf_{\mathfrak g}, [X, X'])\; ,
\]
for $\xi, \xi'\in {\mathfrak g}$ and $X, X'\in {\mathfrak X}(M)$. The anchor map $\rho_A$ is the canonical projection onto the second factor.

Suppose now that $D$ is a vector subbundle of $A$ over $M$ of constant rank (the constraint bundle) such that
\[
M \ni x\longrightarrow D_V(x)\colon =D(x)\cap ({\mathfrak g}\times \{0_{T_xM}\})\subseteq {\mathfrak g}\times T_xM
 \]
is a vector subbundle of $A$. Then we can choose a local basis $\{\xi_a\}_{1\leq a\leq r}$ of $\Gamma(\tau_{D_V})$, with $\xi_a: U\subseteq M\longrightarrow {\mathfrak g}$ smooth maps, and a local basis $\{X_A\}=\{\xi_a, (\eta_{\alpha}, Y_{\alpha})\}$ of $\Gamma(\tau_D)$, with $\eta_{\alpha}: U\subseteq M\longrightarrow {\mathfrak g}$ and $Y_{\alpha}\in {\mathfrak X}(U)$.

Moreover, if $(x^i)$ are local coordinates on $U\subseteq M$ and $Y_{\alpha}=Y_{\alpha}^i\frac{\partial}{\partial x^i}$, 
the \emph{Lagrange-D'Alembert-Poincar\'e equations} are: 
\begin{eqnarray*}
\dot{x}^i&=&Y_{\alpha}^i y^{\alpha},\\
\dot{y}^c&=&-\Gamma^c_{AB} y^A y^B-\frac{\partial V}{\partial x^i}Y^i_{\alpha} ({\mathcal G}^D)^{c\alpha},\\
\dot{y}^\alpha&=&-\Gamma^\alpha_{AB} y^A y^B-\frac{\partial V}{\partial x^i}Y^i_{\beta} ({\mathcal G}^D)^{\alpha\beta},
\end{eqnarray*}
where $(x^i, y^c, y^{\alpha})$ are the corresponding local coordinates on ${\mathcal D}$.

Note that in the particular case when $M$ is a single point, we recover the \emph{Euler-Poincar\'e Suslov equations}.

Geometric interpretations of \emph{nonholonomic LR} systems  or \emph{nonholonomic systems
with semidirect product symmetry} may also be given using skew-symmetric algebroids deduced from Lie algebroids (see \cite{CoLeMaMar} for more details).
}

\end{example}

\section{Hamilton-Jacobi equation}

The next result is a direct consequence of
Equations~\eqref{noholonoma}. See \cite{BMMP} for  an extension of
this result (in a hamiltonian context) for many different types of
mechanical systems (nonholonomic dynamics, dissipative systems...).

\begin{proposition}\label{propo1}
 Let $(D, \lcf\; ,\rcf_D, \rho_D)$ be a skew-symmetric algebroid and consider a newtonian system determined by
$(D,  {\mathcal G}^D, {\mathcal F})$.
Take an arbitrary  section $X\in \Gamma(\tau_D)$
then, the
following conditions on $X$  are equivalent:
\begin{enumerate}
\item
If $\sigma: I \longrightarrow M$ is an integral curve of the vector field $\rho_D(X)$
that is,
\begin{equation}\label{sigma-03}
\dot{\sigma}(t)=\rho_D(X)(\sigma(t)),
\end{equation}
then $\gamma=X\circ \sigma: I \longrightarrow D$ is  a solution of
\[
\nabla^{{\mathcal G}^D}_{\gamma(t)}\gamma(t)={\mathcal F}(\gamma(t)).
\]

\item
$X$ satisfies
\begin{equation*}\label{0campo}
\nabla^{{\mathcal G}^D}_X X ={\mathcal F} \circ X.
\end{equation*}
\end{enumerate}
\end{proposition}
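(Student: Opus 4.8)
The plan is to prove the equivalence directly from the local expression of the covariant derivative along an admissible curve given in~\eqref{3.0}, comparing it with the condition $\nabla^{{\mathcal G}^D}_X X = {\mathcal F}\circ X$ written in Christoffel symbols. The essential point is that evaluating $\nabla^{{\mathcal G}^D}_X X$ at a base point $\sigma(t)$ and evaluating $\nabla_\gamma\gamma$ for the lifted curve $\gamma = X\circ\sigma$ produce, after using the admissibility/integral-curve condition~\eqref{sigma-03}, the very same vector of $D_{\sigma(t)}$; the direction (i)$\Rightarrow$(ii) then follows by letting $\sigma$ range over all integral curves through all points of $M$ (which exist since $\rho_D(X)$ is an ordinary vector field), and the direction (ii)$\Rightarrow$(i) is immediate because $\gamma = X\circ\sigma$ is automatically $\rho_D$-admissible when $\sigma$ is an integral curve of $\rho_D(X)$.

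First I would fix local coordinates $(x^i)$ on $M$ and a local basis $\{e_A\}$ of $\Gamma(\tau_D)$, writing $X = X^A e_A$ and $\sigma(t) = (x^i(t))$. Since $\sigma$ is an integral curve of $\rho_D(X)$, we have $\dot{x}^i = (\rho_D)^i_A X^A(\sigma(t))$, so the lifted curve $\gamma(t) = X(\sigma(t))$ has coordinate expression $(x^i(t), y^A(t))$ with $y^A(t) = X^A(\sigma(t))$, and it is $\rho_D$-admissible by construction. Then I would apply~\eqref{3.0} to the section $t\mapsto X^C(\sigma(t)) e_C$ along $\gamma$: the $t$-derivative of $X^C\circ\sigma$ is $(\rho_D)^i_A X^A\,\tfrac{\partial X^C}{\partial x^i}$ by the chain rule, and the Christoffel term is $\Gamma^C_{AB}y^A X^B = \Gamma^C_{AB}X^A X^B$, so altogether
\[
\nabla_\gamma\gamma = \Bigl[(\rho_D)^i_A X^A\frac{\partial X^C}{\partial x^i} + \Gamma^C_{AB}X^A X^B\Bigr]e_C\Big|_{\sigma(t)} = (\nabla^{{\mathcal G}^D}_X X)(\sigma(t)),
\]
the last equality being exactly the coordinate formula for $\nabla^{{\mathcal G}^D}_X X$. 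On the other hand, ${\mathcal F}(\gamma(t)) = {\mathcal F}(X(\sigma(t))) = ({\mathcal F}\circ X)(\sigma(t))$. Hence $\gamma$ solves $\nabla_\gamma\gamma = {\mathcal F}\circ\gamma$ if and only if $(\nabla^{{\mathcal G}^D}_X X)(\sigma(t)) = ({\mathcal F}\circ X)(\sigma(t))$ for all $t$.

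From this pointwise identity the two implications are routine. For (ii)$\Rightarrow$(i): if $\nabla^{{\mathcal G}^D}_X X = {\mathcal F}\circ X$ as sections, then the displayed equality holds along every integral curve $\sigma$, so $\gamma = X\circ\sigma$ is a solution. For (i)$\Rightarrow$(ii): given any point $m\in M$, take the integral curve $\sigma$ of $\rho_D(X)$ with $\sigma(0) = m$; the hypothesis gives $(\nabla^{{\mathcal G}^D}_X X)(m) = ({\mathcal F}\circ X)(m)$, and since $m$ was arbitrary the two sections coincide. I do not anticipate a genuine obstacle here; the only point requiring a little care is making explicit that $\gamma = X\circ\sigma$ is $\rho_D$-admissible (so that the local formula~\eqref{3.0} for $\nabla_\gamma\gamma$ legitimately applies), which is immediate from~\eqref{sigma-03}, and noting that through every point of $M$ there passes an integral curve of the vector field $\rho_D(X)$, which is standard ODE theory.
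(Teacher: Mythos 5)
Your proof is correct and follows essentially the same route as the paper: the paper's (one-line) proof simply asserts the key identity $\nabla^{{\mathcal G}^D}_{X\circ\sigma(t)}(X\circ\sigma(t)) = (\nabla^{{\mathcal G}^D}_X X)\circ\sigma(t)$ and declares the result immediate, whereas you verify that identity in local coordinates via~\eqref{3.0} and then spell out the two implications (including the existence of an integral curve through each point for (i)$\Rightarrow$(ii)). There is no substantive difference in method, only in the level of detail.
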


\begin{proof}
Let $\sigma: I \rightarrow M$ be an integral curve of the vector field $\rho_D(X)$. Then, the result follows immediately from the fact that
\begin{equation*}
 \nabla^{{\mathcal G}^D}_{X\circ \sigma (t)}( X\circ \sigma(t)) =(\nabla^{{\mathcal G}^D}_X X ) \circ  \sigma(t).
\end{equation*}

\end{proof}

This result is analogous to the Hamilton-Jacobi theory already described
in a Lagrangian framework for a free mechanical system
in~\cite{CGMMMR} and for a nonholonomic one in~\cite{CaGrMaMaMuRo},
but adapted to skew-symmetric algebroid structures.

\begin{remark}
An analogous result can be written replacing the section $X$ by an
entire distribution spanned by sections.
\end{remark}

\begin{remark}
{\rm In standard riemannian geometry (that is, $D=TM$ equipped with
the standard Lie bracket and without external forces ${\mathcal
F}\equiv 0$) the vector fields $X$ satisfying $\nabla^{{\mathcal
G}^D}_X X=0$ are called auto-parallel vector fields and are
obviously connected with the solutions of the geodesic equations}.
\end{remark}

 From now on, we only consider mechanical problems given by $(D,  {\mathcal G}^D,
\linebreak V)$ as described in Remark~\ref{Remark:FgradV}. Specializing Proposition \ref{propo1} to this kind of problems and to
vector fields $X$ verifying an extra condition  $i_X{\rm d}^D(\flat_{\mathcal G}(X)) = 0$
 we obtain a new expression of this Proposition. Indeed, the following Theorem
can be compared with the classical expression of the Hamilton-Jacobi equation
 proposed in \cite{LeMaMa} for the hamiltonian function $h: D^*\longrightarrow \R$:
 \[
 h(\kappa)={\mathcal G}^{D^*}(\kappa, \kappa)+V(\tau_{D^*}(\kappa)),
 \]
 where $\kappa\in D^*$ and ${\mathcal G}^{D^*}: D^*\times_M D^*\longrightarrow \R$ is the induced bundle metric on the dual bundle.
 
\begin{theorem}\label{maintheorem}
Let $(D, \lcf\; ,\rcf_D, \rho_D)$ be a skew-symmetric algebroid and
consider a mechanical problem determined by $(D,  {\mathcal G}^D,
V)$. Take a section $X\in \Gamma(\tau_D)$ such that
$i_X{\rm d}^D(\flat_{\mathcal G^D}(X)) = 0$. Under this hypothesis, the
following conditions are equivalent:
\begin{enumerate}
\item
If $\sigma: I \longrightarrow M$ is an integral curve of the vector field $\rho_D(X)$,
that is,
\begin{equation*}\label{sigma}
\dot{\sigma}(t)=\rho_D(X)(\sigma(t)),
\end{equation*}
then $\gamma=X\circ \sigma: I \longrightarrow D$ is  a solution of
\[
\nabla^{{\mathcal G}^D}_{\gamma(t)}\gamma(t) + \hbox{grad}_{{\mathcal
G}^D}V(\tau_D(\gamma(t)))=0.
\]

\item
$X$ satisfies \emph{the Hamilton-Jacobi differential equation}
\begin{equation}\label{0h-j-p}
{\rm d}^{D}\left(\frac{1}{2}\mathcal G^D(X,X)+V\right) = 0.
\end{equation}
\end{enumerate}
If, additionally, the skew-symmetric algebroid $(D, \lcf \cdot ,
\rcf_{D}, \rho_{D})$ is completely nonholonomic and $M$ is connected
or if $H^0(d^D) \simeq \R$, then Equation (\ref{0h-j-p}) is
equivalent to
\[
\frac{1}{2}{\mathcal G}^D(X,X)+V = \hbox{constant}.
\]
\end{theorem}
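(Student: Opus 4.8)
The plan is to prove the two equivalences in turn, reducing the new statement to Proposition~\ref{propo1} (with $\mathcal F = -\grad_{\mathcal G^D}V$) by means of an identity relating the covariant derivative $\nabla^{\mathcal G^D}_XX$ to the almost differential ${\rm d}^D$ of the function $\tfrac12\mathcal G^D(X,X)+V$. The core computation is a Koszul-type manipulation: for any $Y\in\Gamma(\tau_D)$, one expands $\rho_D(Y)\bigl(\tfrac12\mathcal G^D(X,X)\bigr)=\mathcal G^D(\nabla^{\mathcal G^D}_YX,X)$ using metricity, and then rewrites $\nabla^{\mathcal G^D}_YX=\nabla^{\mathcal G^D}_XY-\lcf X,Y\rcf_D$ using the symmetry property of the Levi-Civita connection. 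This yields
\[
\rho_D(Y)\bigl(\tfrac12\mathcal G^D(X,X)\bigr)=\mathcal G^D(\nabla^{\mathcal G^D}_XX,Y)+\Bigl(\rho_D(X)(\mathcal G^D(X,Y))-\mathcal G^D(X,\nabla^{\mathcal G^D}_XY)-\mathcal G^D(X,\lcf X,Y\rcf_D)\Bigr),
\]
and the bracketed term is exactly $-\bigl(i_X{\rm d}^D(\flat_{\mathcal G^D}X)\bigr)(Y)$ once one writes out the formula for ${\rm d}^D$ of the one-form $\flat_{\mathcal G^D}X=\mathcal G^D(X,\cdot)$. Hence under the hypothesis $i_X{\rm d}^D(\flat_{\mathcal G^D}X)=0$ we get the clean identity
\[
{\rm d}^D\bigl(\tfrac12\mathcal G^D(X,X)\bigr)(Y)=\mathcal G^D(\nabla^{\mathcal G^D}_XX,Y)\qquad\text{for all }Y\in\Gamma(\tau_D).
\]
Adding ${\rm d}^DV(Y)=\rho_D(Y)(V)=\mathcal G^D(\grad_{\mathcal G^D}V,Y)$ and using nondegeneracy of $\mathcal G^D$, the equation ${\rm d}^D(\tfrac12\mathcal G^D(X,X)+V)=0$ is equivalent to $\nabla^{\mathcal G^D}_XX=-\grad_{\mathcal G^D}V$, i.e.\ to $\nabla^{\mathcal G^D}_XX=\mathcal F\circ X$; this is condition (ii) of Proposition~\ref{propo1}, whose equivalence with (i) gives the first equivalence of the theorem.

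For the final assertion, I would argue as follows. The function $f:=\tfrac12\mathcal G^D(X,X)+V$ belongs to $C^\infty(M)$, and ${\rm d}^Df=0$ means precisely $f\in H^0({\rm d}^D)$. If $H^0({\rm d}^D)\simeq\R$ — which, as recalled in Section~2 (equation~(\ref{Poincare}) and the remark following it), holds when $M$ is connected and the algebroid is completely nonholonomic, since then ${\rm Lie}^{(\infty)}_x(\rho_D(D))=T_xM$ for all $x$ forces transitivity — then $H^0({\rm d}^D)$ consists only of constants, so ${\rm d}^Df=0$ is equivalent to $f$ being constant. Conversely a constant function is clearly annihilated by ${\rm d}^D$ (the anchor kills it), so the two conditions are equivalent without further hypotheses in that direction.

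\textbf{Main obstacle.} The only delicate point is the sign-and-index bookkeeping in identifying the ``extra'' bracketed term above with $-i_X{\rm d}^D(\flat_{\mathcal G^D}X)$; one must use the explicit two-form formula ${\rm d}^D\kappa(U,Y)=\rho_D(U)(\kappa(Y))-\rho_D(Y)(\kappa(U))-\kappa(\lcf U,Y\rcf_D)$ with $\kappa=\flat_{\mathcal G^D}X$ and $U=X$, taking care that $i_X{\rm d}^D(\flat_{\mathcal G^D}X)$ means contraction in the \emph{first} slot. Everything else — metricity, the symmetry identity $\lcf X,Y\rcf_D=\nabla^{\mathcal G^D}_XY-\nabla^{\mathcal G^D}_YX$, nondegeneracy of $\mathcal G^D$, and the definition of $\grad_{\mathcal G^D}V$ — is used exactly as stated in Section~\ref{section3} and in Remark~\ref{Remark:FgradV}. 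It is worth noting that the hypothesis $i_X{\rm d}^D(\flat_{\mathcal G^D}X)=0$ is used \emph{only} to pass between conditions (i)/(ii) of Proposition~\ref{propo1} and the Hamilton-Jacobi form~(\ref{0h-j-p}); the last equivalence with ``$=\text{constant}$'' is purely a statement about $H^0({\rm d}^D)$ and needs no assumption on $X$.
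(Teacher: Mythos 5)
Your proposal follows the paper's own route: both reduce the theorem to Proposition~\ref{propo1} with ${\mathcal F}=-\operatorname{grad}_{{\mathcal G}^D}V\circ\tau_D$ by showing that, under the hypothesis $i_X{\rm d}^D(\flat_{{\mathcal G}^D}X)=0$, one has ${\rm d}^{D}\bigl(\tfrac12{\mathcal G}^D(X,X)\bigr)(Y)={\mathcal G}^D(\nabla^{{\mathcal G}^D}_XX,Y)$ for all $Y$, whence nondegeneracy of ${\mathcal G}^D$ converts \eqref{0h-j-p} into $\nabla^{{\mathcal G}^D}_XX+\operatorname{grad}_{{\mathcal G}^D}V=0$; your treatment of the final ``$=$ constant'' assertion via $H^0({\rm d}^D)\simeq\R$ is also the intended one.

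One caveat: your displayed intermediate identity is not correct as written. Expanding with symmetry and metricity one gets
\[
\rho_D(Y)\bigl(\tfrac12{\mathcal G}^D(X,X)\bigr)={\mathcal G}^D(\nabla^{{\mathcal G}^D}_YX,X)
=-\,{\mathcal G}^D(\nabla^{{\mathcal G}^D}_XX,Y)+\rho_D(X)({\mathcal G}^D(X,Y))-{\mathcal G}^D(X,\lcf X,Y\rcf_D),
\]
i.e.\ the first term carries a minus sign and there is no $-{\mathcal G}^D(X,\nabla^{{\mathcal G}^D}_XY)$ term; with your signs the two sides differ by $2{\mathcal G}^D(\nabla^{{\mathcal G}^D}_XX,Y)-{\mathcal G}^D(X,\nabla^{{\mathcal G}^D}_XY)$, which is not zero. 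Moreover the remaining bracket is not literally $-\bigl(i_X{\rm d}^D(\flat_{{\mathcal G}^D}X)\bigr)(Y)$: by the two-form formula it equals $\bigl(i_X{\rm d}^D(\flat_{{\mathcal G}^D}X)\bigr)(Y)+\rho_D(Y)({\mathcal G}^D(X,X))$, so one must still move $\rho_D(Y)({\mathcal G}^D(X,X))=2\rho_D(Y)\bigl(\tfrac12{\mathcal G}^D(X,X)\bigr)$ to the left-hand side before the hypothesis yields the clean identity. (Equivalently, the paper first shows the hypothesis is equivalent to ${\mathcal G}^D(\nabla^{{\mathcal G}^D}_XX,Y)={\mathcal G}^D(\nabla^{{\mathcal G}^D}_YX,X)$ and then substitutes.) These are repairable bookkeeping slips, not gaps in the strategy. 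A last small inaccuracy: complete nonholonomicity does not ``force transitivity''; it is a strictly weaker condition, and the implication $H^0({\rm d}^D)\simeq\R$ in that case is a separate result recalled in Section~2, not a consequence of \eqref{Poincare} via transitivity.
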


\begin{proof}

First,  we study how to express the condition $i_X{\rm
d}^D(\flat_{{\mathcal G}^D}(X)) = 0$ in terms of the Levi-Civita
connection associated to ${{\mathcal G}^D}$. Let $Y\in
\Gamma(\tau_D)$,
\begin{eqnarray*}
0&=&{\rm d}^D(\flat_{{\mathcal G}^D}(X))(X,Y)\\
 &=& \rho_D(X)({{\mathcal G}^D}(X,Y))-\rho_D(Y)({{\mathcal G}^D}(X,X))-{{\mathcal G}^D}(X, \lcf X, Y\rcf_D)\\
 &=&{{\mathcal G}^D}(\nabla^{{\mathcal G}^D}_X X, Y)+{{\mathcal G}^D}(\nabla^{{\mathcal G}^D}_X Y, X)-{{\mathcal G}^D}(\nabla^{{\mathcal G}^D}_Y X, X)-{{\mathcal G}^D}(\nabla^{{\mathcal G}^D}_Y X, X)\\
 &&-{{\mathcal G}^D}(X, \nabla^{{\mathcal G}^D}_X Y-\nabla^{{\mathcal G}^D}_Y X)\\
 &=&{{\mathcal G}^D}(\nabla^{{\mathcal G}^D}_X X, Y)-{{\mathcal G}^D}(\nabla^{{\mathcal G}^D}_Y X, X).
 \end{eqnarray*}
Therefore, the condition $i_X{\rm d}^D(\flat_{{\mathcal G}^D}(X))=0$ is
alternatively written as
\begin{equation}\label{0closed}
{{\mathcal G}^D}(\nabla^{{\mathcal G}^D}_X X, Y)={{\mathcal
G}^D}(\nabla^{{\mathcal G}^D}_Y X, X)\; \hbox{  for every  } Y \in
\Gamma(\tau_D).
\end{equation}

We only need to check that both condition (ii) in Proposition \ref{propo1} and Theorem \ref{maintheorem} are equivalent.
If  we examine the Hamilton-Jacobi differential equation
\[
{\rm d}^{D}\left(\frac{1}{2}\mathcal G^D(X,X)+V\right) = 0,
\]
then for any $Y\in \Gamma(\tau_D)$ and $X\in \Gamma(\tau_D)$
satisfying $i_X{\rm d}^D(\flat_{{\mathcal G}^D}(X))=0$ we have that
\begin{eqnarray*}
0&=&{\rm d}^{D}(\frac{1}{2}{\mathcal G}^D(X,X)+V)(Y)\\
&=&\frac{1}{2}\rho_D(Y)({{\mathcal G}^D}(X,X))+\rho_D(Y)(V)\\
&=&{{\mathcal G}^D}(\nabla^{{\mathcal G}^D}_Y X, X)+\rho_D(Y)(V)\\
&=&{{\mathcal G}^D}(\nabla^{{\mathcal G}^D}_X X, Y)+{{\mathcal
G}^D}(\hbox{grad}_{{\mathcal G}^D}V, Y).
\end{eqnarray*}
In the last equality we have used condition (\ref{0closed}).
Therefore Equation (\ref{0h-j-p}) is written as
\[
 \nabla^{{\mathcal G}^D}_X X+\hbox{grad}_{{\mathcal G}^D}V=0\; .
\]
 (See also \cite{BMMP,LeMaMa}).
\end{proof}

From Theorem~\ref{maintheorem}, it is clear that we need to find
sections $X$ satisfying $i_X{\rm d}^D(\flat_{{\mathcal G}^D}(X))=0$.
The most simple candidate to be a solution is
$X=\hbox{grad}_{{\mathcal G}^D}f$ with $f: M\to \R$. Note that if $({\rm d}^D)^2(f)=0$, then $i_X{\rm d}^D(\flat_{{\mathcal G}^D}(X))=0$. After some
straightforward calculations, this condition is equivalent to:
\[
[\rho_D(Y), \rho_D(Z)](f)=(\rho_D\lcf Y, Z\rcf_D)(f)\; .
\]
This condition is always true if the bracket $\lcf \cdot, \cdot
\rcf_{D}$ satisfies the Jacobi identity, that is, if the pair $(\lcf
\cdot, \cdot \rcf_{D}, \rho_{D})$ is a \emph{Lie algebroid
structure} on the vector bundle $\tau_{D}: D \to M$ (see
\cite{LeMaMa}).

\section{Mechanical control systems and kinematic reductions}

Assume that the newtonian system determined by $(D, {\mathcal G}^D,
{\mathcal  F})$ also contains some input forces. We model this set
of input forces by a vector subbundle ${\mathcal D}^{(c)}$ of $D^*$.
Locally,  ${\mathcal D}^{(c)}=\hbox{span\;}\{\theta^1, \ldots,
\theta^k\}$, where $\theta^l\in\Gamma(\tau_{D^*})$, $1\leq l\leq k$.
Denote by ${\mathcal D}_{(c)}$ the vector subbundle of $D$ defined
by ${\mathcal D}_{(c)}=\sharp_{{\mathcal G}^D}({\mathcal D}
^{(c)})$. Therefore, locally ${\mathcal D}_{(c)}=\hbox{span\;}\{Y_1,
\ldots, Y_k\}$ where $Y_l=\sharp_{{\mathcal G}^D}\theta^l$,  $Y_l\in
\Gamma(\tau_D)$, $1\leq l\leq k$.  The vector fields $Y_1, \ldots,
Y_k$ are called the \emph{control sections} or \emph{input
sections}.

The equations of motion for a newtonian system with input
sections are as follows
\begin{equation}\label{noholonoma-1}
\nabla^{{\mathcal G}^D}_{\gamma(t)}\gamma(t) -{\mathcal
F}(\gamma(t))\in {\mathcal D}_{(c)}(\gamma(t)), \quad \forall \; t\in
I \subseteq \mathbb{R},
\end{equation}
where $\gamma: I\to D$ is a $\rho_D$-admissible curve.

In terms of the control sections, Equation (\ref{noholonoma-1}) can
be rewritten as follows:
\begin{equation}\label{noholonoma-1-1}
\nabla^{{\mathcal G}^D}_{\gamma(t)}\gamma(t)-{\mathcal F}(\gamma(t))
=\sum_{l=1}^k u^l(t) Y_l(\tau_D(\gamma(t)))
\end{equation}
for some $u: I\subseteq \R \to \R^k$, playing the role of
controls. The corresponding local equations are
\begin{equation*}\begin{array}{rcl}
\dot{x}^i&=&(\rho_D)^i_A y^A,\\
\dot{y}^C&=&-\Gamma^C_{AB}y^Ay^B+ {\mathcal F}^C(x,y)+\sum_{l=1}^k
u^l Y_l(x).
\end{array} \label{eqocontrol}
\end{equation*}
Note that if $Y \in \Gamma(\tau_{{\mathcal D}_{(c)}})$ then the integral curves of the vector 
field $\xi_{{\mathcal G}^D, {\mathcal F}} + Y^{\bf v}$ on $D$ are solutions of the previous equations, where
$Y^{\bf v}$ is the vertical lift of the section $Y$. 
\begin{remark}
{\rm It is also possible  to study the more realistic  case
when $U$ is a proper subset of $\R^k$. In this case, the
controls $u$ take value in a proper set of $\R^k$ (i.e.,
not all the linear combinations of controls are allowed).
Our procedure can be adapted to that particular control
set. But for geometrical clarity in this paper we only
consider the control distribution ${\mathcal D}_{(c)}$. }
\end{remark}

\begin{definition}
The 4-tuple $(D, {{\mathcal G}^D}, {\mathcal F}, {\mathcal
D}_{(c)})$ is called a \emph{mechanical control system on a skew-symmetric
algebroid}.
\end{definition}

Consider  the orthogonal decomposition $D = {\mathcal D}_{(c)} \oplus
{\mathcal D}_{(c)}^{\perp}$ induced by the bundle metric ${{\mathcal G}^D}$ , with associated orthogonal projectors $P_{(c)}: D \to {\mathcal D}_{(c)}$
and $Q_{(c)}: D \to  {\mathcal D}_{(c)}^{\perp}$.


The following proposition is a direct consequence of the definition
of mechanical control systems.
\begin{proposition}
An admissible curve  $\gamma: I\to D$ is solution of
Equation (\ref{noholonoma-1}) if and only if $\gamma: I\to
D$ satisfies  \[ Q_{(c)}\left(\nabla^{{\mathcal
G}^D}_{\gamma(t)}\gamma(t) -{\mathcal F}(\gamma(t))\right)
=0 \qquad \forall \; t\in I\subseteq \R.
\]
\end{proposition}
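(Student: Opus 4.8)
The proof is immediate once one unpacks the orthogonal decomposition. The plan is to write the element $\nabla^{{\mathcal G}^D}_{\gamma(t)}\gamma(t) - {\mathcal F}(\gamma(t))$ of the fibre $D_{\tau_D(\gamma(t))}$ in terms of the projectors $P_{(c)}$ and $Q_{(c)}$ and observe that membership in ${\mathcal D}_{(c)}$ is equivalent to the vanishing of its ${\mathcal D}_{(c)}^{\perp}$-component.

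First I would recall that, since $D = {\mathcal D}_{(c)} \oplus {\mathcal D}_{(c)}^{\perp}$ is a ${{\mathcal G}^D}$-orthogonal decomposition with projectors $P_{(c)}: D \to {\mathcal D}_{(c)}$ and $Q_{(c)}: D \to {\mathcal D}_{(c)}^{\perp}$, for any point $m \in M$ and any $v \in D_m$ we have $v = P_{(c)}(v) + Q_{(c)}(v)$ with $P_{(c)}(v) \in ({\mathcal D}_{(c)})_m$ and $Q_{(c)}(v) \in ({\mathcal D}_{(c)}^{\perp})_m$. Consequently $v \in ({\mathcal D}_{(c)})_m$ if and only if $Q_{(c)}(v) = 0$.

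Now apply this fibrewise, for each $t \in I$, to the vector
\[
v_t := \nabla^{{\mathcal G}^D}_{\gamma(t)}\gamma(t) - {\mathcal F}(\gamma(t)) \in D_{\tau_D(\gamma(t))},
\]
which is well defined because $\gamma$ is $\rho_D$-admissible (so the covariant derivative along $\gamma$ makes sense, cf.\ \eqref{3.0}) and because $\tau_D \circ {\mathcal F} = \tau_D$. By Equation~\eqref{noholonoma-1}, $\gamma$ is a solution precisely when $v_t \in {\mathcal D}_{(c)}(\gamma(t))$ for all $t \in I$, and by the previous paragraph this holds if and only if $Q_{(c)}(v_t) = 0$ for all $t \in I$, which is exactly the stated condition. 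This completes the argument; there is no real obstacle, the only point to be careful about is that $Q_{(c)}$ is applied fibrewise at the varying base point $\tau_D(\gamma(t))$, which is legitimate since $Q_{(c)}$ is a vector bundle morphism over $\operatorname{id}_M$.
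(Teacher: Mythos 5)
Your argument is correct and is exactly the reasoning the paper has in mind: the paper offers no written proof, stating only that the proposition is ``a direct consequence of the definition of mechanical control systems,'' and your fibrewise observation that $v\in({\mathcal D}_{(c)})_m$ if and only if $Q_{(c)}(v)=0$ is precisely that direct consequence. Nothing is missing.
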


\subsection{Kinematic reduction of mechanical control systems on a
skew-symmetric algebroid}

 Now we introduce the
notion of a kinematic reduction (see \cite{bullolewis}).

Given an \emph{skew-symmetric algebroid structure} on the
vector bundle $\tau_{D}: D \to M$, we define  a
\emph{driftless system} as the set $(M,{\mathfrak D},
\overline{U})$, where ${\mathfrak D}$ is a vector subbundle
of $D$ locally spanned by $\{{X}_1, \ldots, {X}_{k'}\}$,
with $X_{\alpha}\in \Gamma(\tau_D)$, $1\leq \alpha\leq k'$,
and $\overline{U}\subset \R^{k'}$ is the set of admissible
controls. For a section $X=\sum_{\alpha=1}^{k'}
\overline{u}^\alpha X_\alpha \in \Gamma(\tau_D)$, remember that an
\textbf{integral curve of $X$} is a curve $\sigma
\colon I \subset \mathbb{R}\rightarrow M$ such that
\begin{equation}\label{drift}
\dot{\sigma}(t)=\sum_{\alpha=1}^{k'}
\overline{u}^{\alpha}(t)\rho_D({X}_{\alpha})(\sigma(t))\; ,
\end{equation}
where $(\overline{u}^{1}(t), \ldots,
\overline{u}^{k'}(t)) = (\bar{u}^1(\sigma(t)), \dots , \bar{u}^{k'}(\sigma(t))) \in \overline{U}$ for all $t$. It can
also be said that the pair $(\sigma, \overline{u})$ is a
solution to the driftless system.

Observe that for each pair $(\sigma(t), \overline{u}(t))$ we have
the curve $\gamma: I\longrightarrow {\mathfrak D}\subseteq D$
defined by
\[
\gamma(t)=\sum_{\alpha=1}^{k'}\overline{u}^{\alpha}(t){X}_{\alpha}(\sigma(t)).
\]
In the sequel, we will denote by $\tau_{{\mathfrak D}}$ the
restriction $(\tau_D)\Big|_{{\mathfrak D}}$.

\begin{definition}[Kinematic reduction]
Let
 $(D,  {\mathcal G}^D, {\mathcal F},  {\mathcal D}_{(c)})$ be a mechanical control system. A driftless system
$(M, {\mathfrak D}, \overline{U})$ is called a \emph{kinematic
reduction} of $(D,  {\mathcal G}^D, {\mathcal F},  {\mathcal
D}_{(c)})$ if
 for every solution $(\sigma(t), \overline{u}(t))$ of
(\ref{drift}) there exists a pair $(\gamma(t), u(t))$
solution of (\ref{noholonoma-1-1}), where
$\gamma(t)=\sum_{\alpha=1}^{k'}\overline{u}^{\alpha}(t){X}_{\alpha}(\sigma(t))$.
   \label{def:KinRed}
    \end{definition}

The rank of a kinematic reduction is the rank of the
distribution ${\mathfrak D}$. Rank-one kinematic reductions
are particularly interesting. A section $X$  of $\Gamma(\tau_D)$ is
called a \emph{decoupling section} if  the rank-one
kinematic system induced by ${\mathfrak D}= \hbox{span
}\{X\}$ is a kinematic reduction.

\begin{definition}[Kinematic controllability] A mechanical control system
 $(D,  {\mathcal G}^D, {\mathcal F},  {\mathcal D}_{(c)})$ is
 \emph{kinematically controllable} if it possesses decoupling sections $\{X_1,\dots,X_{k'}\}$ whose involutive closure has maximum
 rank.\label{dfn:KinControl}
\end{definition}

When a system is kinematically controllable, motion planning is
possible by using concatenations of integral curves of the
decoupling vector fields. Those curves must be reparametrized in
such a way that each segment begins and ends with zero velocity,
see~\cite{bullolewis} for more details.

We have the following adaptation of the results in Section 4.

\begin{proposition}\label{propo5}
 Let $(D, \lcf\; ,\rcf_D, \rho_D)$ be a skew-symmetric algebroid and consider a mechanical control problem determined by $(D,  {\mathcal G}^D, {\mathcal F}, {\mathcal D}_{(c)})$.
Consider a driftless system $(M, {\mathfrak D}, \overline{U})$. For
all $X\in \Gamma( \tau_{\mathfrak D})$, the following conditions are
equivalent:
\begin{enumerate}
\item if
$\sigma: I \longrightarrow M$ is an integral curve of $\rho_D(X)$,
that is,
\begin{equation*}\label{sigma-3}
\dot{\sigma}(t)=\rho_D(X)(\sigma(t)),
\end{equation*}
then $\gamma=X\circ \sigma: I \longrightarrow D$ is  an admissible curve solution of
\[
Q_{(c)}(\nabla^{{\mathcal G}^D}_{\gamma(t)}\gamma(t)
-{\mathcal F}(\gamma(t)))=0.
\]

\item $
Q_{(c)}\left(\nabla^{{\mathcal G}^D}_X X -{\mathcal
F}\circ X\right)=0$.
\end{enumerate}
\end{proposition}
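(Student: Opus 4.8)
The plan is to mimic exactly the proof of Proposition~\ref{propo1}, simply applying the orthogonal projector $Q_{(c)}$ at the end. The essential geometric input is the same pointwise identity, namely that for a section $X\in\Gamma(\tau_D)$ and an integral curve $\sigma$ of $\rho_D(X)$, the curve $\gamma=X\circ\sigma$ satisfies
\[
\nabla^{{\mathcal G}^D}_{\gamma(t)}\gamma(t)=\bigl(\nabla^{{\mathcal G}^D}_X X\bigr)\circ\sigma(t),
\]
which was already established in the proof of Proposition~\ref{propo1}. Indeed, $\gamma=X\circ\sigma$ is automatically $\rho_D$-admissible because $\dot\sigma=\rho_D(X)\circ\sigma=\rho_D(\gamma)$, so the admissibility requirement in statement (i) is free.

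First I would show $(ii)\Rightarrow(i)$. Assume $Q_{(c)}(\nabla^{{\mathcal G}^D}_X X-{\mathcal F}\circ X)=0$ as a section of $D$ over $M$. Let $\sigma$ be any integral curve of $\rho_D(X)$ and set $\gamma=X\circ\sigma$. Using the displayed identity above together with the fact that $Q_{(c)}$ is a vector bundle morphism covering the identity (so it commutes with composition by $\sigma$ and with evaluation along $\gamma$), one gets
\[
Q_{(c)}\bigl(\nabla^{{\mathcal G}^D}_{\gamma(t)}\gamma(t)-{\mathcal F}(\gamma(t))\bigr)
= Q_{(c)}\bigl(\nabla^{{\mathcal G}^D}_X X-{\mathcal F}\circ X\bigr)\circ\sigma(t)=0,
\]
so $\gamma$ solves the required projected equation; this is exactly the characterization of solutions in the preceding Proposition, hence (i) holds.

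For $(i)\Rightarrow(ii)$ I would argue pointwise: fix $x\in M$. Since $\rho_D(X)$ is a genuine vector field on $M$, there is an integral curve $\sigma$ with $\sigma(0)=x$. By hypothesis $\gamma=X\circ\sigma$ satisfies $Q_{(c)}(\nabla^{{\mathcal G}^D}_{\gamma}\gamma-{\mathcal F}\circ\gamma)=0$; evaluating at $t=0$ and using the identity $\nabla^{{\mathcal G}^D}_{\gamma(0)}\gamma(0)=(\nabla^{{\mathcal G}^D}_X X)(x)$ gives $Q_{(c)}\bigl(\nabla^{{\mathcal G}^D}_X X-{\mathcal F}\circ X\bigr)(x)=0$. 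As $x$ was arbitrary, (ii) follows.

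There is essentially no obstacle here; the only point requiring a word of care is the interchange of $Q_{(c)}$ with restriction along $\sigma$, which holds precisely because $Q_{(c)}\colon D\to D$ covers $\mathrm{id}_M$, and the observation that the admissibility constraint on $\gamma$ in part (i) is automatically satisfied by $\gamma=X\circ\sigma$. Thus the whole statement reduces, exactly as Proposition~\ref{propo1} did for the unconstrained case, to the single covariant-derivative identity for $X\circ\sigma$.
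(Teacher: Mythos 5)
Your proof is correct and follows essentially the same route as the paper, which gives no separate argument for this proposition but presents it as a direct adaptation of Proposition \ref{propo1}, resting on the same identity $\nabla^{{\mathcal G}^D}_{X\circ\sigma(t)}(X\circ\sigma(t))=(\nabla^{{\mathcal G}^D}_X X)\circ\sigma(t)$ with the projector $Q_{(c)}$ applied fibrewise. Your write-up in fact supplies the details the paper omits (automatic admissibility of $X\circ\sigma$, the pointwise argument via integral curves for $(i)\Rightarrow(ii)$), and these are all sound.
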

%
%
%

Under extra assumptions, we have an alternative way to
write condition (ii) in Proposition \ref{propo5}. This
provides us with a new characterization of kinematic
reductions in terms of the affine connection of the given
mechanical control system. (See also
\cite{bullolewis,MuYa}).

\begin{proposition}
If $Q_{(c)}\left({\mathcal F}\right)=0$ and ${\mathfrak
D}=\hbox{span }\{ X_1, \ldots, X_{k'}\}$, then the
following conditions are equivalent:

\begin{enumerate}
\item
For all $X\in \Gamma(\tau_{\mathfrak D})$,
$Q_{(c)}\left(\nabla^{{\mathcal G}^D}_X X -{\mathcal
F}\circ X \right)=Q_{(c)}\left(\nabla^{{\mathcal G}^D}_X X
\right)=0 $.
\item
     For all $ \alpha,\beta, \gamma\in \{1, \ldots, k'\}$, $Q_{(c)}({X}_{\alpha})=0$,  $Q_{(c)} (\langle {X}_{\beta}: {X}_{\gamma}
     \rangle_{{\mathcal G}_D})=0$. In other words, for all $ \alpha,\beta, \gamma\in \{1, \ldots, k'\}$, \ ${X}_{\alpha}\in \Gamma(\tau_{{\mathcal D}_{(c)}})$
     and $\langle X_{\beta}: X_{\gamma}\rangle_{{\mathcal G}_{\mathcal D}} \in \Gamma(\tau_{{\mathcal D}_{(c)}})$.
     \item
$(M,{\mathfrak D},\overline{U})$ is a kinematic reduction of $(D,
{\mathcal G}^D, {\mathcal F}, {\mathcal D}_{(c)})$.
\end{enumerate} \label{prop:EquivKinematicReduction}
\end{proposition}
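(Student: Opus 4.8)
The plan is to prove the three equivalences by a chain $(i)\Rightarrow(ii)\Rightarrow(iii)\Rightarrow(i)$, using Proposition~\ref{propo5} as the bridge between the ``algebraic'' condition on sections and the ``dynamical'' notion of kinematic reduction, and the symmetric-product computation behind Theorem~\ref{Thm:GeodInv} to handle the span of the ${X}_\alpha$'s. The implication $(i)\Leftrightarrow(ii)$ is essentially a restatement: since $Q_{(c)}({\mathcal F})=0$ we have $Q_{(c)}({\mathcal F}\circ X)=0$ for every $X\in\Gamma(\tau_{\mathfrak D})$, so the extra term in (i) drops out, and then the content is exactly the bilinearity/polarization argument. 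Concretely, $Q_{(c)}(\nabla^{{\mathcal G}^D}_X X)=0$ for \emph{all} sections $X$ of $\tau_{\mathfrak D}$ is equivalent, by writing $X=\sum_\alpha f^\alpha X_\alpha$ and expanding with the Leibniz rule for $\nabla^{{\mathcal G}^D}$, to the vanishing of the ``diagonal'' terms $Q_{(c)}(\nabla^{{\mathcal G}^D}_{X_\alpha}X_\alpha)$ together with the requirement that the first-derivative terms $(\rho_D(X_\beta)f^\gamma)Q_{(c)}(X_\gamma)$ cancel; the latter forces $Q_{(c)}(X_\alpha)=0$, i.e. $X_\alpha\in\Gamma(\tau_{{\mathcal D}_{(c)}})$, and once this holds the remaining symmetric combination is precisely $\tfrac12 Q_{(c)}(\langle X_\beta:X_\gamma\rangle_{{\mathcal G}^D})$. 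This gives $(i)\Leftrightarrow(ii)$.

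For $(ii)\Rightarrow(iii)$, I would argue that (ii) says ${\mathfrak D}\subseteq{\mathcal D}_{(c)}$ and that ${\mathfrak D}$ is geodesically invariant \emph{for the restricted connection} in the sense that symmetric products of sections of $\tau_{\mathfrak D}$ land back in $\Gamma(\tau_{{\mathcal D}_{(c)}})$. Given a solution $(\sigma,\overline u)$ of the driftless system~\eqref{drift} with $\gamma(t)=\sum_\alpha\overline u^\alpha(t)X_\alpha(\sigma(t))$, the candidate control is $u^l(t)$ defined so that $\sum_l u^l(t)Y_l(\tau_D(\gamma(t)))=P_{(c)}(\nabla^{{\mathcal G}^D}_{\gamma(t)}\gamma(t)-{\mathcal F}(\gamma(t)))$; this is possible because $\{Y_l\}$ spans ${\mathcal D}_{(c)}$. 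What must be checked is that $Q_{(c)}(\nabla^{{\mathcal G}^D}_{\gamma(t)}\gamma(t)-{\mathcal F}(\gamma(t)))=0$, for then~\eqref{noholonoma-1-1} holds with this $u$. Using $Q_{(c)}({\mathcal F})=0$ and Proposition~\ref{propo5} (applied to the section $X=\sum_\alpha\overline u^\alpha X_\alpha$, valued in ${\mathfrak D}$, along its own integral curve $\sigma$), it suffices to know $Q_{(c)}(\nabla^{{\mathcal G}^D}_X X)=0$ for this particular $X$ --- and that is exactly the already-established $(iii')\Leftrightarrow(ii)$ at the level of sections, since $X$ is a $C^\infty(M)$-combination of the $X_\alpha$. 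So $(ii)$ delivers a kinematic reduction, giving $(iii)$.

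For $(iii)\Rightarrow(ii)$, I would run the converse: suppose $(M,{\mathfrak D},\overline U)$ is a kinematic reduction. Fix $\alpha$ and a point $x_0\in M$; choose a constant control $\overline u$ supported on the $\alpha$-th slot, producing an integral curve $\sigma$ of $\rho_D(X_\alpha)$ through $x_0$, with $\gamma=X_\alpha\circ\sigma$. Kinematic reducibility gives a control $u(t)$ with~\eqref{noholonoma-1-1}, whose local form forces $\dot y^C=-\Gamma^C_{AB}y^Ay^B+{\mathcal F}^C+\sum u^lY_l^C$ along $\gamma$; since $\gamma=X_\alpha\circ\sigma$ is determined, the $Q_{(c)}$-component of $\nabla^{{\mathcal G}^D}_{\gamma}\gamma-{\mathcal F}(\gamma)$ must vanish, i.e. $Q_{(c)}(\nabla^{{\mathcal G}^D}_{X_\alpha}X_\alpha)\circ\sigma=0$; as $x_0$ and the direction are arbitrary, $Q_{(c)}(\nabla^{{\mathcal G}^D}_{X_\alpha}X_\alpha)=0$. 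Repeating with a two-parameter family of controls (say $\overline u$ concentrated on the $\beta$- and $\gamma$-slots, then polarizing in the two amplitudes) forces $Q_{(c)}(X_\beta)=0$ and $Q_{(c)}(\langle X_\beta:X_\gamma\rangle_{{\mathcal G}^D})=0$, which is~(ii). I expect the main technical obstacle to be the converse direction $(iii)\Rightarrow(ii)$: one must extract pointwise algebraic identities from the \emph{a priori} very weak hypothesis that \emph{some} control makes the lifted curve a solution, and to do so one needs enough freedom in choosing $(\sigma,\overline u)$ --- in particular one uses that through every point and every admissible direction in ${\mathfrak D}$ there is such a driftless solution, plus a polarization argument in the control amplitudes to separate the bilinear term $\langle X_\beta:X_\gamma\rangle_{{\mathcal G}^D}$ from the linear terms $Q_{(c)}(X_\alpha)$. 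The local coordinate expressions~\eqref{eqo} and~\eqref{3.0} together with Lemma~\ref{Lemma:SymProdV} are the tools that make this bookkeeping routine once the right curves have been chosen.
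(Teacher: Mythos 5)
Your argument for $(i)\Leftrightarrow(ii)$ and for $(ii)\Rightarrow(iii)$ is essentially the paper's: the paper also tests $(i)$ on sections $fX_\alpha$ to separate $Q_{(c)}(\nabla^{{\mathcal G}^D}_{X_\alpha}X_\alpha)=0$ from $Q_{(c)}(X_\alpha)=0$, uses the polarization identity $\langle X_\alpha:X_\beta\rangle_{{\mathcal G}^D}=\nabla^{{\mathcal G}^D}_{X_\alpha+X_\beta}(X_\alpha+X_\beta)-\nabla^{{\mathcal G}^D}_{X_\alpha}X_\alpha-\nabla^{{\mathcal G}^D}_{X_\beta}X_\beta$ for the cross terms, and passes between the section-level identity and the dynamical statement via Proposition~\ref{propo5}. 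Those parts of your proposal are fine.

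The one genuine gap is in your direct proof of $(iii)\Rightarrow(ii)$. You propose to extract the conditions from driftless solutions with \emph{constant} control amplitudes, polarizing in those amplitudes. With $\overline{u}=(0,\dots,s,\dots,t,\dots,0)$ constant, the associated section is $X=sX_\beta+tX_\gamma$ and
\begin{equation*}
\nabla^{{\mathcal G}^D}_X X
= s^2\,\nabla^{{\mathcal G}^D}_{X_\beta}X_\beta
+ st\,\langle X_\beta: X_\gamma\rangle_{{\mathcal G}^D}
+ t^2\,\nabla^{{\mathcal G}^D}_{X_\gamma}X_\gamma ,
\end{equation*}
so polarization in $(s,t)$ yields $Q_{(c)}(\nabla^{{\mathcal G}^D}_{X_\beta}X_\beta)=0$ and $Q_{(c)}(\langle X_\beta:X_\gamma\rangle_{{\mathcal G}^D})=0$, but \emph{no} term proportional to $Q_{(c)}(X_\beta)$ ever appears; the claim that this ``forces $Q_{(c)}(X_\beta)=0$'' is not justified. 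The term carrying $Q_{(c)}(X_\beta)$ is the Leibniz term $\rho_D(X)(f)\,Q_{(c)}(X_\beta)$, which only shows up for \emph{non-constant} feedback controls $\overline{u}^\beta=f\in C^\infty(M)$ (the paper's definition of driftless solution does allow these). The repair is immediate and is in fact the route the paper takes: a kinematic reduction lifts the integral curves of \emph{every} section $X=\sum_\alpha f^\alpha X_\alpha\in\Gamma(\tau_{\mathfrak D})$, so by Proposition~\ref{propo5} one gets $(iii)\Rightarrow(i)$ directly, and then your already-established implication $(i)\Rightarrow(ii)$ closes the cycle. Either insert non-constant $f$ into your family of test controls, or replace your direct $(iii)\Rightarrow(ii)$ by $(iii)\Rightarrow(i)$.

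A minor shared caveat (present in the paper's proof as well, so not held against you): extracting $Q_{(c)}(X_\alpha)=0$ from $f\,\rho_D(X_\alpha)(f)\,Q_{(c)}(X_\alpha)=0$ requires $\rho_D(X_\alpha)\neq 0$ at the point in question.
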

 \begin{proof}
 (i) $\Rightarrow$ (ii)
 Observe that for each $\alpha=1,\dots,k'$ we have that $fX_{\alpha}\in \Gamma(\tau_{\mathfrak D})$ for all $f\in C^{\infty}(M)$.
 If (i) holds, we have
 \[
 0=Q_{(c)}\left(\nabla^{{\mathcal G}^D}_{fX_{\alpha}} fX_{\alpha} \right)=f^2 Q_{(c)}\left(\nabla^{{\mathcal G}^D}_{X_{\alpha}} X_{\alpha}\right)+f \rho_D(X_{\alpha})(f) Q_{(c)}(X_{\alpha})\;
 \]
 for every $f\in {\mathcal C}^\infty(M)$. Then, taking
 suitable functions $f$, we obtain $Q_{(c)}(X_{\alpha})=0$ and $Q_{(c)}(\nabla^{{\mathcal G}^D}_{X_{\alpha}} X_{\alpha})=0$. Now, using the polarization identity we have that
 \[
\langle {X}_{\alpha}: {X}_{\beta}\rangle_{{\mathcal
G}_D}=\ds{\frac{1}{2}\left(\nabla^{{\mathcal
G}^D}_{X_{\alpha}+X_{\beta}}(X_{\alpha}+X_{\beta})-\nabla^{{\mathcal
G}^D}_{X_{\alpha}} X_{\alpha}-\nabla^{{\mathcal G}^D}_{X_{\beta}}
X_{\beta}\right)}\; .
 \]
 Therefore, $Q_{(c)} (\langle X_{\alpha}: X_{\beta}\rangle_{{\mathcal G}_D})=0$.

(ii) $\Rightarrow$ (iii) By definition ${\mathfrak D}
\subseteq D$. As (ii) is true, then statement (ii) in
Proposition~\ref{propo5} is satisfied for every $X\in
\Gamma(\tau_D)$. Hence statement (i) in
Proposition~\ref{propo5} is true and we can conclude that
$(M,{\mathfrak D},\overline{U})$ is a kinematic reduction
of $(D, {\mathcal G}^D, {\mathcal F}, {\mathcal D}_{(c)})$
according to Definition~\ref{def:KinRed}.

(iii) $\Rightarrow$ (i) As $(M,{\mathfrak D},\overline{U})$ is a
kinematic reduction of $(D, {\mathcal G}^D, {\mathcal F})$,
statement (i) in Proposition~\ref{propo5} is satisfied. Hence the
result follows.
 \end{proof}

A straightforward corollary of
Proposition~\ref{prop:EquivKinematicReduction} is the
following one:

\begin{corollary}
If $Q_{(c)}\left({\mathcal F}\right)=0$ and ${\mathfrak
D}=\hbox{span }\{ X_1\}$, then the following conditions are
equivalent:

\begin{enumerate}
\item
$Q_{(c)}\left(\nabla^{{\mathcal G}^D}_{X_1} X_1 -{\mathcal
F}\circ X_1\right)=Q_{(c)}\left(\nabla^{{\mathcal G}^D}_{X_1} X_1
\right)=0 $.
     \item
$(M,{\mathfrak D},\overline{U})$ is a rank-one kinematic
reduction of $(D, {\mathcal G}^D, {\mathcal F}, {\mathcal
D}_{(c)})$.
\end{enumerate}\label{Corol:Decoupling}
\end{corollary}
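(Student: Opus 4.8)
The plan is to obtain Corollary~\ref{Corol:Decoupling} as the rank-one specialisation of Proposition~\ref{prop:EquivKinematicReduction}, taking $k'=1$ and ${\mathfrak D}=\hbox{span }\{X_1\}$. The only computation I would record at the outset is the elementary identity for the symmetric product of a section with itself,
\[
\langle X_1:X_1\rangle_{{\mathcal G}^D}=\nabla^{{\mathcal G}^D}_{X_1}X_1+\nabla^{{\mathcal G}^D}_{X_1}X_1=2\,\nabla^{{\mathcal G}^D}_{X_1}X_1 ,
\]
so that $Q_{(c)}(\langle X_1:X_1\rangle_{{\mathcal G}^D})=0$ is literally the same requirement as $Q_{(c)}(\nabla^{{\mathcal G}^D}_{X_1}X_1)=0$. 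I would also note that, because $Q_{(c)}({\mathcal F})=0$ by hypothesis, one has $Q_{(c)}({\mathcal F}\circ X_1)=0$, so the two vanishing statements in (i) are redundant with one another: under the standing hypothesis, (i) says exactly $Q_{(c)}(\nabla^{{\mathcal G}^D}_{X_1}X_1)=0$.

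For the implication (ii)$\Rightarrow$(i), I would invoke the equivalence (iii)$\Leftrightarrow$(i) of Proposition~\ref{prop:EquivKinematicReduction} (whose hypotheses $Q_{(c)}({\mathcal F})=0$ and ${\mathfrak D}=\hbox{span }\{X_1\}$ are in force): a rank-one kinematic reduction satisfies $Q_{(c)}(\nabla^{{\mathcal G}^D}_X X-{\mathcal F}\circ X)=Q_{(c)}(\nabla^{{\mathcal G}^D}_X X)=0$ for all $X\in\Gamma(\tau_{{\mathfrak D}})$, and specialising to $X=X_1$ gives (i). For the implication (i)$\Rightarrow$(ii), I would verify condition (ii) of Proposition~\ref{prop:EquivKinematicReduction} in the case $k'=1$, namely $X_1\in\Gamma(\tau_{{\mathcal D}_{(c)}})$ together with $Q_{(c)}(\langle X_1:X_1\rangle_{{\mathcal G}^D})=0$; the second of these is immediate from (i) by the identity above, and for the first I would use the Leibniz rule
\[
\nabla^{{\mathcal G}^D}_{fX_1}(fX_1)=f^2\,\nabla^{{\mathcal G}^D}_{X_1}X_1+f\,\rho_D(X_1)(f)\,X_1 ,\qquad f\in C^{\infty}(M),
\]
exactly as in the proof of Proposition~\ref{prop:EquivKinematicReduction}: applying $Q_{(c)}$, using $Q_{(c)}(\nabla^{{\mathcal G}^D}_{X_1}X_1)=0$, and letting $f$ range over suitable functions forces $Q_{(c)}(X_1)=0$. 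Then Proposition~\ref{prop:EquivKinematicReduction} gives that $(M,{\mathfrak D},\overline{U})$ is a kinematic reduction, and it is of rank one since ${\mathfrak D}=\hbox{span }\{X_1\}$.

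The step I expect to require the most care is this last one, because it needs condition (i) to be used not merely for the single section $X_1$ but for all its $C^{\infty}(M)$-multiples $fX_1$ — it is precisely this homogeneity that pins down the membership $X_1\in\Gamma(\tau_{{\mathcal D}_{(c)}})$, which is part of being a kinematic reduction. Once (i) is read in this way (equivalently, once one observes that for a rank-one ${\mathfrak D}$ the relevant hypotheses of Proposition~\ref{prop:EquivKinematicReduction} are those tested on $X_1$ and its rescalings), the corollary is a direct transcription of Proposition~\ref{prop:EquivKinematicReduction}. All the remaining ingredients — the $\rho_D$-admissibility of the curve $\gamma=X\circ\sigma$ attached to a solution of the driftless system and the identity $\nabla^{{\mathcal G}^D}_{\gamma}\gamma=(\nabla^{{\mathcal G}^D}_XX)\circ\sigma$ that converts the curve formulation (i) into the intrinsic one — are already supplied by Propositions~\ref{propo1} and~\ref{propo5}, so no further work is needed there.
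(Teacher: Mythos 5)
Your proposal is correct and follows the paper's own route: the paper gives no separate argument for this corollary, presenting it as an immediate specialization of Proposition~\ref{prop:EquivKinematicReduction} to the case $k'=1$, which is exactly what you do (via $\langle X_1:X_1\rangle_{{\mathcal G}^D}=2\nabla^{{\mathcal G}^D}_{X_1}X_1$). Your extra care about reading condition (i) as holding for all rescalings $fX_1$ — which is what actually forces $Q_{(c)}(X_1)=0$, i.e.\ $X_1\in\Gamma(\tau_{{\mathcal D}_{(c)}})$ — addresses the one genuinely delicate point, which the paper passes over in silence.
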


In the sequel assume that ${\mathcal F}$ comes from a
potential function $V: M\to \R$, that is, ${\mathcal
F}=-{\rm grad}_{{\mathcal G}^D}V\circ \tau_D$. Then
Theorem~\ref{maintheorem} can be adapted to control systems
as follows.

\begin{theorem}\label{maintheorem-5}
Let $(D, \lcf\; ,\rcf_D, \rho_D)$ be a skew-symmetric algebroid and
consider a control  problem determined by $(D, {\mathcal G}^D, V,
{\mathcal D}_{(c)})$.

Take a section $X\in \Gamma(\tau_{D})$ such that $i_X{\rm
d}^D(\flat_{{\mathcal G}^D}(X))(Y)= 0$ for all $Y\in \Gamma(\tau_{{\mathcal
D}_{(c)}^{\perp}})$. Under this hypothesis, the following conditions
are equivalent:
\begin{enumerate}
\item
If $\sigma: I \longrightarrow M$ is an integral curve of
$\rho_D(X)$, that is,
\begin{equation*}\label{sigma-3-mainthm-5}
\dot{\sigma}(t)=\rho_D(X)(\sigma(t)),
\end{equation*}
then $\gamma=X\circ \sigma: I \longrightarrow D$ is  an admissible
curve solution of
\[
{Q}_{(c)}(\nabla^{{\mathcal G}^D}_{\gamma(t)}\gamma(t) +
\hbox{grad}_{{\mathcal G}^D}V(\tau_D(\gamma(t))))=0.
\]
\item
$X$ satisfies \emph{the Hamilton-Jacobi differential equation}
\begin{equation}\label{eq:thm5h-j-p}
{\rm d}^{D}\left(\frac{1}{2}\mathcal G^D(X,X)+V\right) (Y) = 0
\hbox{ for all } Y\in \Gamma(\tau_{{\mathcal D}_{(c)}^{\perp}})\; .
\end{equation}
\end{enumerate}
\end{theorem}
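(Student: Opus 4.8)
The plan is to reduce condition (i) via Proposition~\ref{propo5} and then rewrite everything in terms of the Levi-Civita connection exactly as in the proof of Theorem~\ref{maintheorem}, the only difference being that all pairings are now tested against sections of ${\mathcal D}_{(c)}^{\perp}$ rather than against all of $\Gamma(\tau_D)$. First, since ${\mathcal F}=-\hbox{grad}_{{\mathcal G}^D}V\circ\tau_D$, Proposition~\ref{propo5} applied to the given $X$ says that condition (i) is equivalent to $Q_{(c)}\bigl(\nabla^{{\mathcal G}^D}_X X+\hbox{grad}_{{\mathcal G}^D}V\bigr)=0$. Because $D={\mathcal D}_{(c)}\oplus{\mathcal D}_{(c)}^{\perp}$ is a ${\mathcal G}^D$-orthogonal decomposition, for any section $Z$ one has $Q_{(c)}(Z)=0$ if and only if ${\mathcal G}^D(Z,Y)=0$ for every $Y\in\Gamma(\tau_{{\mathcal D}_{(c)}^{\perp}})$; hence it suffices to prove that this equation is equivalent to the Hamilton-Jacobi equation~(\ref{eq:thm5h-j-p}).

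Next I would carry out the same algebraic computation as in the proof of Theorem~\ref{maintheorem}, but keeping $Y$ arbitrary. Metricity of $\nabla^{{\mathcal G}^D}$ together with the symmetry property $\lcf X,Y\rcf_D=\nabla^{{\mathcal G}^D}_X Y-\nabla^{{\mathcal G}^D}_Y X$ yields, for all $X,Y\in\Gamma(\tau_D)$,
\[
\bigl({\rm d}^D(\flat_{{\mathcal G}^D}(X))\bigr)(X,Y)={\mathcal G}^D(\nabla^{{\mathcal G}^D}_X X,Y)-{\mathcal G}^D(\nabla^{{\mathcal G}^D}_Y X,X),
\]
so the hypothesis $i_X{\rm d}^D(\flat_{{\mathcal G}^D}(X))(Y)=0$ for $Y\in\Gamma(\tau_{{\mathcal D}_{(c)}^{\perp}})$ becomes ${\mathcal G}^D(\nabla^{{\mathcal G}^D}_X X,Y)={\mathcal G}^D(\nabla^{{\mathcal G}^D}_Y X,X)$ for such $Y$. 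On the other hand, metricity and the definition of $\hbox{grad}_{{\mathcal G}^D}$ give, for every $Y$,
\[
{\rm d}^D\!\left(\frac{1}{2}{\mathcal G}^D(X,X)+V\right)(Y)={\mathcal G}^D(\nabla^{{\mathcal G}^D}_Y X,X)+{\mathcal G}^D(\hbox{grad}_{{\mathcal G}^D}V,Y).
\]
Inserting the previous relation into the right-hand side for $Y\in\Gamma(\tau_{{\mathcal D}_{(c)}^{\perp}})$ turns this into ${\mathcal G}^D(\nabla^{{\mathcal G}^D}_X X+\hbox{grad}_{{\mathcal G}^D}V,\,Y)$. Hence~(\ref{eq:thm5h-j-p}) holds exactly when ${\mathcal G}^D(\nabla^{{\mathcal G}^D}_X X+\hbox{grad}_{{\mathcal G}^D}V,\,Y)=0$ for all $Y\in\Gamma(\tau_{{\mathcal D}_{(c)}^{\perp}})$, which by the first paragraph is $Q_{(c)}(\nabla^{{\mathcal G}^D}_X X+\hbox{grad}_{{\mathcal G}^D}V)=0$, i.e.\ condition (i). This gives the claimed equivalence.

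The argument is routine once Proposition~\ref{propo5} is available; the only point requiring a moment's care is the restriction to ${\mathcal D}_{(c)}^{\perp}$. This is harmless precisely because both $\bigl({\rm d}^D(\flat_{{\mathcal G}^D}(X))\bigr)(X,Y)$ and ${\rm d}^D(\frac{1}{2}{\mathcal G}^D(X,X)+V)(Y)$ are $C^{\infty}(M)$-linear (tensorial) in $Y$, so evaluating them on sections of ${\mathcal D}_{(c)}^{\perp}$ is meaningful pointwise and the two conditions match term by term on that subbundle. I do not expect any genuine obstacle: the statement is the natural ``partially projected'' version of Theorem~\ref{maintheorem}.
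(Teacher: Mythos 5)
Your proof is correct and follows essentially the same route as the paper's: reduce (i) to $Q_{(c)}(\nabla^{{\mathcal G}^D}_X X+\hbox{grad}_{{\mathcal G}^D}V)=0$ via Proposition~\ref{propo5}, rewrite the hypothesis as ${\mathcal G}^D(\nabla^{{\mathcal G}^D}_X X,Y)={\mathcal G}^D(\nabla^{{\mathcal G}^D}_Y X,X)$ by the computation of Theorem~\ref{maintheorem}, and identify the Hamilton--Jacobi equation tested on $Y\in\Gamma(\tau_{{\mathcal D}_{(c)}^{\perp}})$ with ${\mathcal G}^D(\nabla^{{\mathcal G}^D}_X X+\hbox{grad}_{{\mathcal G}^D}V,Y)=0$. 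If anything, you are more explicit than the paper about the step $Q_{(c)}(Z)=0\Leftrightarrow{\mathcal G}^D(Z,Y)=0$ for all $Y\in\Gamma(\tau_{{\mathcal D}_{(c)}^{\perp}})$, which the paper leaves implicit.
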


\begin{proof}

Following similar arguments that in the proof of Theorem
\ref{maintheorem} it is easy to prove that the condition $i_X{\rm
d}^{D}(\flat_{{\mathcal G}^D}(X))(Y)= 0$ for all $Y\in \Gamma(\tau_{{\mathcal
D}_{(c)}^{\perp}})$ is equivalent to
\[
{\mathcal G}^D(\nabla^{{\mathcal G}^D}_X X, Y)={\mathcal G}^D(\nabla^{{\mathcal G}^D}_Y X, X)\; .
\]

Finally,  observe now  that
\begin{eqnarray*}
0&=&{\rm d}^{D}\left(\frac{1}{2}{\mathcal G}^D(X,X)+V\right)(Y)\\
&=&{{\mathcal G}^D}(\nabla^{{\mathcal G}^D}_Y X, X)+\rho_D(Y)(V)\\
&=&{{\mathcal G}^D}(\nabla^{{\mathcal G}^D}_X X, Y)+{{\mathcal G}^D}(\hbox{grad}_{{\mathcal G}^D}V, Y)\; ,
\end{eqnarray*}
The last expression is equivalent to the equation
\[
Q_{(c)}(\nabla^{{\mathcal G}^D}_X X+\hbox{grad}_{{\mathcal G}^D}V)=0\; .
\]
\end{proof}

Hence, we have just extended the notion of decoupling sections for mechanical control systems with nonzero potential
since Theorem~\ref{maintheorem-5} gives sufficient and necessary conditions to have a kinematic reduction of
such a mechanical control system.

\begin{remark} Note that the condition $i_X {\rm d}^D(\flat_{{\mathcal G}^D}(X))(Y)=0$ in the hypothesis of Theorem~\ref{maintheorem-5} is
${\mathcal C}^\infty(M)$-linear in $Y\in {\mathcal D}_{(c)}^\perp$.
Hence only for a basis of vector fields in ${\mathcal
D}_{(c)}^\perp$ the condition must be checked. However that same
condition is not ${\mathcal C}^\infty(M)$-linear in $X\in
\Gamma(\tau_{D})$.
\end{remark}

This theorem plays a key role to define in the future the geometric notion of motion planning for mechanical control systems
with non-zero potentials, only known so far for zero potentials \cite{bullolewis}. In that sense it will be useful to have some notion
of reparametrization of integral curves of sections $X\in
\Gamma(\tau_{D})$.

\begin{proposition}
 Let $X$ be a nonzero section in $\Gamma(\tau_{{\mathcal
D}_{(c)}})$ such that for all $Y\in \Gamma(\tau_{{\mathcal
D}_{(c)}^{\perp}})$, $i_X{\rm
d}^D(\flat_{{\mathcal G}^D}(X))(Y)= 0$. For a nonzero function $f$ in ${\mathcal C}^\infty(M)$, the section $fX\in \Gamma(\tau_{{\mathcal
D}_{(c)}})$ satisfies
$i_{fX}{\rm
d}^D(\flat_{{\mathcal G}^D}(fX))(Y)= 0$ for all $Y\in \Gamma(\tau_{{\mathcal
D}_{(c)}^{\perp}})$ if and only if $\rho_D(Y)(f)=0$ for all $Y\in \Gamma(\tau_{{\mathcal
D}_{(c)}^{\perp}})$.

\begin{proof}
 Let us rewrite the condition $i_{fX}{\rm
d}^D(\flat_{{\mathcal G}^D}(fX))(Y)= 0$ as follows
\begin{eqnarray*}
 0&=&{\mathcal G}^D(\nabla^{{\mathcal G}^D}_{fX} fX, Y)-{\mathcal G}^D(\nabla^{{\mathcal G}^D}_Y fX, fX)\\
&=& f^2\left({\mathcal G}^D(\nabla^{{\mathcal G}^D}_X X, Y)-{\mathcal G}^D(\nabla^{{\mathcal G}^D}_Y X, X)\right)
+ f\rho_{D}(X)(f)  {\mathcal G}^D(X,Y)\\
&-&f\rho_D(Y)(f) {\mathcal G}^D(X,X)=-f\rho_D(Y)(f) {\mathcal G}^D(X,X)
\end{eqnarray*}
because  $X\in \Gamma(\tau_{{\mathcal
D}_{(c)}})$. From here, the equivalence is straightforward.

\end{proof}
\label{prop:Cond-Contr-fX}
\end{proposition}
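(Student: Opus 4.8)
The statement to prove is Proposition~\ref{prop:Cond-Contr-fX}: that for a nonzero section $X \in \Gamma(\tau_{{\mathcal D}_{(c)}})$ satisfying the ``partial closedness'' condition $i_X {\rm d}^D(\flat_{{\mathcal G}^D}(X))(Y) = 0$ for all $Y \in \Gamma(\tau_{{\mathcal D}_{(c)}^\perp})$, and a nonzero $f \in {\mathcal C}^\infty(M)$, the rescaled section $fX$ satisfies the same condition if and only if $\rho_D(Y)(f) = 0$ for all such $Y$. The natural approach is to expand the quantity $i_{fX}{\rm d}^D(\flat_{{\mathcal G}^D}(fX))(Y)$ using the connection rewriting established in the proof of Theorem~\ref{maintheorem}, namely that $i_Z {\rm d}^D(\flat_{{\mathcal G}^D}(Z))(Y) = {\mathcal G}^D(\nabla^{{\mathcal G}^D}_Z Z, Y) - {\mathcal G}^D(\nabla^{{\mathcal G}^D}_Y Z, Z)$, and then use the Leibniz rule for the Levi-Civita connection on the algebroid together with $\flat_{{\mathcal G}^D}$ being ${\mathcal C}^\infty(M)$-linear.

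\textbf{Key steps.} First I would substitute $Z = fX$ into the formula ${\mathcal G}^D(\nabla^{{\mathcal G}^D}_{fX} fX, Y) - {\mathcal G}^D(\nabla^{{\mathcal G}^D}_Y fX, fX)$. Expanding the first term by the connection properties gives $\nabla^{{\mathcal G}^D}_{fX}(fX) = f^2 \nabla^{{\mathcal G}^D}_X X + f \rho_D(X)(f) X$, and the second gives ${\mathcal G}^D(\nabla^{{\mathcal G}^D}_Y(fX), fX) = f^2 {\mathcal G}^D(\nabla^{{\mathcal G}^D}_Y X, X) + f \rho_D(Y)(f){\mathcal G}^D(X, fX)/f$, which I would organize as $f^2 {\mathcal G}^D(\nabla^{{\mathcal G}^D}_Y X, X) + f\rho_D(Y)(f) {\mathcal G}^D(X,X)$. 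Collecting terms, the $f^2$-part is exactly $f^2$ times the original hypothesis expression, hence vanishes; the term $f\rho_D(X)(f){\mathcal G}^D(X,Y)$ vanishes because $X \in \Gamma(\tau_{{\mathcal D}_{(c)}})$ while $Y \in \Gamma(\tau_{{\mathcal D}_{(c)}^\perp})$ and these are ${\mathcal G}^D$-orthogonal by construction of the decomposition $D = {\mathcal D}_{(c)} \oplus {\mathcal D}_{(c)}^\perp$; what remains is $-f\rho_D(Y)(f){\mathcal G}^D(X,X)$. Finally, since $X$ is nonzero and ${\mathcal G}^D$ is a nondegenerate bundle metric, at points where $X$ does not vanish we have ${\mathcal G}^D(X,X) \neq 0$, so (after also using $f \neq 0$) the whole expression vanishes for all $Y \in \Gamma(\tau_{{\mathcal D}_{(c)}^\perp})$ if and only if $\rho_D(Y)(f) = 0$ for all such $Y$.

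\textbf{Main obstacle.} The computation itself is routine; the only genuinely delicate points are bookkeeping ones. One must be careful that the rewriting lemma from Theorem~\ref{maintheorem} was derived without assuming the closedness hypothesis, so it applies equally to $fX$; this is clear from inspection of that proof, which only uses the definition of ${\rm d}^D$ and the symmetry and metricity properties of $\nabla^{{\mathcal G}^D}$. The second subtlety is the nondegeneracy argument at the end: strictly, ${\mathcal G}^D(X,X)$ could vanish at isolated points even for a nowhere-zero $X$ if the metric were indefinite, but here ${\mathcal G}^D$ is a genuine (positive-definite) bundle metric, so ${\mathcal G}^D(X,X) > 0$ wherever $X \neq 0$, and the implication is immediate. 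I expect no real difficulty beyond keeping the Leibniz terms straight and invoking the orthogonality of ${\mathcal D}_{(c)}$ and ${\mathcal D}_{(c)}^\perp$ at the right moment.
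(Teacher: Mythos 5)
Your proposal is correct and follows essentially the same route as the paper: expand $i_{fX}{\rm d}^D(\flat_{{\mathcal G}^D}(fX))(Y)$ via the connection identity from Theorem~\ref{maintheorem}, apply the Leibniz rule, kill the $f^2$-term by hypothesis and the $f\rho_D(X)(f)\,{\mathcal G}^D(X,Y)$ term by orthogonality of ${\mathcal D}_{(c)}$ and ${\mathcal D}_{(c)}^{\perp}$, leaving $-f\rho_D(Y)(f)\,{\mathcal G}^D(X,X)$. Your closing remark on why ${\mathcal G}^D(X,X)$ is nonvanishing is in fact slightly more careful than the paper's ``the equivalence is straightforward.''
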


As a consequence of Proposition~\ref{prop:Cond-Contr-fX}, Theorem~\ref{maintheorem-5} can also be written for such a $fX$.

From Theorem~\ref{maintheorem-5}, we establish a connection between
decoupling sections in the sense defined in \cite{bullolewis} and the solutions to Hamilton-Jacobi differential equation.

\begin{corollary}
Let $(D, \lcf\; ,\rcf_D, \rho_D)$ be a skew-symmetric
algebroid and consider a control  problem determined by
$(D, {\mathcal G}^D, V, {\mathcal D}_{(c)})$ where $V$
satisfies that ${\rm grad}_{{\mathcal G}^D}V\in \Gamma(\tau_{{\mathcal
D}_{(c)}})$. If a section $X\in \Gamma(\tau_{{\mathcal
D}_{(c)}})$ is a decoupling section, then the following
conditions are equivalent:
\begin{enumerate}
\item \label{Cond1CorollaryDecoupling} $i_X{\rm
d}^D(\flat_{{\mathcal G}^D}(X)) )(Y)= 0$ for all $Y\in
\Gamma(\tau_{{\mathcal D}_{(c)}^{\perp}})$,
\item \label{eq:DecouplingCondition} $0={\mathcal G}^D(\nabla^{{\mathcal G}^D}_Y X,X)
\quad \mbox{for all } Y\in \Gamma(\tau_{{\mathcal D}_{(c)}^{\perp}})$,
\item \label{eq:DecouplingHJ}  $X$ satisfies \emph{the Hamilton-Jacobi differential equation}
\begin{equation*}\label{eq:h-j-pCorollary}
{\rm d}^{D}\left(\frac{1}{2}\mathcal G^D(X,X)+V\right) (Y)
= 0 \hbox{ for all } Y\in \Gamma(\tau_{{\mathcal D}_{(c)}^{\perp}})\; .
\end{equation*}
\end{enumerate}\label{Corol:DecouplingVZero}
\end{corollary}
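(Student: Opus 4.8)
The plan is to prove the three-way equivalence by showing $(\ref{Cond1CorollaryDecoupling})\Leftrightarrow(\ref{eq:DecouplingCondition})$ first, and then $(\ref{eq:DecouplingCondition})\Leftrightarrow(\ref{eq:DecouplingHJ})$, using the hypothesis that $X$ is a decoupling section together with ${\rm grad}_{{\mathcal G}^D}V\in \Gamma(\tau_{{\mathcal D}_{(c)}})$.

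For the first equivalence, I would reuse the computation already carried out in the proof of Theorem~\ref{maintheorem}: expanding ${\rm d}^D(\flat_{{\mathcal G}^D}(X))(X,Y)$ via the definition of ${\rm d}^D$ and rewriting everything in terms of $\nabla^{{\mathcal G}^D}$ (using symmetry and metricity of the Levi-Civita connection) collapses to ${\mathcal G}^D(\nabla^{{\mathcal G}^D}_X X, Y) - {\mathcal G}^D(\nabla^{{\mathcal G}^D}_Y X, X)$. So $i_X{\rm d}^D(\flat_{{\mathcal G}^D}(X))(Y)=0$ for all $Y\in\Gamma(\tau_{{\mathcal D}_{(c)}^\perp})$ is equivalent to ${\mathcal G}^D(\nabla^{{\mathcal G}^D}_X X, Y)={\mathcal G}^D(\nabla^{{\mathcal G}^D}_Y X, X)$ for all such $Y$. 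Now the point is that $X$ being a decoupling section means, by Corollary~\ref{Corol:Decoupling} (with the potential case of Proposition~\ref{prop:EquivKinematicReduction} in mind, or directly since ${\mathcal F}=-{\rm grad}_{{\mathcal G}^D}V$ and ${\rm grad}_{{\mathcal G}^D}V\in\Gamma(\tau_{{\mathcal D}_{(c)}})$ so $Q_{(c)}({\mathcal F})=0$), that $Q_{(c)}(\nabla^{{\mathcal G}^D}_X X)=0$; equivalently $\nabla^{{\mathcal G}^D}_X X\in\Gamma(\tau_{{\mathcal D}_{(c)}})$, so ${\mathcal G}^D(\nabla^{{\mathcal G}^D}_X X, Y)=0$ for every $Y\in\Gamma(\tau_{{\mathcal D}_{(c)}^\perp})$. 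Hence the left-hand side of the displayed equality vanishes automatically, and the equivalence $(\ref{Cond1CorollaryDecoupling})\Leftrightarrow(\ref{eq:DecouplingCondition})$ reduces to: ${\mathcal G}^D(\nabla^{{\mathcal G}^D}_Y X, X)=0$ for all $Y\in\Gamma(\tau_{{\mathcal D}_{(c)}^\perp})$, which is exactly $(\ref{eq:DecouplingCondition})$.

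For $(\ref{eq:DecouplingCondition})\Leftrightarrow(\ref{eq:DecouplingHJ})$, I would expand ${\rm d}^D(\tfrac12{\mathcal G}^D(X,X)+V)(Y)$ as in the proof of Theorem~\ref{maintheorem-5}: it equals $\tfrac12\rho_D(Y)({\mathcal G}^D(X,X))+\rho_D(Y)(V)={\mathcal G}^D(\nabla^{{\mathcal G}^D}_Y X, X)+{\mathcal G}^D({\rm grad}_{{\mathcal G}^D}V, Y)$. Since ${\rm grad}_{{\mathcal G}^D}V\in\Gamma(\tau_{{\mathcal D}_{(c)}})$, for $Y\in\Gamma(\tau_{{\mathcal D}_{(c)}^\perp})$ the second term ${\mathcal G}^D({\rm grad}_{{\mathcal G}^D}V, Y)$ vanishes by orthogonality of the decomposition $D={\mathcal D}_{(c)}\oplus{\mathcal D}_{(c)}^\perp$. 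Therefore ${\rm d}^D(\tfrac12{\mathcal G}^D(X,X)+V)(Y)={\mathcal G}^D(\nabla^{{\mathcal G}^D}_Y X, X)$ for all $Y\in\Gamma(\tau_{{\mathcal D}_{(c)}^\perp})$, and the equivalence of $(\ref{eq:DecouplingCondition})$ and $(\ref{eq:DecouplingHJ})$ is immediate.

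The only mild subtlety — and the step I would treat most carefully — is invoking correctly that a decoupling section forces $Q_{(c)}(\nabla^{{\mathcal G}^D}_X X)=0$; this is where the hypothesis ${\rm grad}_{{\mathcal G}^D}V\in\Gamma(\tau_{{\mathcal D}_{(c)}})$ is essential, since it guarantees $Q_{(c)}({\mathcal F})=Q_{(c)}(-{\rm grad}_{{\mathcal G}^D}V\circ\tau_D)=0$, allowing Corollary~\ref{Corol:Decoupling} (equivalently the rank-one case of Proposition~\ref{prop:EquivKinematicReduction}) to apply and deliver $Q_{(c)}(\nabla^{{\mathcal G}^D}_X X -{\mathcal F}\circ X)=Q_{(c)}(\nabla^{{\mathcal G}^D}_X X)=0$. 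Everything else is a direct rewriting using identities already established in the proofs of Theorems~\ref{maintheorem} and~\ref{maintheorem-5}, so no genuinely new obstacle arises; the proof is essentially a bookkeeping assembly of those earlier computations specialized to the orthogonal-splitting setting with $Y$ ranging over ${\mathcal D}_{(c)}^\perp$.
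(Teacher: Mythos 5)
Your proposal is correct and follows essentially the same route as the paper: establish $(i)\Leftrightarrow(ii)$ by rewriting $i_X{\rm d}^D(\flat_{{\mathcal G}^D}(X))(Y)$ as ${\mathcal G}^D(\nabla^{{\mathcal G}^D}_X X,Y)-{\mathcal G}^D(\nabla^{{\mathcal G}^D}_Y X,X)$ and killing the first term via Corollary~\ref{Corol:Decoupling}, then establish $(ii)\Leftrightarrow(iii)$ by expanding the Hamilton--Jacobi expression and using orthogonality together with ${\rm grad}_{{\mathcal G}^D}V\in\Gamma(\tau_{{\mathcal D}_{(c)}})$. If anything, you are slightly more careful than the paper in checking that $Q_{(c)}({\mathcal F})=0$ holds so that Corollary~\ref{Corol:Decoupling} actually applies.
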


\begin{proof}
The condition $i_X{\rm d}^D(\flat_{{\mathcal G}^D}(X))(Y)= 0$ for
all $Y\in \Gamma(\tau_{{\mathcal D}_{(c)}^{\perp}})$ can be rewritten as
\[{\mathcal G}^D(\nabla^{{\mathcal G}^D}_X X,Y)={\mathcal G}^D(\nabla^{{\mathcal G}^D}_Y X,X)\]
for all $Y\in \Gamma(\tau_{{\mathcal D}_{(c)}^{\perp}})$.

As $X$ is a decoupling section, both $X$ and
$\nabla^{{\mathcal G}^D}_X X\in \Gamma(\tau_{{\mathcal D}_{(c)}})$
by Corollary~\ref{Corol:Decoupling}. Hence the left-hand
side of the above equality is zero
and~\ref{Cond1CorollaryDecoupling} is equivalent
to~\ref{eq:DecouplingCondition}.

Condition~\ref{eq:h-j-pCorollary} can be rewritten as follows:
\begin{align*} 0&=\; {\rm d}^{D}\left(\frac{1}{2}\mathcal G^D(X,X)+V\right) (Y)
=\frac{1}{2}\rho_D(Y)\mathcal G^D(X,X)+\rho_D(Y)(V)\\&=\;
\mathcal G^D(\nabla^{\mathcal G^D}_YX,X)+\mathcal G^D({\rm
grad}_{{\mathcal G}^D}V,Y),
\end{align*}
for all $Y\in \Gamma(\tau_{{\mathcal D}_{(c)}^{\perp}})$. Hence the
equivalence between~\ref{eq:DecouplingCondition}
and~\ref{eq:h-j-pCorollary} is clear because of the
hypothesis for $V$.
\end{proof}

Note that the property of being a decoupling section is preserved by ${\mathcal C}^\infty(M)$-multiplication.
However, the condition $i_X{\rm
d}^D(\flat_{{\mathcal G}^D}(X))(Y)= 0$ for all $Y\in \Gamma(\tau_{{\mathcal
D}_{(c)}^{\perp}})$ is not ${\mathcal C}^\infty(M)$-linear on $X$.  Proposition~\ref{prop:Cond-Contr-fX} has already
characterized those functions $f$ such that $i_{fX}{\rm
d}^D(\flat_{{\mathcal G}^D}(fX))(Y)= 0$ for all $Y\in \Gamma(\tau_{{\mathcal
D}_{(c)}^{\perp}})$.

It might also be the case that a decoupling section
$X\in \Gamma(\tau_{{\mathcal D}_{(c)}})$ does not
satisfy~\ref{eq:DecouplingCondition} in
Corollary~\ref{Corol:DecouplingVZero}, but there might exist
functions $f\in {\mathcal C}^\infty(M)$ such that $fX$
satisfies~\ref{eq:DecouplingCondition}. Let $Y\in \Gamma(\tau_{{\mathcal
D}_{(c)}^{\perp}})$,

\begin{align}{\mathcal
G}^D(\nabla^{{\mathcal G}^D}_Y fX,fX)&=\; f{\mathcal
G}^D(\rho_D(Y)(f)X+f\nabla^{{\mathcal G}^D}_Y X,X) \nonumber \\
&=\;f\rho_D(Y)(f){\mathcal G}^D(X,X)+f^2{\mathcal G}^D(\nabla^{{\mathcal
G}^D}_Y X,X). \label{eq:DecouplingWithF}
\end{align}
Condition (ii) in Corollary~\ref{Corol:DecouplingVZero} would be
satisfied if the following partial differential equation for $f$ has
solutions:
\begin{equation*} f\rho_D(Y)(f){\mathcal G}^D(X,X)+f^2{\mathcal G}^D(\nabla^{{\mathcal
G}^D}_Y X,X)=0
\end{equation*}
for all $Y\in \Gamma(\tau_{{\mathcal D}_{(c)}^{\perp}})$. The chances to find a
solution depend on the particular examples under study, see Section~\ref{Sec:example}.

%

\subsection{Maximally reducible systems}

If $(M,{\mathfrak D},\overline{U})$ is a kinematic reduction of $(D,
{\mathcal G}^D, {\mathcal F}, {\mathcal D}_{(c)})$, then any
solution of~\eqref{drift} can be followed by a solution
of~\eqref{noholonoma-1}. In this section we consider when the
converse is also possible in such a way that we can talk about
``equivalence" of controlled trajectories as mentioned in the
introduction.

\begin{definition}[Maximal reducibility]
A mechanical control system  $(D,  {\mathcal G}^D,
{\mathcal F}, {\mathcal D}_{(c)})$ being ${\mathcal
D}_{(c)}$ a subbundle of $D$ is \emph{maximally reducible}
to a driftless system $(M, {\mathfrak D}, \overline{U})$ if
\begin{enumerate}
\item $(M, {\mathfrak
D}, \overline{U})$ is a kinematic reduction of $(D,  {\mathcal G}^D,
{\mathcal F}, {\mathcal D}_{(c)})$, and
\item for every solution $(\gamma(t), u(t))$ of Equations (\ref{noholonoma-1-1}) satisfying $\gamma(0)\in \mathfrak{D}_{(\tau_D \circ \gamma)(0)}$,
there exists a control $\overline{u}\in \overline{U}$ such that
$(\tau_D\circ \gamma(t), \overline{u}(t))$ is a solution of
Equation~\eqref{drift}.
    \end{enumerate}\label{def:MaxRed}
    \end{definition}

The characterization of maximally reducible mechanical control systems defined on skew-symmetric algebroids is given by the following result. This a a generalization of the notion of being maximally reducible systems proved in~\cite{bullolewis} on Riemannian manifolds.

\begin{theorem} Let $(D,  {\mathcal G}^D, {\mathcal F},
{\mathcal D}_{(c)})$ be a mechanical control system  such that 
${\mathcal D}_{(c)}$ has locally constant rank  and
${\mathcal F}=0$. This mechanical control system is
maximally reducible to a driftless system $(M, {\mathfrak
D}, \overline{U})$ if and only if the following two
conditions hold:
\begin{enumerate}
\item ${\mathfrak
D}={\mathcal D}_{(c)}$,
\item ${\mathcal D}_{(c)}$ is geodesically invariant, that is,
 ${\rm Sym}^{(\infty)}({\mathcal D}_{(c)})={\mathcal D}_{(c)}$,
\end{enumerate}
where ${\rm Sym}^{(\infty)}({\mathcal D}_{(c)})$ is the
smallest distribution containing ${\mathcal D}_{(c)}$ and
closed under the symmetric product $\langle \cdot \; \colon
\cdot \rangle_{{\mathcal G}^D}$.\label{thm:MaxRed}
\end{theorem}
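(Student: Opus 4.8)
The plan is to prove the two implications separately, relying on Theorem~\ref{Thm:GeodInv} — which says that geodesic invariance of a subbundle is equivalent both to the geodesic spray being tangent to it and to the subbundle being closed under the symmetric product $\symprod{\cdot}{\cdot}_{{\mathcal G}^D}$, so that geodesic invariance of ${\mathcal D}_{(c)}$ is the same as ${\rm Sym}^{(\infty)}({\mathcal D}_{(c)})={\mathcal D}_{(c)}$ — and on Proposition~\ref{prop:EquivKinematicReduction}, which applies here because ${\mathcal F}=0$ forces $Q_{(c)}({\mathcal F})=0$.

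For the sufficiency, I would assume ${\mathfrak D}={\mathcal D}_{(c)}$ and ${\rm Sym}^{(\infty)}({\mathcal D}_{(c)})={\mathcal D}_{(c)}$. Part~(1) of Definition~\ref{def:MaxRed} is then immediate: the generators of ${\mathfrak D}={\mathcal D}_{(c)}$ lie in $\Gamma(\tau_{{\mathcal D}_{(c)}})$, and geodesic invariance gives $\symprod{X}{Y}_{{\mathcal G}^D}\in\Gamma(\tau_{{\mathcal D}_{(c)}})$ for all $X,Y\in\Gamma(\tau_{{\mathcal D}_{(c)}})$ by Theorem~\ref{Thm:GeodInv}, so condition~(ii) of Proposition~\ref{prop:EquivKinematicReduction} holds and $(M,{\mathfrak D},\overline U)$ is a kinematic reduction. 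For part~(2), take a solution $(\gamma,u)$ of~\eqref{noholonoma-1-1} (with ${\mathcal F}=0$) such that $\gamma(0)\in{\mathfrak D}_{\sigma(0)}$, where $\sigma=\tau_D\circ\gamma$, and work in a local ${\mathcal G}^D$-orthogonal basis $\{X_a,X_\alpha\}$, $a=1,\dots,r$, $\alpha=r+1,\dots,n$, adapted to $D={\mathcal D}_{(c)}\oplus{\mathcal D}_{(c)}^\perp$, so ${\mathfrak D}={\mathcal D}_{(c)}=\operatorname{span}\{X_a\}$ is $\{y^\alpha=0\}$ in the induced coordinates $(x^i,y^a,y^\alpha)$ on $D$. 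Since $\nabla^{{\mathcal G}^D}_\gamma\gamma=\sum_l u^l\,Y_l\circ\sigma$ takes values in ${\mathcal D}_{(c)}$, its ${\mathcal D}_{(c)}^\perp$-components vanish, so the corresponding part of the equation of motion is $\dot y^\alpha+(\Gamma^\alpha_{ij}\circ\sigma)\,y^iy^j=0$, exactly of the form~\eqref{eq:geodesic}; geodesic invariance forces $\Gamma^\alpha_{ab}+\Gamma^\alpha_{ba}=0$, so (as in the proof of Theorem~\ref{Thm:GeodInv}) the term $\Gamma^\alpha_{ab}y^ay^b$ drops out and the equation becomes a linear homogeneous ODE in $(y^{r+1},\dots,y^n)$. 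Since $\gamma(0)\in{\mathfrak D}_{\sigma(0)}$ means $y^\alpha(0)=0$, uniqueness gives $y^\alpha\equiv 0$, i.e.\ $\gamma(t)\in{\mathfrak D}_{\sigma(t)}$ for all $t$; then $\dot\sigma=\rho_D(\gamma)=\sum_a y^a\,\rho_D(X_a)\circ\sigma$, so $\sigma$ together with the controls $\overline u^a:=y^a$ solves~\eqref{drift}, and part~(2) holds.

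For the necessity, I would begin from part~(1) of Definition~\ref{def:MaxRed}: $(M,{\mathfrak D},\overline U)$ being a kinematic reduction, Proposition~\ref{prop:EquivKinematicReduction} already gives ${\mathfrak D}\subseteq{\mathcal D}_{(c)}$ and $\symprod{X}{Y}_{{\mathcal G}^D}\in\Gamma(\tau_{{\mathcal D}_{(c)}})$ for all $X,Y\in\Gamma(\tau_{{\mathfrak D}})$. The remaining inclusion ${\mathcal D}_{(c)}\subseteq{\mathfrak D}$ is where part~(2) is needed: supposing ${\mathfrak D}_{x_0}\subsetneq({\mathcal D}_{(c)})_{x_0}$ at some $x_0$, I would choose a section $\hat w\in\Gamma(\tau_{{\mathcal D}_{(c)}})$ with $\hat w(x_0)\notin{\mathfrak D}_{x_0}$ and the solution $\gamma$ of~\eqref{noholonoma-1-1} with $\gamma(0)=0_{x_0}\in{\mathfrak D}_{x_0}$ and controls chosen so that $\sum_l u^l\,Y_l\circ\sigma=\hat w\circ\sigma$. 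In adapted coordinates the Christoffel quadratic terms vanish at $t=0$ because $\gamma(0)=0$, hence $\dot y^C(0)=\hat w^C(x_0)$ and, as $\hat w(x_0)\notin{\mathfrak D}_{x_0}$, some $\dot y^{\alpha}(0)\neq 0$; therefore $\gamma(t)\notin{\mathfrak D}_{\sigma(t)}$ for small $t>0$, which contradicts part~(2) of Definition~\ref{def:MaxRed}. So ${\mathfrak D}={\mathcal D}_{(c)}$, and then $\symprod{X}{Y}_{{\mathcal G}^D}\in\Gamma(\tau_{{\mathcal D}_{(c)}})$ for all $X,Y\in\Gamma(\tau_{{\mathcal D}_{(c)}})$ says that ${\mathcal D}_{(c)}$ is closed under the symmetric product, i.e.\ ${\rm Sym}^{(\infty)}({\mathcal D}_{(c)})={\mathcal D}_{(c)}$, which by Theorem~\ref{Thm:GeodInv} is geodesic invariance of ${\mathcal D}_{(c)}$.

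The hard part will be part~(2) of the sufficiency: a controlled trajectory is not a geodesic, so geodesic invariance of ${\mathcal D}_{(c)}$ cannot be used on it directly. What makes the argument go through — and what I would emphasise — is that the control (and, in the more general~\eqref{noholonoma-1}, the force) takes values in ${\mathcal D}_{(c)}$, hence never feeds the ${\mathcal D}_{(c)}^\perp$-components of the equations, which, once the identity $\Gamma^\alpha_{ab}+\Gamma^\alpha_{ba}=0$ kills the offending quadratic term, form a linear homogeneous system with zero initial data. I would also make explicit the reading of ``$(\tau_D\circ\gamma,\overline u)$ solves~\eqref{drift}'' in Definition~\ref{def:MaxRed}: as in the Riemannian case of~\cite{bullolewis}, the content used is that $\gamma$ remains in ${\mathfrak D}$ and equals the lift $\sum_\alpha\overline u^\alpha X_\alpha\circ\sigma$ of its base curve.
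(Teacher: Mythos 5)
Your proof is correct, but it is organised rather differently from the paper's, and it is worth recording where the two routes diverge. For the sufficiency, the paper verifies the kinematic-reduction property directly (admissibility of the lifted curve plus $\nabla^{{\mathcal G}^D}_{\gamma}\gamma\in{\mathcal D}_{(c)}$ via Theorem~\ref{Thm:GeodInv}), and then handles condition (ii) of Definition~\ref{def:MaxRed} by a tangency argument on $D$: the time-dependent field $\xi_{{\mathcal G}^D}+X^{\bf v}$ is tangent to ${\mathfrak D}$ because $\xi_{{\mathcal G}^D}$ is (geodesic invariance) and $X^{\bf v}$ is (as $X\in\Gamma(\tau_{\mathfrak D})$), so its integral curves stay in ${\mathfrak D}$. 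You instead route the first half through Proposition~\ref{prop:EquivKinematicReduction} (ii)$\Rightarrow$(iii), and replace the tangency argument by the coordinate ODE for the $y^\alpha$-components, reusing the identity $\Gamma^\alpha_{ab}+\Gamma^\alpha_{ba}=0$ from the proof of Theorem~\ref{Thm:GeodInv}; this is equivalent in content (your system is not literally ``linear homogeneous'' because of the residual $\Gamma^\alpha_{\delta\beta}y^\delta y^\beta$ term, but $y^\alpha\equiv 0$ is still a solution of the coupled system and uniqueness closes the argument, exactly as in equation~\eqref{eq:geodesic}) and is arguably cleaner about the time-dependence of the controls. For the necessity, the divergence is larger: the paper first proves geodesic invariance of ${\mathfrak D}$ from geodesics launched inside it, obtains ${\mathcal D}_{(c)}\subseteq{\mathfrak D}$ by peeling $Y_s^{\bf v}$ off the tangent field $\xi_{{\mathcal G}^D}+Y_s^{\bf v}$, and obtains ${\mathfrak D}\subseteq{\mathcal D}_{(c)}$ via the auxiliary curve $\gamma(t)=t\tilde\gamma(t^2/2)$ whose covariant acceleration at $t=0$ realises an arbitrary $a\in{\mathfrak D}$. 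You get ${\mathfrak D}\subseteq{\mathcal D}_{(c)}$ for free from Proposition~\ref{prop:EquivKinematicReduction} (iii)$\Rightarrow$(ii), prove ${\mathcal D}_{(c)}\subseteq{\mathfrak D}$ by the mirror-image contradiction (a trajectory from the zero vector with force transverse to ${\mathfrak D}$ leaves ${\mathfrak D}$ instantly, violating Definition~\ref{def:MaxRed}(ii)), and only then deduce geodesic invariance from closure of ${\mathcal D}_{(c)}={\mathfrak D}$ under $\symprod{\cdot}{\cdot}_{{\mathcal G}^D}$ via Theorem~\ref{Thm:GeodInv}; note that passing from closure on generators to closure on all sections uses that $\symprod{fX}{Y}_{{\mathcal G}^D}=f\symprod{X}{Y}_{{\mathcal G}^D}+(\rho_D(Y)f)X$ stays in ${\mathfrak D}$, which you should state. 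What your route buys is economy (Proposition~\ref{prop:EquivKinematicReduction} does double duty and the curve $t\tilde\gamma(t^2/2)$ is not needed); what the paper's buys is a self-contained dynamical picture on $D$ in terms of sprays and vertical lifts. Finally, you are right that both arguments need the reading of ``$(\tau_D\circ\gamma,\overline u)$ solves~\eqref{drift}'' under which $\gamma$ itself equals $\sum_\alpha\overline u^\alpha X_\alpha\circ\sigma$ (hence stays in ${\mathfrak D}$); the paper uses this tacitly, and making it explicit, as you do, is an improvement rather than a gap.
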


\begin{proof}
It follows the same lines as the proof of Theorem~8.27
in~\cite{bullolewis}.

Suppose that the rank of $\mathfrak D \subseteq D$ is $k'$. 
Note that $X\in \Gamma(\tau_{D})$ is 
a section of $\mathfrak D$ if and only if the vertical lift of $X$ restricted to $\mathfrak D$ 
 is tangent to $\mathfrak D$.  

Assume that $(D,  {\mathcal G}^D, 0, {\mathcal D}_{(c)})$
is maximally reducible to a driftless system $(M, {\mathfrak D}, \overline{U})$. 

If $\gamma: I \subseteq \mathbb{R} \rightarrow D$ is a geodesic for the bundle metric ${\mathcal G}^D$ 
 with initial condition  $\gamma(0) \in {\mathfrak D}_{\sigma(0)}$, then $\gamma$ is a solution of 
Equations \eqref{noholonoma-1-1} with zero controls. Thus, by hypothesis, there exist controls
$\bar{u}^\alpha: I \rightarrow \bar{U}$ such that $\gamma(t) = \displaystyle \sum_{\alpha = 1}^{k'} \bar{u}^\alpha(t) X_\alpha(\sigma(t))$, where $\sigma = \tau_D \circ \gamma: I \subseteq
\mathbb{R} \rightarrow M$.
Therefore, $\gamma(I) \subseteq {\mathfrak D}$ and we can conclude that $\mathfrak D$ is geodesically invariant.

It remains to prove that $\mathfrak D= {\mathcal D}_{(c)}$.

Remember that ${\mathcal D}_{(c)}={\rm span}\{Y_1,\dots, Y_k\}$. Then it is sufficient to prove that $Y_s$
is a section of $\mathfrak D$ for $s=1,\dots,k$ in order to obtain $ {\mathcal D}_{(c)}\subset \mathfrak D$. Having in mind that $\xi_{{\mathcal G}^{D}}$ is the geodesic spray on $D$ and 
$Y_s^{\bf v}$ is the vertical lift of $Y_s \in \Gamma(\tau_{{\mathcal D}_{(c)}})$ to ${\mathcal D}_{(c)}$ , it is clear that the integral curves of $\xi_{{\mathcal G}^{D}} + Y_s^{\bf v}$ are solutions 
of Equations \eqref{noholonoma-1-1}. By assumption, $\xi_{{\mathcal G}^{D}} + Y_s^{\bf v}$ restricted to 
 $\mathfrak D$  is tangent to $\mathfrak D$. On the other hand, $\mathfrak D$ is geodesically invariant, what implies that $\xi_{{\mathcal G}^{D}}$ restricted to $\mathfrak D$ is tangent to $\mathfrak D$.
Hence, $(Y_s^{\bf v})_{| \mathfrak D}$ is tangent to $\mathfrak D$ and $Y_s$ is a section of $\mathfrak D$.

Next, we will see that $\mathfrak D \subseteq {\mathcal D}_{(c)}$.
Let $a$ be a vector in $\mathfrak D$. We consider the geodesic $\tilde{\gamma}: \tilde{I} \subseteq \mathbb{R} \rightarrow
D$ with initial condition $\tilde{\gamma}(0) = a$. Hence, $\tilde{\gamma}(\tilde{I}) \subseteq \mathfrak D$ because  $\mathfrak D$ is geodesically invariant. Now, take the curve $\gamma: I \subseteq \mathbb{R} \rightarrow \mathfrak D$ given by
\[
\gamma(t) = t \tilde{\gamma}\left(\frac{t^2}{2}\right), \; \; \; \forall t \in I.
\]
Note that $\tau_D\circ \gamma= \tau_D \circ \tilde{\gamma} \circ \tau$, where $\tau\colon \mathbb{R} \rightarrow \mathbb{R}$ is the map defined by $\tau(t)=t^2/2$, and $\gamma(0) = 0$. From (\ref{3.0}) it is straightforward to prove that
\[
\left(\nabla^{{\mathcal G}^D}_{\gamma(0)}\gamma\right)(0) = a .
\]
On the other hand, define a curve  $\tilde{\sigma} $ on $M$ given by $\tilde{\sigma} = \tau_D \circ \tilde{\gamma}$ such that
\[
\gamma(t) = \displaystyle \sum_{\alpha = 1}^{k'} \bar{u}^{\alpha}(t) X_{\alpha}((\tau_D \circ \gamma)(t)) =  \displaystyle \sum_{\alpha = 1}^{k'} \bar{u}^{\alpha}(t) X_{\alpha}((\tilde{\sigma}\circ \tau)(t)).
\]
Then, it is clear that 
\[
\displaystyle \sum_{\alpha = 1}^{k'} \bar{u}^{\alpha}(t) \rho_{D}(X_{\alpha})((\tilde{\sigma} \circ \tau)(t)) = \rho_D(\gamma(t)) = \frac{d}{dt}(\tilde{\sigma} \circ \tau) = \frac{d}{dt}(\tau_D \circ \gamma),
\]
what implies that $(\tilde{\sigma} \circ \tau, \bar{u})$ is a solution of Equations~\eqref{drift}. Therefore, there exist controls $u$ such that $(\gamma, u)$ is a solution of Equations (\ref{noholonoma-1-1}). In particular,
\[
a = \nabla^{{\mathcal G}^{D}}_{\gamma(0)}\gamma(0) \in {\mathcal D}_{(c)}.
\]

The other implication is proved as follows: First, we prove that $(M,\mathfrak{D},\overline{U})$ is a kinematic reduction of 
$(D,  {\mathcal G}^D, {\mathcal F},
{\mathcal D}_{(c)})$. Let $\sigma\colon I \subseteq \mathbb{R} \rightarrow M$ be a curve such that it satisfies  $\dot{\sigma}(t)=\sum_{\alpha=1}^{k'}\overline{u}^\alpha(t)\rho_D(X_\alpha)(\sigma(t))$, i.e. Equation~\eqref{drift}. Take $\gamma(t)= \sum_{\alpha=1}^{k'}\overline{u}^\alpha(t)X_\alpha (\sigma(t))$ and note that $\gamma(I)\subseteq \mathfrak{D}$ and $\tau_D \circ \gamma=\sigma$. Moreover,
\begin{equation*}
\ds{\frac{{\rm d}}{{\rm d}t} (\tau_D \circ \gamma)=\dot{\sigma}=\sum_{\alpha=1}^{k'}\overline{u}^\alpha(t)\rho_D(X_\alpha)(\sigma(t))=\rho_D(\gamma(t)).}
\end{equation*}
Thus $\gamma$ is $\rho_D$-admissible. 

Now we have to prove that $\nabla^{{\mathcal G}^D}_{\gamma(t)} \gamma(t)\in {\mathcal D}_{(c)}(\sigma(t))$. By hypothesis, $\mathfrak{D}={\mathcal D}_{(c)}$ and ${\mathcal D}_{(c)}$ is geodesically invariant. Then by Theorem~\ref{Thm:GeodInv} we conclude that $\nabla^{{\mathcal G}^D}_{\gamma(t)} \gamma(t)\in {\mathcal D}_{(c)}(\sigma(t))$ for all $t\in I$. Hence, there exist controls such that $\gamma$ is a solution of Equation~\eqref{noholonoma-1-1}.

It remains to prove condition $(ii)$ in Definition~\ref{def:MaxRed}. Let $(\gamma,u)\colon I \rightarrow D\times U $ be a solution to Equation~\eqref{drift}, then $Q_{(c)}(\nabla^{{\mathcal G}^D}_{\gamma(t)} \gamma(t))=0$, that is, 
\begin{equation*}
\nabla^{{\mathcal G}^D}_{\gamma(t)} \gamma(t)\in {\mathcal D}_{(c)}(\sigma(t))=\mathfrak{D}(\sigma(t))
\end{equation*}
where $\sigma=\tau_D \circ \gamma$.

By assumption $\gamma(0)\in \mathfrak{D}(\sigma(0))$ and $\gamma$ is an integral curve of $\xi_{{\mathcal G}^D}+X^{\bf v}$, where $\xi_{{\mathcal G}^D}$ is the geodesic spray associated with ${\mathcal G}^D$ and $X\in \Gamma(\tau_{\mathfrak{D}})$.

Note that $X^{\bf v}_{\left.\right|_{\mathfrak{D}}}$ is tangent to $\mathfrak{D}$ and $(\xi_{{\mathcal G}^D})_{\left.\right|_{\mathfrak{D}}}$ is also tangent to $\mathfrak{D}$ because $\mathfrak{D}$ is geodesically invariant. Hence, if $\gamma(0)\in \mathfrak{D}(\sigma(0))$, then the integral curve of $\xi_{{\mathcal G}^D}+X^{\bf v}$ with initial condition $\gamma (0)$ is entirely contained in $\mathfrak{D}$, that is, $\gamma(t)\in \mathfrak{D}(\sigma(t))$ for all $t\in I$. 

Then there exist $\overline{u}\colon I \rightarrow \overline{U}$ such that $\gamma(t)=\sum_{\alpha=1}^{k'} \overline{u}^\alpha(t) X_\alpha(\sigma(t))$ and \begin{equation*}\rho_D(\gamma(t))= \sum_{\alpha=1}^{k'} \overline{u}^\alpha(t) \rho_D(X_\alpha)(\sigma(t)).\end{equation*} Using the fact that $\gamma$ is $\rho_D$-admissible it immediately follows that $\dot{\sigma}(t)=\sum_{\alpha=1}^{k'} \overline{u}^\alpha(t) \rho_D(X_\alpha)(\sigma(t))$ for all $t\in I$.
\end{proof}

Let us provide some results to find decoupling sections for the
mechanical control systems under consideration.

\begin{corollary} If a mechanical control system  $(D,  {\mathcal G}^D, {\mathcal F},
{\mathcal D}_{(c)})$ with ${\mathcal F}=0$ is
maximally reducible, then all control sections are
decoupling.
\end{corollary}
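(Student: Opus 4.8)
The plan is to derive the corollary directly from the structural description of maximally reducible systems together with the rank-one criterion for decoupling already established. Since here $\mathcal{F}=0$, the ``only if'' direction of Theorem~\ref{thm:MaxRed} (whose standing hypothesis that $\mathcal{D}_{(c)}$ has locally constant rank we keep) guarantees that a maximally reducible mechanical control system must have $\mathfrak{D}=\mathcal{D}_{(c)}$ and, crucially, that $\mathcal{D}_{(c)}$ is geodesically invariant, i.e. ${\rm Sym}^{(\infty)}(\mathcal{D}_{(c)})=\mathcal{D}_{(c)}$; in particular $\mathcal{D}_{(c)}$ is closed under the symmetric product $\langle\,\cdot:\cdot\,\rangle_{\mathcal{G}^D}$.

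First I would fix an arbitrary control section $Y\in\Gamma(\tau_{\mathcal{D}_{(c)}})$; by definition of $Q_{(c)}$ as the orthogonal projector onto $\mathcal{D}_{(c)}^\perp$ we have $Q_{(c)}(Y)=0$. Next, applying the equivalence (i)$\Leftrightarrow$(iii) of Theorem~\ref{Thm:GeodInv} to the geodesically invariant subbundle $\mathcal{D}_{(c)}$ yields $\langle Y:Y\rangle_{\mathcal{G}^D}\in\Gamma(\tau_{\mathcal{D}_{(c)}})$; since $\langle Y:Y\rangle_{\mathcal{G}^D}=2\nabla^{\mathcal{G}^D}_YY$ this gives $\nabla^{\mathcal{G}^D}_YY\in\Gamma(\tau_{\mathcal{D}_{(c)}})$, that is $Q_{(c)}(\nabla^{\mathcal{G}^D}_YY)=0$. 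To be sure that the whole rank-one distribution ${\rm span}\{Y\}$ — and not merely the representative section $Y$ — satisfies the hypotheses of Corollary~\ref{Corol:Decoupling} (equivalently, condition (i) of Proposition~\ref{prop:EquivKinematicReduction} in the case $k'=1$), I would also note that for every $f\in C^\infty(M)$ the Leibniz rule $\nabla^{\mathcal{G}^D}_{fY}(fY)=f^2\nabla^{\mathcal{G}^D}_YY+f\,\rho_D(Y)(f)\,Y$ exhibits $\nabla^{\mathcal{G}^D}_{fY}(fY)$ as a section of $\mathcal{D}_{(c)}$, so $Q_{(c)}(\nabla^{\mathcal{G}^D}_{fY}(fY))=0$ as well.

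Then I would invoke Corollary~\ref{Corol:Decoupling} with $\mathcal{F}=0$ (so the hypothesis $Q_{(c)}(\mathcal{F})=0$ is trivially met) for the rank-one driftless system induced by $\mathfrak{D}'={\rm span}\{Y\}$: its condition (i) is precisely $Q_{(c)}(\nabla^{\mathcal{G}^D}_YY)=0$, which has just been verified, so condition (ii) holds, i.e. $(M,{\rm span}\{Y\},\overline{U})$ is a rank-one kinematic reduction of $(D,\mathcal{G}^D,0,\mathcal{D}_{(c)})$. By the definition of a decoupling section this is exactly the assertion that $Y$ is a decoupling section, and since $Y$ was an arbitrary control section the corollary follows.

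No step is genuinely difficult: the argument is an assembly of Theorem~\ref{thm:MaxRed}, Theorem~\ref{Thm:GeodInv} and Corollary~\ref{Corol:Decoupling}. The only place that demands a moment's care is the point flagged above, namely ensuring the decoupling criterion is verified on the entire line of sections $fY$ spanning $\mathfrak{D}'$ rather than on $Y$ alone, so that the final conclusion genuinely concerns ${\rm span}\{Y\}$ as a kinematic reduction.
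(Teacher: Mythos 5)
Your proposal is correct and follows essentially the same route as the paper, whose own proof simply cites Theorem~\ref{thm:MaxRed} and the definition of decoupling sections; you have merely made the implicit steps explicit, passing through the geodesic invariance of ${\mathcal D}_{(c)}$, the symmetric-product characterization of Theorem~\ref{Thm:GeodInv}, and the rank-one criterion of Corollary~\ref{Corol:Decoupling}. Your extra care in verifying the condition on the whole line of sections $fY$ (via the Leibniz rule) is a welcome precision that the paper leaves tacit, but it does not constitute a different argument.
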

\begin{proof} This follows immediately from Theorem~\ref{thm:MaxRed}
and the definition of decoupling sections.
\end{proof}

The converse is not necessarily true. But it is true when ${\mathcal D}_{(c)}$ has locally constant rank
equal to one.

\begin{corollary} Let $(D,  {\mathcal G}^D, {\mathcal F},
{\mathcal D}_{(c)})$ be a mechanical control system where
${\mathcal D}_{(c)}$ has rank one and ${\mathcal F}=0$. The following statements are equivalent:
\begin{enumerate}
\item There exist decoupling sections for  $(D,  {\mathcal G}^D, {\mathcal F},
{\mathcal D}_{(c)})$.
\item The mechanical control system  $(D,  {\mathcal G}^D, {\mathcal F},
{\mathcal D}_{(c)})$ is maximally
reducible to a driftless system defined by decoupling sections.
\end{enumerate}
\end{corollary}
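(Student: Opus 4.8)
The plan is to read off both implications from results already in hand, chiefly Theorem~\ref{thm:MaxRed}, Theorem~\ref{Thm:GeodInv} and Corollary~\ref{Corol:Decoupling}. The implication $(ii)\Rightarrow(i)$ is immediate: a driftless system ``defined by decoupling sections'' carries, by construction, sections that are decoupling for $(D,{\mathcal G}^D,{\mathcal F},{\mathcal D}_{(c)})$, so such sections exist. (Alternatively, one may invoke the preceding corollary: when ${\mathcal F}=0$ and the system is maximally reducible, every control section is decoupling, and a rank-one ${\mathcal D}_{(c)}$ admits nonzero control sections.)

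For $(i)\Rightarrow(ii)$ I would begin with a decoupling section $X$. Since ${\mathfrak D}_0=\hbox{span }\{X\}$ is then a rank-one subbundle of $D$, the section $X$ vanishes nowhere, and Corollary~\ref{Corol:Decoupling} (the case $k'=1$ of Proposition~\ref{prop:EquivKinematicReduction}), applied with ${\mathcal F}=0$, tells us that $X\in\Gamma(\tau_{{\mathcal D}_{(c)}})$ and $Q_{(c)}(\nabla^{{\mathcal G}^D}_X X)=0$, i.e.\ $\langle X:X\rangle_{{\mathcal G}^D}\in\Gamma(\tau_{{\mathcal D}_{(c)}})$. The key observation, and essentially the only point where rank one is really used, is that a nowhere-vanishing section of the rank-one bundle ${\mathcal D}_{(c)}$ must span it: hence ${\mathcal D}_{(c)}=\hbox{span }\{X\}={\mathfrak D}_0$, the bundle ${\mathcal D}_{(c)}$ is trivial with global frame $X$, and $\Gamma(\tau_{{\mathcal D}_{(c)}})=\{\,fX\mid f\in C^\infty(M)\,\}$.

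The heart of the argument is then to establish that ${\mathcal D}_{(c)}$ is geodesically invariant, using criterion $(iii)$ of Theorem~\ref{Thm:GeodInv}: it suffices to check $\langle Y:Z\rangle_{{\mathcal G}^D}\in\Gamma(\tau_{{\mathcal D}_{(c)}})$ for all $Y,Z\in\Gamma(\tau_{{\mathcal D}_{(c)}})$. Writing $Y=fX$, $Z=gX$ and using that $\nabla^{{\mathcal G}^D}$ is $C^\infty(M)$-linear in its first argument and satisfies the Leibniz rule (with anchor $\rho_D$) in its second, a short computation gives
\[
\langle fX:gX\rangle_{{\mathcal G}^D}=\bigl(f\,\rho_D(X)(g)+g\,\rho_D(X)(f)\bigr)X+fg\,\langle X:X\rangle_{{\mathcal G}^D},
\]
which lies in $\Gamma(\tau_{{\mathcal D}_{(c)}})$ since both $X$ and $\langle X:X\rangle_{{\mathcal G}^D}$ do. Finally I would invoke Theorem~\ref{thm:MaxRed} with ${\mathfrak D}={\mathcal D}_{(c)}=\hbox{span }\{X\}$: the hypotheses hold (${\mathcal F}=0$, ${\mathcal D}_{(c)}$ of constant rank one, ${\mathfrak D}={\mathcal D}_{(c)}$, and ${\mathcal D}_{(c)}$ geodesically invariant), so $(D,{\mathcal G}^D,0,{\mathcal D}_{(c)})$ is maximally reducible to the driftless system $(M,{\mathfrak D},\overline{U})$, which is defined by the decoupling section $X$. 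The main (minor) obstacle is precisely the rank-one identification ${\mathcal D}_{(c)}=\hbox{span }\{X\}$: it is what makes the displayed computation close up inside ${\mathcal D}_{(c)}$, and its failure for higher-rank control bundles is why the converse of the preceding corollary need not hold in general.
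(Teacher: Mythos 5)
Your proof is correct and follows the route the paper intends: the paper's own proof of this corollary is just the one-line remark that it is straightforward from Theorem~\ref{thm:MaxRed}, and your argument is the fleshed-out version of that, supplying the one genuinely nontrivial step --- that a rank-one ${\mathcal D}_{(c)}$ spanned by a (nowhere-vanishing) decoupling section is geodesically invariant, obtained via criterion (iii) of Theorem~\ref{Thm:GeodInv} together with the symmetric-product computation for $fX$ and $gX$ and the fact (from Proposition~\ref{prop:EquivKinematicReduction}) that both $X$ and $\nabla^{{\mathcal G}^D}_X X$ lie in $\Gamma(\tau_{{\mathcal D}_{(c)}})$. There is nothing to correct.
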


\begin{proof} It is straightforward from Theorem~\ref{thm:MaxRed}.
\end{proof}

\subsection{Examples}\label{Sec:example}

The first two examples are specific cases of
Example~\ref{example:DequalTM}. As is shown, the example in
Section~\ref{Sec:ExPlanarBody} is not maximally reducible
but admits rank-one kinematic reductions. Particular
solutions to Hamilton-Jacobi differential equation are
found using Theorem~\ref{maintheorem-5} and
Corollary~\ref{Corol:DecouplingVZero}. In
Section~\ref{Sec:RoboticLeg} we compute particular
solutions to Hamilton-Jacobi differential equation for a
maximally reducible system. The snakeboard described in
Section~\ref{Sec:Snakeboard} has nonholonomic constraints.
Hence the use of skew-symmetric algebroids to find
solutions to Hamilton-Jacobi differential equations is very
natural.

\subsubsection{Planar rigid body with a variable-direction
thruster} \label{Sec:ExPlanarBody}

We refer to~\cite[Section 7.4.2]{bullolewis} for a detailed description of
the system. The configuration space is
$M=\mathbb{R}^2\times \mathbb{S}^1$. Consider local
coordinates $(x,y,\theta)$. Here the distribution $D$ is
the entire tangent space $TM$ and $V=0$. We are in the case
explained in Example~\ref{example:DequalTM}.

The riemannian metric is
\[{\mathcal G}^{TM}=J {\rm d} \theta \otimes {\rm d} \theta + m({\rm
d} x \otimes {\rm d}x + {\rm d} y \otimes {\rm d}y).\]

The control vector fields in ${\mathcal D}_{(c)}$ are
\[Y_1=\frac{\cos \theta}{m} \, \frac{\partial}{\partial x} +\frac{\sin \theta}{m} \, \frac{\partial}{\partial
y}, \quad Y_2=-\frac{\sin \theta}{m} \, \frac{\partial}{\partial
x}+\frac{\cos \theta}{m} \, \frac{\partial}{\partial y}
-\frac{h}{J}\, \frac{\partial}{\partial \theta}.\] In this example
\begin{equation*}{\mathcal D}_{(c)}^{\perp}={\rm span}_{{\mathcal
C}^\infty}\left\{Y_3\colon =\ds{-\sin
\theta\frac{\partial}{\partial x}+\cos
\theta\frac{\partial}{\partial y}+\frac{1}{h}\,
\frac{\partial}{\partial \theta}}\right\}.\end{equation*}
Note that $\{Y_1,Y_2,Y_3\}$ is a ${\mathcal
G}^{TM}$-orthogonal basis. The skew-symmetric algebroid
structure is defined as follows $D=TM$, $\lcf \cdot,
\rcf_{TM}$ is the usual Lie bracket and $\rho_{TM}={\rm
Id}_{TM}$. In our adapted basis we have
\begin{align*}
\lcf Y_1,Y_2\rcf_{TM}=[Y_1,Y_2]&=\;
\ds{\frac{h}{J+mh^2}Y_2+\frac{h^3}{(J+mh^2)J}Y_3},\\
\lcf Y_1,Y_3\rcf_{TM}=[Y_1,Y_3]&=\;
\ds{-\frac{J}{h(J+mh^2)}Y_2-\frac{h}{J+mh^2}Y_3},\\
\lcf Y_2,Y_3\rcf_{TM}=[Y_2,Y_3]&=\; \ds{\frac{mh^2+J}{hJ}Y_1}.
\end{align*}
The non-zero Christoffel symbols
for the associated Levi-Civita $\nabla^{{\mathcal
G}^{TM}}$ connection are
\begin{equation*}\begin{array}{lll}
\Gamma^1_{22}=\ds{\frac{h}{J}}, &
\Gamma^1_{23}=\ds{\frac{mh}{J}},&
\Gamma^1_{32}=-\ds{\frac{1}{h}},
\\ \Gamma^1_{33}=-\ds{\frac{m}{h}},& \Gamma^2_{21}=-\ds{\frac{h}{J+mh^2}}, &\Gamma^2_{31}=\ds{\frac{J}{h(J+mh^2)}},\\
\Gamma^3_{21}=-\ds{\frac{h^3}{J(J+mh^2)}},&
\Gamma^3_{31}=\ds{\frac{h}{J+mh^2}}.
&\end{array}\end{equation*} Let us compute the symmetric
products of the control vector fields
\begin{eqnarray*}
<Y_1\colon Y_1>_{{\mathcal G}^D}&=0, & <Y_1\colon Y_2>_{{\mathcal
G}^D}=\Gamma^2_{21}Y_2+\Gamma^3_{21}Y_3,\\
<Y_2\colon Y_2>_{{\mathcal G}^D} &=2\Gamma^1_{22}Y_1.&
\end{eqnarray*}
By Corollary~\ref{Corol:Decoupling}, $Y_1$ and $Y_2$ are
decoupling vector fields. However, ${\mathcal D}_{(c)}={\rm
span}\{Y_1, Y_2\}$ is not geodesically invariant because
${\rm Sym}^{(1)}{\mathcal D}_{(c)} \nsubseteq {\mathcal
D}_{(c)}$. Then according to Theorem~\ref{thm:MaxRed} the
mechanical control system is not maximally reducible to a
driftless control system. As the involutive closure of the
decoupling vector fields $Y_1$ and $Y_2$ has maximum rank,
the system is kinematic controllable, see
Definition~\ref{dfn:KinControl}. Then the motion planning
is feasible.

Let us see if $Y_1$ and $Y_2$ satisfy
condition~\ref{eq:DecouplingCondition} in
Corollary~\ref{Corol:DecouplingVZero}, that is,
$0={\mathcal G}^{TM}(\nabla^{{\mathcal G}^{TM}}_Y X,X)$ for
all  $Y\in {\mathcal D}_{(c)}^{\perp}={\rm span}\{Y_3\}$.
\begin{align*}
{\mathcal G}^{TM}(\nabla^{{\mathcal G}^{TM}}_{Y_3} Y_1,Y_1)&\; =
\Gamma^2_{31}{\mathcal G}^{TM}(Y_2,Y_1)+\Gamma^3_{31}{\mathcal
G}^{TM}(Y_3,Y_1)=0,\\
{\mathcal G}^{TM}(\nabla^{{\mathcal G}^{TM}}_{Y_3} Y_2,Y_2)&\; =
\Gamma^1_{32}{\mathcal G}^{TM}(Y_1,Y_2)=0\\
\end{align*}
since $\{Y_1,Y_2,Y_3\}$ is a ${\mathcal G}^{TM}$-orthogonal basis.
Thus $Y_1$ and $Y_2$ are both solutions to Hamilton-Jacobi
differential equation because of
Corollary~\ref{Corol:DecouplingVZero}.

Let us try to find more vector fields solution to Hamilton-Jacobi
differential equation. In order to do this we have to find functions
$f\in {\mathcal C}^\infty(M)$ such that $fY_1$ and $fY_2$ also
satisfy condition~\ref{eq:DecouplingCondition} in
Corollary~\ref{Corol:DecouplingVZero}. According to
Proposition~\ref{prop:Cond-Contr-fX} $f$ must satisfy
\begin{equation*} Y_3(f)=0
\; \Leftrightarrow \;\ds{-\sin \theta \frac{\partial f}{\partial
x}+\cos \theta  \frac{\partial f}{\partial y}+\frac{1}{h}\,
\frac{\partial f}{\partial \theta}=0}.
\end{equation*}
For instance, $f(x,y,\theta)=g(x-h\cos \theta, y-h\sin
\theta)$ where $g\colon \mathbb{R}^2\rightarrow \mathbb{R}$
satisfies the above partial differential equation. Hence,
all vector fields $fY_1$ and $fY_2$ are solutions to
Hamilton-Jacobi differential equation, but not necessarily
their linear combinations.

So far we have found some particular solutions of
Hamilton-Jacobi differential equation. Let us consider now
the most general section in $\Gamma(\tau_D)$, $X=\alpha_1
Y_1+\alpha_2Y_2+\alpha_3Y_3$. Let us find functions
$\alpha_1,\alpha_2,\alpha_3$ such that condition $i_X({\rm
d}^D(\flat_{{\mathcal G}^D}(X)))(Y)= 0$ for all $Y\in
{\mathcal D}_{(c)}^{\perp}$ in Theorem~\ref{maintheorem-5}
is satisfied, that is,
\begin{align*}
0=&-{\mathcal G}^D(Y_1,Y_1)(\alpha_1Y_3(\alpha_1)+ \alpha_2 \alpha_1
\Gamma^1_{32}+\alpha_3\alpha_1\Gamma^1_{33})\\&- {\mathcal
G}^D(Y_2,Y_2)(\alpha_2Y_3(\alpha_2)+\alpha_1\alpha_2\Gamma^2_{31})\\&+{\mathcal
G}^D(Y_3,Y_3)(\alpha_1Y_1(\alpha_3)+\alpha_2Y_2(\alpha_3)+\alpha_1\alpha_2\Gamma^3_{21}).
\end{align*}
It can be proved that this condition and
Hamilton-Jacobi differential equation~\eqref{eq:thm5h-j-p}
are satisified, for instance, by
\begin{eqnarray*}
 g(x-h\cos \theta, y-h\sin
\theta)Y_3, && g(x-h\cos \theta, y-h\sin
\theta)Y_2+\alpha_3 Y_3, \\
g(x-h\cos \theta, y-h\sin
\theta)Y_1, && f((x-h\cos \theta)/h)Y_2+\alpha_3 Y_3.
\end{eqnarray*}
%
%

\subsubsection{Robotic leg}\label{Sec:RoboticLeg}

We refer to~\cite[Section 7.4.1]{bullolewis} for a detailed
description of this system. The configuration manifold is
$M=\mathbb{R}^+\times \mathbb{S}^1\times \mathbb{S}^1$ with local
coordinates $(r,\theta,\psi)$. The riemannian metric for the system
is
\[{\mathcal G}^{TM}= m({\rm
d} r \otimes {\rm d}r + r^2{\rm d} \theta \otimes {\rm d}\theta)+J
{\rm d} \psi \otimes {\rm d} \psi,\] where $m$ is the mass of the
particle on the end of the extensible leg and $J$ is the moment of
inertia of the base rigid body about the pivot point.

The control vector fields that span ${\mathcal D}_{(c)}$ are
\begin{equation*}
Y_1=\frac{1}{mr^2} \frac{\partial }{\partial
\theta}-\frac{1}{J}\frac{\partial}{\partial \psi}, \quad
Y_2=\frac{1}{m}\frac{\partial}{\partial r}.
\end{equation*}
There are no constraints on the system, then as in the
previous example $D=TM$, $\lcf \cdot \; , \cdot \rcf_{TM}$
is the usual Lie bracket and $\rho_{TM}={\rm Id}_{TM}$.

The $\mathcal{G}^D$-orthogonal distribution to ${\mathcal D}_{(c)}$
is spanned by
\begin{equation*} Y_3=\frac{\partial }{\partial \psi}
+\frac{\partial}{\partial \theta}.
\end{equation*}
Note that $\{Y_1,Y_2,Y_3\}$ is a $\mathcal{G}^D$-orthogonal basis.
From now on we consider coordinates adapted to this basis. Then the
non-zero Christoffel symbols are:
\begin{equation*}
\begin{array}{lll}
\ds{\Gamma^2_{11}=-\frac{1}{mr^3}}, &
\Gamma^1_{12}=\ds{\frac{J}{mr(J+mr^2)}}, &
\Gamma^3_{12}=\ds{\frac{1}{mr(J+mr^2)}},\\
\Gamma^2_{13}=-\ds{\frac{1}{r}}, &
\Gamma^1_{21}=-\ds{\frac{J}{mr(J+mr^2)}}, &
\Gamma^3_{21}=-\ds{\frac{1}{mr(J+mr^2)}},\\
\Gamma^1_{23}=\ds{\frac{Jr}{J+mr^2}},
&\Gamma^3_{23}=\ds{\frac{r}{J+mr^2}}, &
\Gamma^2_{31}=-\ds{\frac{1}{r}}, \\
\Gamma^1_{32}=\ds{\frac{Jr}{J+mr^2}}, &
\Gamma^3_{32}=\ds{\frac{r}{J+mr^2}}, & \Gamma^2_{33}=-rm.
\end{array}
\end{equation*}
From here, it is easy to compute the following symmetric products:
\begin{equation*}
\langle Y_1\colon Y_1 \rangle_{\mathcal{G}^D}=-\ds{\frac{2}{mr^3}}Y_2, \quad \langle
Y_1\colon Y_2 \rangle_{\mathcal{G}^D}=\langle Y_2\colon Y_2 \rangle_{\mathcal{G}^D}=0.
\end{equation*}
According to Theorem~\ref{thm:MaxRed} the system is maximally
reducible. Thus all the ${\mathcal C}^\infty(M)$-linear combination
of $Y_1$ and $Y_2$ are decoupling sections and we can apply
Corollary~\ref{Corol:DecouplingVZero} to identify those decoupling
sections that are solutions to Hamilton-Jacobi differential equation:
\begin{align*}0&={\mathcal
G}^D(\nabla^{\mathcal{G}^D}_{Y_3}(\alpha_1Y_1+\alpha_2Y_2),
\alpha_1Y_1+\alpha_2Y_2)\\&={\mathcal
G}^D(Y_1,Y_1)\left(\alpha_1Y_3(\alpha_1)+\alpha_1\alpha_2
\Gamma^1_{32}\right)+{\mathcal
G}^D(Y_2,Y_2)\left(\alpha_2Y_3(\alpha_2)+\alpha_1\alpha_2
\Gamma^2_{31}\right).
\end{align*}
It can be proved that if
$\alpha_i(r,\theta,\psi)=f_i(r,\theta-\psi)$ either for $i=1$ or $i=2$
and smooth functions $f_i\colon \mathbb{R}^+\times
\mathbb{S}^1 \rightarrow \mathbb{R}$, then the section
$\alpha_1Y_1+\alpha_2Y_2$ is a solution to Hamilton-Jacobi
differential equation.

By Theorem~\ref{maintheorem-5}, we can also check that
\begin{equation*}
 f(r,\psi-\theta) Y_1 +\alpha_3 Y_3,\quad  f(r,\psi-\theta) Y_3
\end{equation*}
are solutions to Hamilton-Jacobi differential equation.

\subsubsection{Snakeboard}\label{Sec:Snakeboard}

We refer to~\cite[Section 13.4]{bullolewis} and \cite{MuYa}
for a detailed description of this system. The
configuration manifold is $M=SE(2)\times \mathbb{S}^1\times
\mathbb{S}^1$ with local coordinates $(x,y,\theta,
\psi,\phi)$. Consider the following physical parameters:
the mass $m_c$ of coupler, mass $m_r$ of rotor, mass $m_w$
of each wheel assembly, inertia $J_c$ of coupler about
center of mass, inertia $J_r$ of rotor about center of
mass, inertia $J_w$ of wheel assembly about center of mass,
distance $l$ from coupler center of mass to wheel assembly.

\begin{figure}[thpb]
      \centering
\includegraphics[width=6cm]{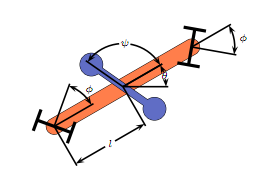}
 \label{Fig-Snakeboard}
\end{figure}

The Riemannian metric in the given coordinates is
\begin{multline*} \mathcal G=(m_c+m_r+2m_w)({\rm d}x\otimes {\rm
d}x+{\rm d}y\otimes {\rm d}y)+(J_c+J_r+2(J_w+m_wl^2)){\rm d}\theta
\otimes {\rm d}\theta \\ + J_r {\rm d}\psi \otimes {\rm d}\psi +2J_w
{\rm d}\phi\otimes {\rm d}\phi+J_r({\rm d}\theta\otimes {\rm
d}\psi+{\rm d}\psi\otimes {\rm d}\theta).
\end{multline*}
The constraints require that the wheels of the snakeboard roll
without slipping. These constraints define a nonholonomic
distribution $D$. As computed in~\cite[Lemma 13.11]{bullolewis} a
$\mathcal G$-orthogonal basis for $D$ on
$M_D=\left\{(x,y,\theta,\psi,\phi)\;| \; \phi \neq \pm
\frac{\pi}{2}\right\}$ is
\begin{align*}
X_1&=\; l\cos \phi \left(\cos \theta \frac{\partial}{\partial
x}+\sin \theta \frac{\partial}{\partial y}\right)-\sin \phi
\frac{\partial}{\partial \theta}=l\cos \phi V_1-\sin \phi
\frac{\partial}{\partial \theta},\\
X_2&=\; a(\phi)V_1-b(\phi)\frac{\partial}{\partial \theta}+
\frac{\partial}{\partial \psi},\\
X_3&=\;\frac{\partial}{\partial \phi},
\end{align*}
where
\begin{equation*}
a(\phi)=\frac{J_rl \cos \phi \sin \phi}{c_1(\phi)},\quad
b(\phi)=\frac{J_r\sin^2\phi}{c_1(\phi)}
\end{equation*}
and
\begin{equation*}
c_1(\phi)=(m_c+m_r+2m_w)l^2\cos^2\phi
+(J_c+J_r+2(J_w+m_wl^2))\sin^2\phi.
\end{equation*}
Let us define a skew-symmetric algebroid structure on the
vector bundle $\tau_D\colon D\rightarrow M_D$. If
$\{e_1,e_2, e_3\}$ is a local basis of $\Gamma(\tau_D)$,
then
\begin{equation*} \rho_D(e_1)=X_1,\quad \rho_D(e_2)=X_2, \quad \rho_D(e_3)=X_3,\label{eq:rhoSnakeboard}\end{equation*}
and
\begin{equation*}
\begin{array}{ll}
\lcf e_1, e_1\rcf_{D}&=\; \lcf e_2, e_2
\rcf_{D}=\; \lcf e_3, e_3 \rcf_{D}=0, \\
 \lcf e_1, e_2\rcf_{D}&=\;0, \\
\lcf e_1, e_3\rcf_{D}&=\;\ds{\frac{{\mathcal
G}([X_1,X_3],X_1)}{{\mathcal G}(X_1,X_1)} \, e_1+
\frac{{\mathcal G}([X_1,X_3],X_2)}{{\mathcal G}(X_2,X_2)} \, e_2},  \\
\lcf e_2, e_3\rcf_{D}&=\;\ds{\frac{{\mathcal
G}([X_2,X_3],X_1)}{{\mathcal G}(X_1,X_1)} \, e_1+ \frac{{\mathcal
G}([X_2,X_3],X_2)}{{\mathcal G}(X_2,X_2)} \, e_2}.
\end{array}\label{eq:bracketSnakeboard}
\end{equation*}
Hence the non-vanishing local structure functions for the
bracket $\lcf \cdot, \cdot \rcf_{D}$ are
\begin{equation*}{\mathcal C}^1_{13},{\mathcal C}^2_{13},{\mathcal
C}^1_{23},{\mathcal C}^2_{23}.
\label{eq:NonLocalStructureFSnakeboard}\end{equation*} The bundle
metric on the skew-symmetric algebroid is given by ${\mathcal
G}^D={\mathcal G}|_{D\times_{M_D} D}$ as explained in
Example~\ref{noholo-example}. We can construct the Levi-Civita
connection $\nabla^{{\mathcal G}^D}$ associated to the bundle metric
${\mathcal G}^D$ having in mind its properties:
\begin{eqnarray*}
 \Gamma^A_{BC}&=&\ds{\frac{1}{2}g^{AA}\left(g_{CC}{\mathcal C}^C_{AB}+g_{AA}{\mathcal C}^A_{BC}
+g_{BB}{\mathcal C}^B_{AC}\right)}\\&+&\ds{\frac{1}{2}g^{AA}\Bigg(-(\rho_D)^i_A\frac{\partial}{\partial x^i}(g_{BC}\delta^{BC})+
(\rho_D)^i_B\frac{\partial}{\partial x^i}(g_{AC}\delta^{AC})}\\&+& \ds{(\rho_D)^i_C\frac{\partial}{\partial x^i}(g_{AB}\delta^{AB})\Bigg)},
\end{eqnarray*}
for a ${\mathcal
G}^D$-orthogonal basis, where $g^{AA}=1/g_{AA}$.

 The non-vanishing Christoffel
symbols are
\begin{equation*} \begin{array}{ll}
\Gamma^1_{13}={\mathcal C}^1_{13}+\frac{1}{2}g^{11}c_1'(\phi), & \Gamma^1_{23}=\frac{1}{2}{\mathcal C}^1_{23}+\frac{1}{2}g^{11}g_{22}
{\mathcal C}^2_{13}, \\  \Gamma^1_{31}=\frac{1}{2}g^{11}c_1'(\phi), &
\Gamma^1_{32}=\frac{1}{2}g^{11}g_{22}{\mathcal C}^2_{13}-\frac{1}{2}{\mathcal C}^1_{23},\\
\Gamma^2_{13}=\frac{1}{2}g^{22}g_{11}{\mathcal C}^1_{23}+\frac{1}{2}{\mathcal C}^2_{13} , &
\Gamma^2_{23}=\frac{1}{2}g^{22}g_{22}'(\phi)+{\mathcal C}^2_{23},\\
 \Gamma^2_{31}=\frac{1}{2}g^{22}g_{11}{\mathcal C}^1_{23}-\frac{1}{2}{\mathcal C}^2_{13},&
 \Gamma^2_{32}=\frac{1}{2}g^{22}g_{22}'(\phi), \\ \Gamma^3_{11}=g^{33}g_{11}{\mathcal C}^1_{31},&
\Gamma^3_{12}=-\frac{1}{2}g^{33}(g_{22}{\mathcal C}^2_{13}+g_{11}{\mathcal C}^1_{23}), \\
\Gamma^3_{21}=-\frac{1}{2}g^{33}(g_{11}{\mathcal C}^1_{23}+g_{22}{\mathcal C}^2_{13}),& \Gamma^3_{22}=\frac{1}{2}g^{33}g_{22}{\mathcal C}^2_{32}.\end{array}
\label{eq:NonZeroChristoffSnakeboard}
\end{equation*}
The mechanical control system on the skew-symmetric algebroid is
given by $(D,{\mathcal G}^D,0,{\mathcal D}_{(c)})$, where
\begin{equation*}
{\mathcal D}_{(c)}=\left\{\ds{\frac{c_1(\phi)}{J_rc_2(\phi)}\, X_2,
\; \frac{1}{2J_w} \, X_3}\right\},
\end{equation*}
where $c_2(\phi)=(m_c+m_r+2m_\omega)l^2\cos^2 \phi + (J_c+2(J_w+m_\omega l^2))\sin^2\phi$.

The control forces are torques, one actuating the rotor and the
other one actuating the wheels. It is easy to prove that
\begin{equation*}
{\mathcal D}_{(c)}^\perp=\left\{X_1\right\}.
\end{equation*}
In order to find solutions to Hamilton-Jacobi differential equation,
let us first check if there exist any decoupling section of
$\Gamma({\mathcal D}_{(c)})$ so that we can use
Corollary~\ref{Corol:DecouplingVZero}.
\begin{align*}
\langle X_2\colon X_2 \rangle_{{\mathcal G}^D}&=\; 2\Gamma^3_{22}
X_3 \in {\mathcal D}_{(c)},\\
\langle X_3\colon X_3 \rangle_{{\mathcal G}^D}&=0 \in {\mathcal
D}_{(c)}.
\end{align*}
Hence both control sections are decoupling. As in the previous
example the mechanical control system is not maximally reducible to
a driftless system because
\begin{equation*}
\langle X_2\colon X_3 \rangle_{{\mathcal G}^D}=\;
(\Gamma^1_{23}+\Gamma^1_{32})X_1+ (\Gamma^2_{23}+\Gamma^2_{32})X_2
 \notin {\mathcal D}_{(c)}.
\end{equation*}

Let us check if the decoupling sections satisfy
condition~\ref{eq:DecouplingCondition} in
Corollary~\ref{Corol:DecouplingVZero}.
\begin{align*}
{\mathcal G}^D(\nabla^{{\mathcal G}^D}_{X_1}X_2,X_2)&=\;
\Gamma^3_{12}{\mathcal G}^D(X_3,X_2)=0,\\
{\mathcal G}^D(\nabla^{{\mathcal G}^D}_{X_1}X_3,X_3)&=\;
\Gamma^1_{13}{\mathcal G}^D(X_1,X_3)+\Gamma^2_{13}{\mathcal
G}^D(X_2,X_3)=0
\end{align*}
because $\{X_1,X_2,X_3\}$ is ${\mathcal G}^D$-orthogonal basis. Thus
$X_2$ and $X_3$ are both solutions to Hamilton-Jacobi differential
equation because of Corollary~\ref{Corol:DecouplingVZero}. Let us see if these solutions can generate
more decoupling sections being solution to  Hamilton-Jacobi differential
equation. As proved in Proposition~\ref{prop:Cond-Contr-fX}, $fX_2$ and $fX_3$ with $f\in {\mathcal C}^\infty(M)$ are solutions to
Hamilton-Jacobi differential
equation if and only if
\begin{equation*}
X_1(f)=0 \; \Leftrightarrow \; \ds{l\cos \phi \left(\cos \theta
\frac{\partial f}{\partial x}+\sin \theta \frac{\partial f}{\partial
y}\right)-\sin \phi \frac{\partial f}{\partial \theta}}=0.
\end{equation*}
If we take any function \begin{equation*}f(x,y,\theta,\psi,\phi)=g\left(\frac{2x\sin \phi}{l}+2 \cos \phi \sin \theta,
y-l\frac{\cos \phi}{\sin \phi} \cos \theta, \psi,\phi\right),\end{equation*} where $g\colon \mathbb{R}^4 \rightarrow \mathbb{R}$
is a solution to the
partial differential equation $X_1(f)=0$, then $fX_2$ and
$fX_3$ are also solutions to Hamilton-Jacobi differential
equation.

Consider now a general section $\alpha_1X_1+\alpha_2X_2+\alpha_3X_3$ in $\Gamma(\tau_D)$. Condition $i_X{\rm
d}^D(\flat_{{\mathcal G}^D}(X))(Y)= 0$ for all $Y\in {\mathcal
D}_{(c)}^{\perp}$ in Theorem~\ref{maintheorem-5} becomes
\begin{align*} {\mathcal
G}^D(X_1,X_1)&(\alpha_2X_2(\alpha_1)+\alpha_3X_3(\alpha_1)+\alpha_3
\alpha_1
\Gamma^1_{31}+\alpha_2\alpha_3\Gamma^1_{23}+\alpha_3\alpha_2\Gamma^1_{32})\\&-
{\mathcal
G}^D(X_2,X_2)(\alpha_2X_1(\alpha_2)+\alpha_3\alpha_2\Gamma^2_{13})\\
&-{\mathcal
G}^D(X_3,X_3)(\alpha_3X_1(\alpha_3)+\alpha_1\alpha_3\Gamma^3_{11}+\alpha_3\alpha_2\Gamma^3_{12})=0.
\end{align*}
For instance, for any $\alpha_1 \in {\mathcal C}^\infty(M)$ $\alpha_1X_1$ satisfies this condition. It will be a solution to
Hamilton-Jacobi differential equation if it
fulfills~\eqref{eq:thm5h-j-p}, that is,
\begin{align*}
0=&{\mathcal G}^D(\nabla^{{\mathcal
G}^D}_{\alpha_1X_1}\alpha_1X_1,X_1)=\alpha_1X_1(\alpha_1){\mathcal
G}^D(X_1,X_1)+\alpha_1^2\Gamma^3_{11}{\mathcal
G}^D(X_3,X_1)\\=&\alpha_1X_1(\alpha_1){\mathcal G}^D(X_1,X_1).
\end{align*}
Equivalently, $X_1(\alpha_1)=0$. Thus $\alpha_1X_1$ is also a solution to Hamilton-Jacobi differential
equation for any \begin{equation*}\alpha_1(x,y,\theta,\psi,\phi)=g\left(\frac{2x\sin \phi}{l}+2 \cos \phi \sin \theta,
y-l\frac{\cos \phi}{\sin \phi} \cos \theta, \psi,\phi\right)\end{equation*} where $g\colon \mathbb{R}^4 \rightarrow \mathbb{R}$.


\section{Future work}

The results in this paper extend the notion of decoupling sections for mechanical systems with nonzero potential. A decoupling
section for those mechanical systems is a section that satisfies the assumption and condition (ii) in Theorem~\ref{maintheorem-5}.
The future research line consists of taking advatange of this geometric description of decoupling sections to do motion planning
for mechanical systems with nonzero potential. One of the key points to succeed in motion planning is that not any
reparametrization of decoupling sections in the sense of Theorem~\ref{maintheorem-5} is again decoupling in the sense of
 Theorem~\ref{maintheorem-5}. In order to define the suitable notion of reparametrization
Proposition~\ref{prop:Cond-Contr-fX} seems to be useful.

 \end{document}